\documentclass[journal,10pt]{IEEEtran}

\usepackage{cite}
\usepackage{graphicx,color,epsfig,rotating}
\usepackage{amsfonts,amsmath,amssymb,bbm}
\usepackage{setspace}
\usepackage{algorithm}
\usepackage[algo2e]{algorithm2e} 
\usepackage{subcaption}
\usepackage{mdwtab}
\usepackage{placeins}
\usepackage{psfrag, graphicx}
\usepackage[latin1]{inputenc}
\usepackage{multirow}
\usepackage{stfloats}
\usepackage{tabularx} 
\usepackage{booktabs} 
\usepackage{url}
\usepackage{bm}
\usepackage{bbding}
\usepackage{pifont}
\usepackage{wasysym}
\usepackage{pifont}
\usepackage{lipsum}
\usepackage{wasysym}

\long\def\comment#1{}

\newtheorem{example}{Example}
\newtheorem{theorem}{Theorem}
\newtheorem{lemma}{Lemma}
\newtheorem{claim}{Claim}

\newtheorem{remark}{Remark}

\makeatletter
\newcommand{\subalign}[1]{%
  \vcenter{%
    \Let@ \restore@math@cr \default@tag
    \baselineskip\fontdimen10 \scriptfont\tw@
    \advance\baselineskip\fontdimen12 \scriptfont\tw@
    \lineskip\thr@@\fontdimen8 \scriptfont\thr@@
    \lineskiplimit\lineskip
    \ialign{\hfil$\m@th\scriptstyle##$&$\m@th\scriptstyle{}##$\crcr
      #1\crcr
    }%
  }
}
\newcommand{\thickhline}{%
    \noalign {\ifnum 0=`}\fi \hrule height 1pt
    \futurelet \reserved@a \@xhline
}
\newcolumntype{"}{@{\hskip\tabcolsep\vrule width 1pt\hskip\tabcolsep}}

\makeatother

\newfont{\bbb}{msbm10 scaled 700}

\newfont{\bb}{msbm10 scaled 1100}

\let \bksl\backslash

\newcommand{\ellg}[1]{\ell^{(#1)}}
\newcommand{\gammag}[1]{\gamma_{#1}}
\newcommand{\alphabmg}[1]{\bm{\alpha}^{(#1)}}

\newcommand{\Gcg}{\Gc_g}
\newcommand{\FmT}{\Fm^{\rm T}}

\newcommand{\eqvlt}{equivalent\xspace}

\newcommand{\orig}{original\xspace}

\newcommand{\compos}{components\xspace}

\newcommand{\randvar}{random variable\xspace}
\newcommand{\randvars}{random variables\xspace}

\newcommand{\User}[1]{User~$#1$}
\newcommand{\user}[1]{user~$#1$}

\usepackage{xcolor} 

\newcommand{\ingen}{in general\xspace}

\newcommand{\Insum}{In summary\xspace}

\newcommand{\folwg}{following\xspace}

\newcommand{\App}{Appendix\xspace}

\newcommand{\sbst}{subset\xspace}

\newcommand{\sbsts}{subsets\xspace}

\newcommand{\decrs}{decrease\xspace}
\newcommand{\decrss}{decreases\xspace}

\newcommand{\decrsg}{decreasing\xspace}
\newcommand{\incrs}{increase\xspace}

\newcommand{\incrss}{increases\xspace}

\newcommand{\ppsd}{proposed\xspace}

\newcommand{\stochaly}{stochastically\xspace}
\newcommand{\stodom}{stochastic dominance\xspace}
\newcommand{\hygeo}{hypergeometric\xspace}

\newcommand{\selec}{selection\xspace}

\newcommand{\selecs}{selections\xspace}

\newcommand{\intmdt}{intermediate\xspace}

\newcommand{\relato}{relative to\xspace}
\newcommand{\Sec}{Section\xspace}

\newcommand{\Movr}{Moreover\xspace}

\newcommand{\deli}{delivery\xspace}
\newcommand{\Deli}{Delivery\xspace}
\newcommand{\suth}{such that\xspace}
\newcommand{\soth}{so that\xspace}

\newcommand{\uneq}{unequal\xspace}

\newcommand{\ropt}{rate-optimal\xspace}

\newcommand{\satisfs}{satisfies\xspace}

\newcommand{\possb}{possible\xspace}

\newcommand{\sbtp}{subtype\xspace}
\newcommand{\sbtps}{subtypes\xspace}

\newcommand{\cpgrp}{coupled group\xspace}
\newcommand{\cpgrps}{coupled groups\xspace}

\newcommand{\memo}{memory\xspace}

\newcommand{\mgrpts}{\mgrp types\xspace}
\newcommand{\sbft}{subfile type\xspace}

\newcommand{\sbfts}{subfile types\xspace}

\newcommand{\accrdto}{according to\xspace}
\newcommand{\Accrdto}{According to\xspace}

\newcommand{\Rmk}{Remark\xspace}

\newcommand{\uset}{unique set\xspace}
\newcommand{\usets}{unique sets\xspace}
\newcommand{\fsf}{further splitting factor\xspace}

\newcommand{\sect}{section\xspace}

\newcommand{\Fex}{For example\xspace}

\newcommand{\ex}{example\xspace}

\newcommand{\strtgs}{strategies\xspace}

\newcommand{\ptf}{PT framework\xspace}

\newcommand{\illutn}{illustration\xspace}

\newcommand{\PTd}{PT design\xspace}

\newcommand{\PTD}{PT Design\xspace}

\newcommand{\achvs}{achieves\xspace}
\newcommand{\achvd}{achieved\xspace}
\newcommand{\achv}{achieve\xspace}

\newcommand{\achvb}{achievable\xspace}

\newcommand{\Itf}{In the following\xspace}

\newcommand{\seq}{sequence\xspace}
\newcommand{\seqs}{sequences\xspace}

\newcommand{\indeptly}{independently\xspace}

\newcommand{\grpg}{grouping\xspace}
\newcommand{\grpgs}{groupings\xspace}

\newcommand{\schms}{schemes\xspace}
\newcommand{\schm}{scheme\xspace}

\newcommand{\Dtm}{Determine\xspace}

\newcommand{\dof}{DoF\xspace}

\newcommand{\corrspdg}{corresponding\xspace}

\newcommand{\bcuz}{because\xspace}

\newcommand{\Aar}{As a result\xspace}
\newcommand{\sbfs}{subfiles\xspace}
\newcommand{\sbf}{subfile\xspace}
\newcommand{\Sbf}{Subfile\xspace}

\newcommand{\perms}{permutations\xspace}
\newcommand{\grps}{groups\xspace}
\newcommand{\grp}{group\xspace}

\newcommand{\frmwk}{framework\xspace}
\newcommand{\Frmwk}{Framework\xspace}
\newcommand{\Pkt}{Packet\xspace}
\newcommand{\Pkts}{Packets\xspace}
\newcommand{\pkt}{packet\xspace}
\newcommand{\pkts}{packets\xspace}

\newcommand{\req}{requirement\xspace}

\newcommand{\ie}{i.e.\xspace}
\newcommand{\agg}{aggregate\xspace}
\newcommand{\mtcst}{multicast\xspace}

\newcommand{\CoCa}{Coded Caching\xspace}

\newcommand{\coca}{coded caching\xspace}

\newcommand{\af}{as follows\xspace}

\newcommand{\Wlog}{Without loss of generality\xspace}
\newcommand{\Msp}{More specifically\xspace}

\newcommand{\Specly}{Specifically\xspace}

\newcommand{\resp}{respectively\xspace}

\newcommand{\diffce}{difference\xspace}

\newcommand{\diff}{different\xspace}

\newcommand{\jsch}{JCM scheme\xspace}
\newcommand{\msch}{MAN scheme\xspace}
\newcommand{\jcm}{JCM\xspace}
\newcommand{\man}{MAN\xspace}

\newcommand{\Inadd}{In addition\xspace}

\newcommand{\Ip}{In particular\xspace}

\newcommand{\tx}{transmitter\xspace}
\newcommand{\txs}{transmitters\xspace}

\newcommand{\txn}{transmission\xspace}

\newcommand{\Tx}{Transmitter\xspace}

\newcommand{\ssgain}{subfile saving gain\xspace}
\newcommand{\fssgain}{further splitting saving gain\xspace}

\newcommand{\mgrp}{multicast group\xspace}
\newcommand{\mgrps}{multicast groups\xspace}

\newcommand{\sym}{symmetric\xspace}

\newcommand{\asympt}{asymptotic\xspace}

\newcommand{\Asympt}{Asymptotic\xspace}

\newcommand{\cmsg}{coded message\xspace}
\newcommand{\cmsgs}{coded messages\xspace}
\newcommand{\Cmsg}{Coded message\xspace}

\newcommand{\comm}{communication\xspace}

\newcommand{\etal}{\emph{et al.}\xspace}

\newcommand{\FinLen}{Finite-Length\xspace}
\newcommand{\Sbp}{Subpacketization\xspace}
\newcommand{\sbp}{subpacketization\xspace}

\newcommand{\Homo}{Homogeneous\xspace}
\newcommand{\homo}{homogeneous\xspace}
\newcommand{\Het}{Heterogeneous\xspace}
\newcommand{\het}{heterogeneous\xspace}

\newcommand{\sota}{state-of-the-art\xspace}

\newcommand{\Thm}{Theorem\xspace}

\newcommand{\red}{reduction\xspace}

\newcommand{\params}{parameters\xspace}

\newcommand{\Thf}{Therefore\xspace}

\newcommand{\fs}{further splitting\xspace}

\newcommand{\alphaglobal}{\mbox{$\bm{\alpha}^{\rm  global}$}}

\newcommand{\alphabm}{\mbox{$\bm{\alpha}$}}
\newcommand{\alphahat}{\mbox{$\widehat{\alpha}$}}
\newcommand{\ith}{\mbox{$i^{\rm th}$}}
\newcommand{\gth}{\mbox{$g^{\rm th}$}}
\newcommand{\jth}{\mbox{$j^{\rm th}$}}
\newcommand{\kth}{\mbox{$k^{\rm th}$}}

\newcommand{\Fpt}{\mbox{$F_{\rm PT}$}}
\newcommand{\Fjcm}{\mbox{$F_{\rm JCM}$}}

\newcommand{\kmax}{\mbox{$k_{\rm max}$}}

\newcommand{\Fman}{\mbox{$F_{\rm MAN}$}}
\newcommand{\Rjcm}{\mbox{$R_{\rm JCM}$}}
\newcommand{\Rman}{\mbox{$R_{\rm MAN}$}}
\newcommand{\nd}{\mbox{$N_{\rm d}$}} 
\newcommand{\hatu}{\mbox{$\widehat{\Uc}$}}
\newcommand{\Utx}{\mbox{$\Uc_{\rm Tx}$}}
\newcommand{\Dtx}{\mbox{$\Dc_{\rm Tx}$}}

\newcommand{\av}{{\bf a}}
\newcommand{\bv}{{\bf b}}

\newcommand{\dv}{{\bf d}}

\newcommand{\qv}{{\bf q}}

\newcommand{\sv}{{\bf s}}

\newcommand{\vv}{{\bf v}}

\newcommand{\Am}{{\bf A}}
\newcommand{\Bm}{{\bf B}}

\newcommand{\Fm}{{\bf F}}

\newcommand{\Ac}{{\cal A}}
\newcommand{\Bc}{{\cal B}}

\newcommand{\Dc}{{\cal D}}

\newcommand{\Gc}{{\cal G}}

\newcommand{\Ic}{{\cal I}}

\newcommand{\Pc}{{\cal P}}
\newcommand{\Qc}{{\cal Q}}

\newcommand{\Sc}{{\cal S}}
\newcommand{\Tc}{{\cal T}}
\newcommand{\Uc}{{\cal U}}

\newcommand{\Vc}{{\cal V}}

\newcommand{\constrtn}{construction\xspace}

\newcommand{\tp}{type\xspace}
\newcommand{\tps}{types\xspace}

\newcommand{\dd}{D2D\xspace}

\newcommand{\invos}{involves\xspace}
\newcommand{\invod}{involved\xspace}

\newcommand{\ists}{involved subfile type set\xspace}

\newcommand{\MemoConst}{Memory Constraint\xspace}

\newcommand{\memoconst}{memory constraint\xspace}
\newcommand{\Memoconst}{Memory constraint\xspace}

\newcommand{\gfsf}{global FS factor\xspace}
\newcommand{\gfsv}{global FS vector\xspace}

\newcommand{\gfsvs}{global FS vectors\xspace}

\newcommand{\lfsfs}{local FS factors\xspace}

\newcommand{\lfsvs}{local FS vectors\xspace}

\newcommand{\vlcm}{vector least common multiple\xspace}

\newcommand{\simully}{simultaneously\xspace}

\newcommand{\Deltam}{\hbox{\boldmath$\Delta$}}

\newcommand{\eqdef}{\stackrel{\Delta}{=}}

\newcommand{\be}{\begin{equation}}
\newcommand{\ee}{\end{equation}}
\newcommand{\bea}{\begin{eqnarray}}
\newcommand{\eea}{\end{eqnarray}}

\def \type{\ttt{type}}

\let \rig\right
\let\lef \left 

\let\ttt\texttt
\let\trm\textrm
\let\tbf\textbf
\let\tit\textit
\let\mc\mathcal

\let\mbb\mathbb

\let \mrm \mathrm

\SetKwInput{KwData}{Input}
\SetKwInput{KwResult}{Output}
\allowdisplaybreaks 
\graphicspath{{./images/}}

\begin{document}

\title{Reducing Subpacketization in  Device-to-Device Coded Caching via Heterogeneous File Splitting}

\author{
Xiang Zhang,~\IEEEmembership{Member,~IEEE},
Giuseppe Caire,~\IEEEmembership{Fellow,~IEEE},
and Mingyue Ji,~\IEEEmembership{Member,~IEEE}

\thanks{Part of this work was presented in the conference paper~\cite{9174215}.}
\thanks{X. Zhang and G. Caire are with the Department of Electrical Engineering and Computer Science, Technical University of Berlin, 10623 Berlin, Germany (e-mail: \{xiang.zhang, caire\}@tu-berlin.de).
M. Ji is with the Department of Electrical and Computer Engineering, University of Florida, Gainesville, FL 32611, USA
(e-mail: mingyueji@ufl.edu).
This work was done while X. Zhang and M. Ji were with the Department of Electrical and Computer Engineering, University of Utah, Salt Lake City, UT 84112, USA. \tit{(Corresponding author: Xiang Zhang.)}
}
}

\maketitle
\IEEEpeerreviewmaketitle

\begin{abstract}
The packet type (PT)-based framework~\cite{zhang2026taming} provides a systematic and principled approach to designing device-to-device (D2D) coded caching schemes that  achieve reduced \sbp while preserving the optimal communication rate. 
However, existing PT designs rely exclusively on homogeneous \sbp, where all packets have an identical size regardless of their types. This restriction limits the achievable \sbp reduction in certain parameter regimes.
In this paper, we extend the PT framework to \emph{heterogeneous} \sbp, allowing packet sizes to vary across types under a refined type classification. The packet sizes, in conjunction with user grouping and multicast transmitter selection, are jointly optimized to minimize the overall \sbp level while preserving the optimal rate. 
Based on the heterogeneous PT framework, we construct a new class of D2D coded caching schemes 
for $(K, KM/N)=(2q+1, 2r)$ with $q,r \in  \mathbb{N}_+$, where $K,N$ and $M$ denote the number of users, files and cache memory size, respectively.
The proposed construction  achieves 
a constant-factor reduction in \sbp compared to  the Ji-Caire-Molisch (JCM) caching scheme~\cite{ji2016fundamental} and complements existing PT designs that are not applicable in this parameter regime.
\end{abstract}

\begin{IEEEkeywords}
Coded caching, device-to-device, \sbp, \pkt type, heterogeneous
\end{IEEEkeywords}

\section{Introduction}
\label{section:intro}


Device-to-device (D2D) coded caching~\cite{ji2016fundamental} extends the shared-link model of Maddah-Ali and Niesen~\cite{maddah2014fundamental} (referred to as the \tit{MAN \schm}) to serverless networks, where users take turns transmitting to satisfy each other's demands.
Let $K,N$ and $M$ be the number of users, files and per-user cache memory size, \resp, and $t \eqdef  \frac{KM}{N}$  the (normalized) \agg cache size.
The well-known Ji-Caire-Molisch (JCM) caching scheme~\cite{ji2016fundamental} \achvs the \comm rate 
$\Rjcm\eqdef \frac{K(1-M/N)}{t}  $  with \sbp level (\ie, the number  of smaller packets  each file is split  into) $\Fjcm \eqdef t\binom{K}{t}$.
Compared to the MAN \schm which \achvs $\Rman\eqdef\frac{K(1-M/N)}{t+1}$ and $\Fman\eqdef \binom{K}{t}$, the \jsch has higher  \comm rate (at the same  per-user \memo size $M/N$) and \sbp: (i) $\Rjcm$ is larger than $\Rman$ \bcuz a smaller global caching gain $t$ is realized in the \jsch than $t+1$ in the \msch; (ii) $\Fjcm$ is $t$ times larger than $\Fman$  due to an additional layer of subfile splitting introduced to enable \tit{\sym} packet exchange~\cite[Definition 2]{zhang2026taming} during the multicast delivery phase, thereby achieving the optimal global caching gain $t$. 
\Msp, the
\jsch employs a two-layer  \sbp structure \af:
\begin{subequations}
\label{eq:jcm 2 layers,intro}
\begin{align}
\mrm{[First\; layer]}\;\; 
& W_n  = \left\{ W_{n, \Tc}\right\}_{\Tc  \in \binom{[K]}{t}}, \;\forall  n \label{eq:first layer, jcm spb,intro}\\
\mrm{[Second\;layer]}\;\;
&  W_{n, \Tc}  = \big \{ W_{ n, \Tc}^{(i)}    \big\}_{i\in [t]}, \;  \forall n, \Tc    \label{eq:2nd layer, jcm spb,intro}
\end{align}
\end{subequations}
\tit{1) File to subfiles (\ie, \man \Sbp):}
Each file is split into $\binom{K}{t}$ equal-sized \sbfs, each indexed by a $t$-subset of $[K]$ as shown in \eqref{eq:first layer, jcm spb,intro}.
In the cache placement phase,  
\user{k} stores all \sbfs where $k\in \Tc$, \ie,  
$Z_k=\{ W_{n, \Tc}: \forall  \Tc \ni k, \forall n \in [N]\}$. 
\tit{2) \Sbf to \pkts:} Each \sbf is further  split into $t$ equal-sized \pkts as shown in \eqref{eq:2nd layer, jcm spb,intro}.
\Aar, each file $W_n$ is split into $\Fjcm=t \binom{K}{t}$ \pkts, 
\be
\label{eq:jcm splitting,intro}
\mrm{[JCM\;splitting]}\;\;
W_n=\big\{W_{n, \Tc}^{(i)}\big\}_{\Tc \in  \binom{[K]}{t}, i \in [t]}. 
\ee

\subsection{\FinLen D2D \CoCa}
\label{subsec:fin len d2d,intro}
The large \sbp $\Fjcm=t\binom{K}{t}$ of JCM is a major obstacle to practical implementation, since it requires each file to be split into a prohibitively (\ie, exponential in $K$) large number of packets even for moderate  $K$.
As a result, reducing subpacketization has become a central problem in finite-length D2D coded caching~\cite{wang2017placement,chittoor2019low,8620232,9913463,9477627,woolsey2020d2d,wu2023coded,wang2025coded,nt2025d2d,rashid2026optimal,8437323,9448271,9103948}. 
Several constructions based on projective geometry~\cite{chittoor2019low,9477627}, D2D placement delivery arrays (DPDAs)\cite{wang2017placement,8620232,9913463,nt2025d2d,wang2025coded}, hypercube structures\cite{woolsey2020d2d,9448271}, and related combinatorial designs\cite{9103948} achieve reduced \sbp, often substantially below $\Fjcm$.
However, the \sbp \red  \achvd in 
these schemes comes inevitably at the \tit{cost of higher \comm rates} than the \jsch,  thus sacrificing rate optimality.
The only rate-optimal result established by Wang \etal\cite{wang2017placement,8620232} applies to a very limited set of \memo points
$ M/N \in\left\{\frac{1}{K},\frac{2}{K},\frac{K-2}{K},\frac{K-1}{K}\right\}$, and thus lacks generality.
Therefore, despite a rich literature on low-\sbp D2D caching, a general design methodology that simultaneously preserves the optimal rate and systematically reduces \sbp remains elusive.

\subsection{The Packet Type-based \Frmwk}
\label{subsec:pt framework,intro}
A significant step toward this goal is the packet type (PT)-based framework\cite{zhang2026taming}, which provides a \tit{systematic} way to design rate-optimal D2D coded caching schemes with reduced \sbp.
The key idea is to impose a  grouping on the user set and classify subfiles, packets, and multicast groups into types according to the induced geometric structure. This viewpoint reveals that the symmetric subpacketization and symmetric packet exchange used in the JCM scheme are not always necessary for achieving the optimal rate. In particular, the PT framework identifies two distinct sources of subpacketization reduction: 1) \tit{subfile saving gain}, obtained by excluding redundant subfile types, and 2) \tit{further splitting saving gain}, obtained by assigning fewer than $t$
packets to certain surviving subfile types through carefully coordinated transmitter selection. 
The joint  effect  of both \red gains leads to an improved \sbp over the \jsch in a wide range of parameter regimes while preserving  the same rate as $\Rjcm$. {\Ip,  three general classes of \dd \coca \schms  were constructed under the PT \frmwk:
(i) The first construction\cite[\Thm  1]{zhang2026taming} applies  to both even $K$ and even $t$ where $t\ge K/2$, and \achvs an order-wise (in $K$) \red over $\Fjcm$;
(ii) The second construction\cite[\Thm  2]{zhang2026taming} applies to both even $K$ and even $t$ and guarantees a more-than-half constant-factor \red;
(iii) The third construction\cite[\Thm 3]{zhang2026taming} applies to \params satisfying $K=mq$ where $m,q\ge t+1$, and \achvs  a constant-factor \red through  \sbf saving.}
\Aar, the PT framework revealed a sharper rate-\sbp trade-off boundary, that is, the \sbp of the \jsch can be reduced \tit{without} sacrificing the optimal rate.

However, the existing PT framework is restricted to \tit{homogeneous} \sbp, namely, all packets are required to have the same size regardless of their types.
This restriction is structural rather than cosmetic. Under unequal user grouping, the memory constraint of  the users depends not only on how many packets of each type are cached, but also on their \tit{sizes}. 
In certain parameter regimes, especially those involving odd 
$K$, homogeneous packet sizes make it impossible to balance the cache memories across users while retaining the desired packet reduction.
As a consequence, although the prior PT framework already covers several important regimes, \tit{it does not provide a rate-optimal construction for general  odd 
$K$}.
\Msp, although the third  construction in\cite{zhang2026taming} applies to  \tit{some} odd values of $K$ of the form  $K=mq$, where $m,q \ge t+1$, its scope is quite limited. 
Moreover, it exploits  only the  subfile saving gain and  does not explore  the \fssgain,  leaving room for further optimization.

\subsection{Summary of Contributions}
\label{subsec:summary of contribution,intro}

This work extends the PT framework from homogeneous to \tit{heterogeneous} subpacketization. By allowing different refined types to have different packet sizes, the proposed design introduces the flexibility needed to satisfy the memory constraint under unequal user grouping, thereby filling a gap left by existing homogeneous-subpacketization PT designs. Moreover, heterogeneous packet sizing is jointly designed with user grouping and transmitter selection, and thus becomes a key enabler of designing new low-subpacketization, rate-optimal \dd  \coca \schms.

The main contributions of this paper are  twofold:

1)  We extend the PT-based \frmwk~\cite{zhang2026taming} to the {\het} \sbp setting, allowing \pkts to have  type-dependent sizes
under a refined \tp  classification via \tit{\sbtps}. 
\Specly, each subfile type is further divided into two subtypes grouped into \tit{coupled groups}, and packets within the same coupled group share a common size while different groups may have distinct sizes. 
The proposed \frmwk jointly optimizes user \grpg, multicast transmitter selections with \diff \cpgrps, and
heterogeneous packet sizes are chosen to satisfy the users' cache   memory constraints while preserving the optimal \comm rate. 
\Ip, the \tx \selec in each \cpgrp yields an  {intermediate} \gfsv via the vector LCM operation,  and the sum of all \intmdt \gfsvs gives the  {\agg} \gfsv that determines the  final \sbp.
\Ip, the PT \sbp is an inner product between the \agg \gfsv  and  the subfile-count vector, enabling explicit characterization of \tit{subfile saving} (via exclusion of certain \sbf types) and \tit{further splitting saving} (via reduced FS factors).

2) Building on the \het PT framework, we construct a new class of \ropt D2D coded caching schemes for odd $ K$ and even $t$,
 achieving  a constant-factor \sbp reduction over the \jsch~\cite{ji2016fundamental}. 
The \ppsd design 
adopts an  unequal user \grpg $(\frac{K+1}{2}, \frac{K-1}{2})$ and
coordinated \tx \selecs across two \cpgrps.
The resulting intermediate FS vectors combine to form  an  \agg global FS vector that increases linearly with unit slope from zero and is capped at $t$. The packet sizes are then determined to satisfy the memory constraint, and the resulting \sbp ratio is characterized, including its asymptotic scaling in $K$ and $t$.
The proposed construction fills the gap in the \homo-\sbp PT design in~\cite{zhang2026taming}, where a general design for odd $ K$ is lacking.  It can thus be viewed as a complementary extension of~\cite{zhang2026taming}.

\emph{Organization.}
The remainder of  this paper is organized \af.
Section \ref{sec:problem setup and pt overview} describes the problem setup and 
provides an overview of the PT \frmwk.
Section  \ref{sec:main result} gives the main result and its interpretations. 
Section \ref{sec:het sbp pt,proposed frmwk} presents the proposed \het-\sbp \frmwk, and
Section \ref{sec:proposed construction} details the proposed construction. Section \ref{sec: discussion} provides several insightful discussions, and \Sec \ref{sec:conclusion} concludes the paper.

\emph{Notation.} 
$[m:n]\eqdef\{m,m+1,\cdots,n\}$, $[n]\eqdef [1:n]$.  
Calligraphic letters $\Ac,\Bc,\cdots$ denote sets. $\Ac\backslash \Bc\eqdef \{x\in \Ac:x\notin \Bc\}$, and $\binom{\Ac}{n}\eqdef \{ \Sc \subseteq \Ac: |\Sc| = n   \}$.  
Bold letters $\Am,\Bm, \av, \bv, \cdots $  denote matrices and vectors.
$\underline{a}_{n}\eqdef (a,\cdots,a)$ with $n$ identical entries. 
For $\av=(a_i)_{i=1}^n,\bv=(b_i)_{i=1}^n$, 
 $\av \preceq \bv$ means $a_i \le b_i, \forall i$.
For \randvars $X,Y$, $X\prec_{\rm st}$ (resp. $\preceq_{\rm st}$) denotes  strict (resp. non-strict) \stodom. 
$\oplus$ denotes the bit-wise XOR.
For nonnegative functions $f,g$, $f(n)=O(g(n))$ if $\exists C>0, n_0$ such that $f(n)\le Cg(n),\forall n\ge n_0 $; 
$f(n)=\Theta(g(n))$ if $\exists C_1,C_2>0,n_0$ such that $C_1\le f(n)/g(n)\le C_2,\forall n\ge n_0$.

\section{Problem  Setup and Overview of PT \Frmwk}
\label{sec:problem setup and pt overview}

\subsection{Problem Setup}
Consider a \dd \coca system with $N$ files $W_1, \cdots, W_N$, each of $L$ bits, and $K$ users, each equipped with a cache \memo $Z_k$ of $ML$ bits. 
In the placement phase, each user prestores uncoded file bits. 
After the users' demands 
$\dv \eqdef (d_1,\cdots, d_K)$ are revealed, 
delivery is performed over multicast groups of size $t+1$,  whose users take turns broadcasting coded messages to satisfy one another's demands.
To achieve the optimal delivery rate, each file must be partitioned into $F$ packets so as to realize the global caching gain $t$ (also referred to as {degrees-of-freedom} (DoF) as each transmitted bit is \simully useful to $t$ users).
Under uncoded placement and one-shot delivery, the \jcm rate 
$\Rjcm=\frac{K(1-M/N)}{t}$ is optimal for $N \ge  K$\cite{yapar2019optimality}.
The goal of this paper is to minimize the required subpacketization $F$ while preserving the optimal rate $\Rjcm$.

\subsection{Overview  of PT \Frmwk}
\label{subsec:overview of pt,proposed frmwk}
The key \compos of the \ptf\cite{zhang2026taming} include user grouping, subfile and packet types, multicast group types, \tx selection, 
local and global further splitting (FS) factors, the \vlcm (LCM) operation that merges \lfsfs into globally compatible ones, and the \memoconst (MC) verification.

\tit{1) User grouping:} The user set $[K]$ is partitioned into $m$ disjoint groups $\Qc_1, \cdots, \Qc_m$. A user grouping $\qv\eqdef  (q_1, \cdots, q_m)$, where $q_1\ge \cdots \ge  q_m>0$, is defined as the ordered representation of the sizes of the groups. $\qv$ is called an equal \grpg if $q_1=\cdots=q_m$, and \uneq  otherwise.
Let $N_{\rm d}(\qv) \le  m$ denote the number of distinct values in $\qv$, then \grps can be aggregated into \tit{unique sets}, representing unions of  user \grps with identical sizes.
Let $\hatu_i$ denote the $\ith$ \uset.

\tit{2) \Sbf, \pkt, and \mgrp types:} A \sbf $W_{n,\Tc}$ is a piece of $W_n$ that is exclusively stored by the $t$-user \sbst $\Tc$. A \mgrp $\Sc$ is a set of $t+1$ users. 
Given user \grpg $\qv$ with \grps 
$\Qc_1, \cdots, \Qc_m$, 
the \tp of $\Tc$ and (resp. $\Sc$) is defined as the ordered representation of the sizes of  the projections of $\Tc$ (resp. $\Sc$) onto $(\Qc_1, \cdots, \Qc_m)$.
The \tps of $\Tc$ and $\Sc$ are denoted by $\vv\eqdef (v_1, \cdots, v_m )$ and 
 $\sv\eqdef (s_1, \cdots, s_m )$, \resp, 
where $v_1 \ge \cdots \ge v_m \ge 0$,  
$s_1 \ge \cdots \ge s_m \ge 0$. 
The \sbf $W_{n, \Tc}$ and its support set $\Tc$ has the same \tp.
Given $\sv$, the \tit{\ists} $\Ic$ is defined as the set of \sbf \tps participating in the \txn within \tp-$\sv$ \mgrps.
Similarly, let  $\nd(\sv
)$ denote the number of distinct nonzero values in $\sv$.

\tit{3) \Tx \selec and local FS factors:}
The \fs (FS) splitting factor $\alpha(\vv) \in \mbb{Z}_+  $ for  \tp $\vv$ denotes the number of smaller \pkts into which each  \tp-$\vv$  \sbf is divided to ensure the optimal \dof of $t$. Given a \tp-$\sv$ \mgrp  $\Sc= \cup_{i\in  N_{\rm d}(\sv)} {\widehat{\Sc}_i}$, where $\widehat{\Sc}_i$ is the $\ith$ \uset, a subset of users $\Utx \subseteq \Sc$---which are restricted to be unions of \usets---are chosen as \txs, \ie,
$\Utx= \cup_{i\in \Dtx} \widehat{\Sc}_i$, $\Dtx  \subseteq [\nd(\sv)]$. 
The  \invod  \sbf \tps are $\Ic=\{\vv_i \}_{i=1 }^{N_{\rm d}(\sv) }   $, where $\vv_i$ denotes  the \tp of the \sbfs $\{W_{n, \Sc \bksl \{k\}   }   \}_{k \in  \widehat{\Sc}_i }$.
The FS factor is then given by
\begin{eqnarray}
\label{eq:local FS factor,pt overview}
\alpha(\vv_i)=\left\{\begin{array}{ll}
\sum\limits_{k\in\mathcal{D}_{\rm Tx}}|\widehat{\Sc}_k|  -1,& \text{if}\; i\in\mathcal{D}_{\rm Tx}\\
\sum\limits_{k\in\mathcal{D}_{\rm Tx}}|\widehat{\Sc}_k|, &\text{if} \; i\notin\mathcal{D}_{\rm Tx}
\end{array}\right.
\end{eqnarray}
Hence,  $\alpha(\vv_i)$
equals the number of \txs observable by each user in $ \widehat{\Sc}_i$, excluding self\txn.
The FS factors obtained from  \eqref{eq:local FS factor,pt overview} are called \tit{local} FS factors. 
Note that the same \tx \selec is applied  to all \mgrps of the same \tp. Moreover, the \jsch is a special PT design using
$\Utx=\Sc, \forall  \Sc$, yielding uniform FS factors  $\alpha(\vv_i)=t,\forall i$.

\tit{4) Vector LCM and global FS factors:}
\Tx \selec needs to be determined  for all \mgrp \tps. Since one \sbf \tp may be \invod in more than one \tp of \mgrps, the \tx \selecs in them may yield distinct FS factors for the same \sbf \tp. 
Hence, the local FS factors obtained in \eqref{eq:local FS factor,pt overview} are coordinated through the \vlcm (LCM) operation\cite{zhang2026taming} to generate  a set of \tit{global} FS factors that are compatible with all \lfsfs. 
Let $ \alphaglobal \eqdef [ \alpha^{\rm global}(i)]_{i=1}^V$ ($V$ being the total number of \sbf \tps) denote the \gfsv, where each \tp-$\vv_i$ \sbf is split into $\alpha^{\rm global}(i)$ \pkts, 
\ie, $W_{n, \Tc}=\{ W_{n,\Tc}^{(k)}\}_{k=1}^{\alpha^{\rm global}(i)} $.
Then, the overall file splitting\footnote{Suppose the \memoconst in \eqref{eq:mc,pt overview} is satisfied.} under PT is 
\be 
\label{eq:pt file splitting,pt overview}
W_n = \bigcup\nolimits_{i \in [V]  }
\big\{ W_{n,  \Tc}^{(k)}       \big\}_{\forall  \Tc, \type(\Tc)=\vv_i,k \in  [\alpha^{\rm global}(i)] },\; \forall n
\ee 
and the \sbp level is equal to
\be
\label{eq:Fpt,homo pt}
\Fpt= \alphaglobal\Fm^{\rm T}
=
\sum\nolimits_{i=1}^V  \alpha^{\rm global}(i)F(\vv_i), 
\ee 
where 
$F(\vv_i)$ denotes the number of \tp-$\vv_i$ \sbfs, and  $\Fm \eqdef [F(\vv_i)]_{i=1}^V$ denotes the \sbf number vector.

\tit{5) \Memoconst verification:}
Under \homo-\sbp, all \pkts have an identical  size. Hence, we must ensure all users store the same number of \pkts in the placement phase.
Let $\Fm_i \eqdef [F_i(\vv_j)]_{j=1}^V  $ be the \tit{user cache vector}, which represents the number of \sbfs of each \tp stored by any user in the $\ith$ \uset of $\qv$.
Also let  $\Deltam_i \eqdef  \Fm_{i+1} - \Fm_i, i \in  [\nd(\qv)-1]   $ be  the \tit{cache \diffce  vector},  the $\jth$ entry of which $ \Delta_i(j) =F_{i+1}(\vv_j) -F_{i}(\vv_j),j\in  [V]$ represents the \diffce  in  the number of \tp-$\vv_j$ \sbfs stored per user in $\hatu_{i+1}$ and $\hatu_{i}$. 
\Aar, $\alphaglobal$ must satisfy the
\memoconst (MC): 
\be 
\label{eq:mc,pt overview}
\alphaglobal \Deltam_i^{\rm T}=0, \; \forall i\in   [\nd(\qv)-1],
\ee 
which is equivalent to  $\alphaglobal\Fm_1^{\rm T}= \alphaglobal\Fm_i^{\rm T}, \forall i \in [\nd(\qv)]$, indeed ensuring all users store the same number of  \pkts. Note that  the size of the packets can be  tuned to satisfy $|Z_k|=ML, \forall k\in [K]$. Also note  that under equal user \grpg, there is only one \uset so that the MC is automatically satisfied due to symmetry.

\section{Main Result}
\label{sec:main result}

\begin{theorem}
\label{thm:thm}
Let $ (K,t) =(2q+1,2r)$ with $q,r \in \mbb{N}_+$. The \comm  rate $R=N/M-1$ of D2D coded caching is achievable with \sbp level $\Fpt$ satisfying $\Fpt \le \Fjcm$, where
\be
\label{eq:Fpt, thm}
\Fpt= \sum\nolimits_{k=1}^{t+1}\alpha_kf_k,
\ee 
with
\begin{subequations}
\label{eq: alpha_k & f_k, thm}
\begin{align}
\alpha_k &  =
\begin{cases}
2(k-1), &    \mrm{if}\;  k\in [r],    \label{eq:alpha_k, thm}\vspace{-.15cm}
\\
t, & \mrm{if}\; k\in [r+1:t+1],
\end{cases}
\\
f_k & =\binom{q+1}{k-1}\binom{q}{t-k+1}, \;  k\in [t+1]. \label{eq:f_k, thm}
\end{align}
\end{subequations}
Moreover, for fixed $t$,
\be
\label{eq:asymp bound,thm}
\lim_{q\to \infty} \frac{F_{\rm PT}}{F_{\rm JCM}} = 1 -  \frac{1}{2}\cdot\frac{\binom{t}{r}}{2^t} = 1 - \Theta\lef( \frac{1}{\sqrt{t}}  \rig).
\ee 
\end{theorem}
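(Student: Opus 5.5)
The construction will use the unequal grouping $\qv=(q+1,q)$, with $\Qc_1$ of size $q+1$ and $\Qc_2$ of size $q$. A subfile $W_{n,\Tc}$ is given the index $k\in[t+1]$ when $|\Tc\cap\Qc_1|=k-1$ and $|\Tc\cap\Qc_2|=t-k+1$. The number of such subfiles is then exactly $f_k=\binom{q+1}{k-1}\binom{q}{t-k+1}$, and Vandermonde's identity gives $\sum_k f_k=\binom{K}{t}$. Since $\alpha_k\le t$ for every $k$, the bound $\Fpt\le t\binom{K}{t}=\Fjcm$ is immediate once \eqref{eq:Fpt, thm} has been shown achievable. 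The real work is therefore in three places: (i) exhibiting a transmitter selection with aggregate global FS vector $\alpha_k=\min\{2(k-1),t\}$; (ii) checking that every multicast group attains DoF $t$, which gives rate $\Rjcm=N/M-1$; and (iii) satisfying the memory constraint through heterogeneous packet sizes.

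For (i) and (ii), consider a multicast group $\Sc$ with $a$ users in $\Qc_1$ and $t+1-a$ users in $\Qc_2$. It involves only the subfile types $k=a$ (obtained by removing a $\Qc_1$ user) and $k=a+1$ (obtained by removing a $\Qc_2$ user). By \eqref{eq:local FS factor,pt overview}, choosing $\Utx=\Sc\cap\Qc_1$ gives local factors $a-1$ for type $a$ and $a$ for type $a+1$. In both cases the factor equals $k-1$, and each transmitted XOR is useful to all $t$ other users, so the DoF is $t$. I will split each subfile type into two subtypes that form two coupled groups. In the first coupled group the transmitters are taken from $\Qc_1$, which yields an intermediate FS vector $\approx(k-1)$. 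In the second coupled group the roles are mirrored so that the counting is symmetric in $\Qc_2$, yielding a second intermediate vector. The goal is to arrange these so that their sum is $2(k-1)$ for $k\le r$. For $k\ge r+1$, where $2(k-1)\ge t$, the scheme falls back to the full JCM selection $\Utx=\Sc$, which caps the factor at $t$. At the boundary $k=r+1$ the two rules agree because $2r=t$, and I will check that each multicast type uses compatible factors on both of its involved subfile types, taking the vector LCM where needed. I expect this coordination across the two coupled groups to be the main obstacle. In particular, the mirrored selection must produce integer factors that add up to exactly $2(k-1)$ without the LCM inflating any entry. My first attempt will be to make each coupled group contribute exactly $k-1$ on its own subtype, so that no LCM blow-up occurs.

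For (iii), I will compute the per-user cache vectors for users in $\Qc_1$ and in $\Qc_2$. Each coupled group is assigned its own packet size, which gives two free parameters. These are solved from the single linear constraint that $\Qc_1$ and $\Qc_2$ users store the same number of bits, together with the requirement $|Z_k|=ML$. I then need to show that the resulting sizes are positive; this should follow from a monotonicity or sign argument on the cache difference vector.

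Finally, for the asymptotics, fix $t$ and let $q\to\infty$. Then $f_k/\binom{K}{t}\to\binom{t}{k-1}2^{-t}$, so
\[
\frac{\Fpt}{\Fjcm}\to 2^{-t}\sum_{j=0}^{t}\binom{t}{j}\frac{\min\{2j,t\}}{t}=1-2^{-t}\sum_{j=0}^{r-1}\binom{t}{j}\frac{t-2j}{t}.
\]
Using $\frac{t-2j}{t}\binom{t}{j}=\binom{t-1}{j}-\binom{t-1}{j-1}$, the last sum telescopes to $\binom{t-1}{r-1}=\frac12\binom{t}{r}$, which gives \eqref{eq:asymp bound,thm}. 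Stirling's formula, $\binom{2r}{r}2^{-2r}=\Theta(1/\sqrt{r})$, then yields the $\Theta(1/\sqrt t)$ form.
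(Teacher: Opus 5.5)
Your counting of $f_k$, the bound $F_{\rm PT}\le F_{\rm JCM}$, and the asymptotic computation are fine. The telescoping $\frac{t-2j}{t}\binom{t}{j}=\binom{t-1}{j}-\binom{t-1}{j-1}$ is a tidy way to reach $\binom{t-1}{r-1}$. Your first coupled group, with $\mathcal{Q}_1$ transmitters and factor $k-1$, is exactly the paper's.

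The gap is step (i), which is the heart of the theorem, and the mechanism you sketch for it fails.

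\begin{itemize}
\item With two coupled groups the aggregate factor is $\alpha^{(1)}(k)+\alpha^{(2)}(k)$. A JCM selection $\mathcal{U}_{\rm Tx}=\mathcal{S}$ on the upper multicast types contributes $t$ in \emph{each} group, i.e.\ $2t$ in total.
\item You cannot use JCM in one group and $0$ in the other. By the local-factor rule, a zero factor for $\mathbf{v}_k$ requires the removed user to be the sole transmitter, which happens only for $\mathbf{v}_1$ and $\mathbf{v}_{t+1}$. Simply declaring $\mathbf{v}_k^{(2)}$ empty for $k\ge r+1$ destroys the DoF in $\mathbf{s}_{r+1}=(r,r+1)$: the second-group messages there could help at most $r<t$ users.
\item Your boundary argument conflates aggregate and per-group factors. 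Within one group, $\mathbf{v}_{r+1}$ gets $r$ from $\mathbf{s}_{r+1}$ (with $\mathcal{Q}_1$ transmitters) and $t$ from a JCM $\mathbf{s}_{r+2}$, so its LCM is already $t$ per group.
\item If both groups use identical selections, then $\bm{\alpha}^{(1)}=\bm{\alpha}^{(2)}$. The memory constraint becomes $(1+\gamma)\bm{\alpha}^{(1)}\bm{\Delta}^{\rm T}=0$, which forces $\gamma=-1$ unless the homogeneous constraint already holds.
\item A fully mirrored second group ($\mathcal{Q}_2$ transmitters everywhere) gives $\bm{\alpha}^{(2)}=(t,t-1,\dots,1,0)$ and aggregate $(t,\dots,t)$, i.e.\ no reduction at all.
\end{itemize}

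The missing idea is a hill-shaped second group. In $\mathcal{G}_2$, use $\mathcal{Q}_1$ transmitters for $\mathbf{s}_2,\dots,\mathbf{s}_{r+1}$ and switch to $\mathcal{Q}_2$ transmitters for $\mathbf{s}_{r+2},\dots,\mathbf{s}_{t+1}$. This gives $\bm{\alpha}^{(2)}=(0,1,\dots,r-1,r,r-1,\dots,1,0)$, with $\mathbf{v}_1$ and $\mathbf{v}_{t+1}^{(2)}$ excluded. At the pivot $\mathbf{v}_{r+1}=(r,r)$, both $\mathbf{s}_{r+1}$ and $\mathbf{s}_{r+2}$ induce factor exactly $r$. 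This alignment is where evenness of $t$ is used, and it is why no LCM inflation occurs. The cap at $t$ then comes from complementary slopes, $(k-1)+(t-k+1)=t$ for $k\ge r+1$, not from a JCM fallback.

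Because the two intermediate vectors now differ, step (iii) becomes a genuine lemma rather than a formality. You must show $\bm{\alpha}^{(1)}\bm{\Delta}^{\rm T}<0<\bm{\alpha}^{(2)}\bm{\Delta}^{\rm T}$.
\begin{itemize}
\item The first inequality uses that $\bm{\Delta}$ has zero sum, is positive on its first $r+1$ entries and negative on its last $r$. Weighting it by the increasing $\bm{\alpha}^{(1)}$ then gives a negative result.
\item The second inequality, for the symmetric hill $\bm{\alpha}^{(2)}$, needs a single-sign-change property of the paired sums $\bm{\Delta}(k)+\bm{\Delta}(t+2-k)$.
\end{itemize}
Consistency of $|Z_u|=ML$ with the file-size normalization is automatic once storage is equalized, since every subfile is cached by exactly $t$ users.
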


The proof of \Thm \ref{thm:thm} is  given in \Sec \ref{sec:proposed construction}, where we present a novel two-\grp  unequal  user \grpg   \PTd   that \achvs a constant-factor \sbp \red \relato the \jsch. 
Fig.~\ref{fig:sbp ratios,main thm} shows a comparison of the
actual \sbp ratios $\Fpt/\Fjcm$ (see (\ref{eq:Fpt, thm})) with  the corresponding \asympt limits (see (\ref{eq:asymp bound,thm}))
for various $t$. The actual and \asympt ratios match closely.
\begin{figure}[t]
\centering
\includegraphics[width=0.38\textwidth]{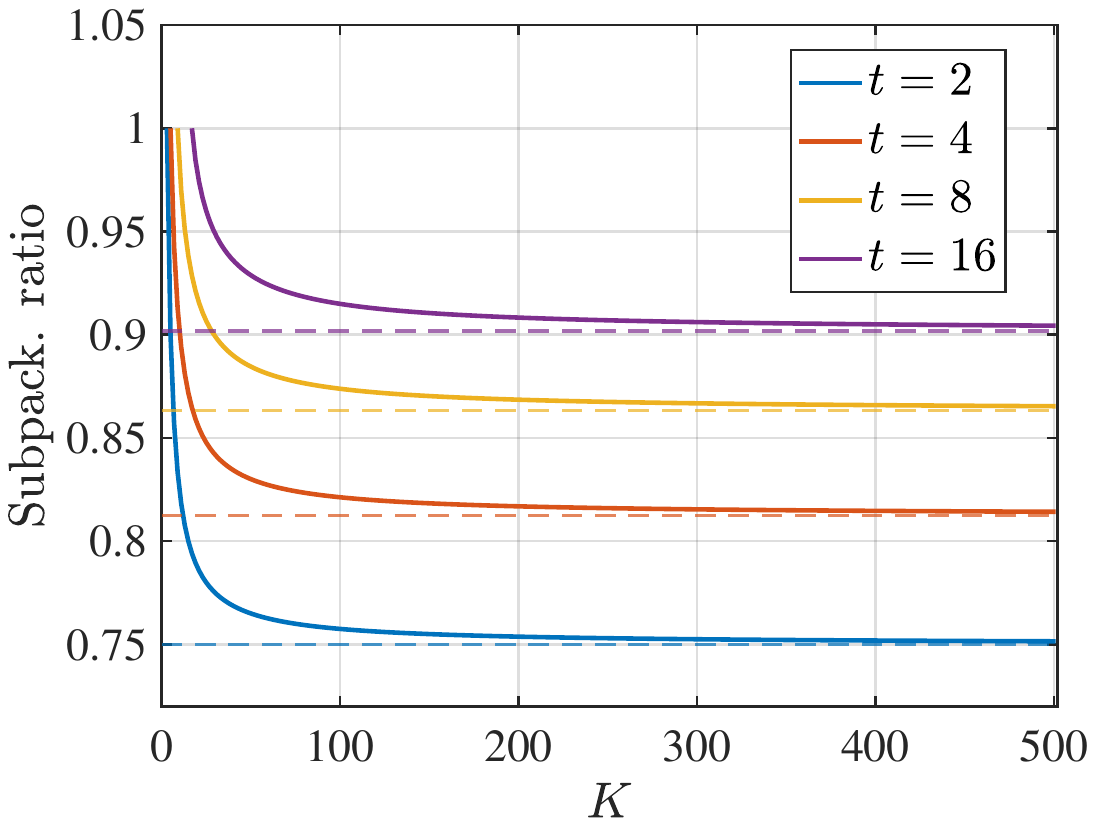}
\vspace{-.2cm}
\caption{\small Actual (solid)  and \asympt (dashed) \sbp ratios for various $t$ values.
Both ratios \incrs with $t$, 
indicating a diminishing PT reduction over JCM.}
\label{fig:sbp ratios,main thm}
\end{figure}

We highlight the implications of \Thm \ref{thm:thm} \af:

\tit{1) \Sbp Reduction over JCM~\cite{ji2016fundamental}:} The PT \sbp (\ref{eq:Fpt, thm}) equals the 
inner product of the \seqs $\{\alpha_k\}_{k=1}^{t+1}$ and $ \{f_k\}_{k=1}^{t+1}$, where $f_k$ denotes  the number of \sbfs  of the $\kth$ \tp and $\alpha_k $ denotes the  \corrspdg \fsf (\ie, the number of \pkts per \sbf).  In contrast, the JCM \sbp can be written as
\be
\label{eq:Fjcm,thm implication}
\Fjcm  = \sum\nolimits_{k=1}^{t+1} t f_k,
\ee
where  $\sum_{k=1}^{t+1}f_k=\binom{K}{t}$\footnote{See explanation in \Rmk \ref{rmk:JCM sbp identity}.}. 
Since  $\alpha_k \le t $  for all $k$, we have $\Fpt \le \Fjcm$, \ie, the proposed \PTd \achvs (strictly) lower \sbp than the \jsch while preserving the same \comm rate. 
More compactly, (\ref{eq:alpha_k, thm}) corresponds to the \gfsv 
\be 
\label{eq:alpha global,thm implication}
\alphaglobal=\lef(0,2,4,\cdots,t-2, \underline{t}_{r+1} \rig)
\ee 
Comparing $\Fpt=\alphaglobal f_{1:k}^{\rm T}  $ with  $ \Fjcm= \underline{t}_{t+1}f_{1:k}^{\rm T}$, it can be seen that
the \red of PT arises from two effects:  \sbf exclusion (as $\alpha_1=0$, \pkts of  the first \tp are eliminated) and smaller FS factors (as $\alpha_k< t ,\forall k \le r$). These effects correspond to the \tit{\ssgain} and \tit{\fssgain} identified  in the PT \frmwk~\cite{zhang2026taming}.

\tit{2) Asymptotic Scaling:}
For fixed $t=K \mu$, the \asympt ratio (\ref{eq:asymp bound,thm}) has a $1-\Theta(1/\sqrt{t})$ scaling in the small \memo regime $\mu=\Theta(1/K)$. This implies a constant-factor \red over JCM.
Fig. \ref{fig:sbp ratios,main thm} shows that for any $t$,  the actual ratio \decrs with $K$,\footnote{See proof in Lemma \ref{lemma:monotonicity of Fpt/Fjcm}, \Sec \ref{subsec:subpkt analysis, gen pt}.} and both the actual and \asympt ratios \incrs with $t$, indicating 
a weaker \sbp \red for larger $t$.

\tit{3) Comparison with \Homo PT Design~\cite{zhang2026taming}:} 
For even $K$  and $t$, the \homo-\sbp \PTd
in~\cite[\Thm 2]{zhang2026taming}
\achvs a more-than-half \sbp \red compared with  the \jsch. 
However, that construction does not extend to odd $t$, neither does~\cite{zhang2026taming} provide a design for odd $K$  \ingen.
Therefore, our design in Theorem \ref{thm:thm} serves as a complementary solution to~\cite{zhang2026taming}.

\tit{4) The Case  of Even  $K$:}
Although Theorem~\ref{thm:thm} is stated for odd $K$, it extends to even $K$ with minor modifications.
\Specly, for $(K,t)=(2q,2r)$  with $q\ge 2r+1$, under the user \grpg
$\qv=(q+1,q-1)$ and the \tx \selecs in (\ref{eq:tx select G1, gen PT}) and  (\ref{eq:tx select G2, gen PT}), 
the \gfsv
$\alphaglobal =(\underline{t}_{r+1},t-2,t-4,\cdots,2,0)$ is \achvb. 
However, the more-than-half \red established in~\cite[Theorem~2]{zhang2026taming}  cannot not be guaranteed in this setting.
Consequently, when both $K$ and $t$ are even, the homogeneous PT design in~\cite{zhang2026taming} remains preferable.

\section{PT \Frmwk with \Het \Sbp}
\label{sec:het sbp pt,proposed frmwk}

The \homo-\sbp assumption, \ie, requiring all packets to have the same size regardless of their types, severely limits the applicability of the \ptf\cite{zhang2026taming}. \Fex, that framework does not yield good PT designs for general odd $K$ due to the challenge in satisfying the \memoconst \eqref{eq:mc,pt overview}\footnote{Although \cite{zhang2026taming} proposed a PT design that \achvs an order-wise \sbp \red for $(K,t)=(2m+1,2m-1), m\ge 3$,
the result is confined to the special case $t=K-2$ and is therefore not general.}.
To fill the gap, 
we propose a new \het-\sbp \ptf
where
packets associated with different  types can have different sizes, while preserving a common type-based design philosophy.
The added flexibility in choosing the \pkt sizes 
helps satisfy the users' \memoconst while enabling meaningful \pkt \red.
Concretely, each subfile type is further refined into \tit{subtypes}, which are organized into \tit{coupled groups}. Packets within the same coupled group share a common size, whereas packets from different coupled groups may have different sizes. This additional degree of freedom makes it possible to satisfy the memory constraint under unequal user grouping even when no homogeneous design can do so. At the same time, the transmitter selections across coupled groups are coordinated so that their induced intermediate FS vectors combine into an aggregate global FS vector, which in turn determines the final subpacketization level. In this way, heterogeneous packet sizing is not an afterthought, but an integral design variable that works jointly with user grouping and transmitter selection. 

\Itf, we describe the modifications to the original  \ptf  to allow \het \sbp.

\tit{1) Subfile subtypes,  \pkt \tps, and \cpgrps:} 
In the \orig \ptf, each \sbf  $W_{n,\Tc}$  and its \pkts $\{W_{n,\Tc}^{(k)}\}_{k =1}^{\alpha^{\rm global}(i) }  $ have the same \tp $\vv_i, i \in [V]$. 
In  the \het \ptf, we refine the classification by dividing  each  \sbf \tp $\vv_i$ into $G$ \tit{\sbtps} $\vv_i^{(1)}, \cdots, \vv_i^{(G)}$, each \corrspdg to a \diff \pkt \tp. 
\Pkt \tps with the same superscript constitute a \tit{\cpgrp}, \ie, the $\gth$ of which is denoted by 
\be
\Gc_g \eqdef  \big\{ \vv_i^{(g)}\big\}_{i\in  [V]}, \;g\in [G]
\ee 
\Pkts of \tps from  the same $\Gc_g$ have an identical size of $\ell^{(g)}$ bits, whereas  \pkts from  \diff \cpgrps  may have distinct sizes.

\tit{2) \Tx \selec \& global FS factors:}
\Tx \selec is  performed \indeptly over each \cpgrp. Let $\alphabm^{(g)} \eqdef [\alpha^{(g)}(i)]_{i=1}^V  $  denote the (intermediate) \gfsv derived from the $\gth$  \tx \selec  for $\Gc_g, g\in  [G]$. 
The \agg \gfsv  is then equal to
\be
\label{eq:overall glb fs vector,het pt}
\alphaglobal = \sum\nolimits_{g=1}^G \alphabm^{(g)},
\ee 
and the overall \sbp level is given by 
$\Fpt= \alphaglobal \Fm^{\rm T} $, which is similar to \eqref{eq:Fpt,homo pt}.
Compared with the \homo PT setting in \Sec \ref{subsec:overview of pt,proposed frmwk}, where \tx \selec is performed only once, the proposed \het setting allows $G$ separate \tx \selecs. Combined with nonuniform \pkt sizing, this added flexibility helps satisfy the \memoconst of the users.

\tit{3) \Memoconst (MC):} 
Define $\gamma_g  \eqdef \ellg{g}/\ell^{(1)}, g\in[G]$ as the  \pkt size ratio between the $\gth$  and the first \cpgrps (note that  $\gamma_1=1$).
The total number of bits stored per file by any user in the $\ith$ \uset  $\hatu_i$ is equal to 
$
\sum_{g=1}^G\sum_{j=1}^V F_i(\vv_j)\alpha^{(g)}(j) \ellg{g}=  \sum_{g=1}^G  \ellg{g}\alphabm^{(g)}\Fm_i^{\rm T}, i \in  [\nd(\qv)]
$.
To ensure users across \diff \usets store the same number of bits, we require
$ \sum_{g=1}^G  \ellg{g}\alphabm^{(g)}\Fm_i^{\rm T} = \sum_{g=1}^G  \ellg{g}\alphabm^{(g)}\Fm_{i+1}^{\rm T}$ for all $i\in [\nd(\qv)-1]$, which is \eqvlt to 
$ \ellg{1} \sum_{g=1}^G \gamma_g \alphabm^{(g)} \Deltam_i^{\rm T}=0$ (recall that  $\Deltam_i = \Fm_{i+1}- \Fm_i$). Since $\ellg{1}>0$, the \memoconst can be expressed as
\be 
\label{eq:MC,het pt}
\sum\nolimits_{g=1}^G \gamma_g \alphabm^{(g)} \Deltam_i^{\rm T}=0,\; \forall i\in [\nd(\qv)-1]
\ee
Given the \tx \selecs for the $G$ \cpgrps and the \corrspdg \gfsvs $\alphabmg{1},\cdots ,\alphabmg{G}$,  the \pkt size ratios $\gammag{1}, \cdots, \gammag{G}$ can be solved. The specific \pkt sizes $\ellg{1},\cdots, \ellg{G}$ can then be determined by the per-user storage \req
$N \ellg{1}\sum\nolimits_{g=1}^G\gammag{g} \alphabm^{(g)} \Fm_i^{\rm T}=ML$ (using any $i$),
yielding 
\be 
\label{eq:pkt sizes,het pt}
\ellg{1}=
\frac{ML}{N \sum\nolimits_{g=1}^G\gammag{g} \alphabm^{(g)} \Fm_i^{\rm T}}, \; 
\ellg{g}= \gammag{g}\ellg{1}, g \in[2:G]
\ee

\begin{remark}
\label{rmk:homo pt as specical case of het pt}
\tit{Note that when $G=1$, \eqref{eq:MC,het pt} reduces to \eqref{eq:mc,pt overview} since $\gamma_1=1$. Hence, the  \orig \homo PT can be viewed as a special case of our \het \ptf.
\Inadd, \eqref{eq:MC,het pt} is a linear system with $G-1$ variables $\gammag{2}, \cdots, \gammag{G}$ ($\gammag{1}=1$) and $\nd(\qv)-1$ equations.  To  ensure existence of solutions, 
the number of equations should be  no more than the number of variables, \ie, $G \ge \nd(\qv)  $, implying that the number of  \usets  in the user \grpg should not exceed  the number of
\cpgrps.
}\end{remark}

\tit{4) Validation of \pkt size ratios:}
The solution of \eqref{eq:MC,het pt} must be positive numbers, \ie, $\gammag{g}>0,\forall g\in [G]$,
as they represent ratios of \pkt sizes. 
This must be enforced given the user \grpg and \tx \selec.

\Insum, the design procedure under  the \het \ptf  consists of  the following steps:

\tit{Step 1:}  \Dtm  user \grpg $\qv$ and \sbf types $\{\vv_i\}_{i=1}^V$;

 \tit{Step 2:}  \Dtm \pkt \tps (\ie, \sbf  \sbtps), and \cpgrps $\Gcg=\{ \vv_i^{(g)}\}_{i=1}^V, g\in [G]$;

\tit{Step 3:} \Tx \selec for each $\Gcg$, and intermediate \gfsvs $ \alphabm^{(g)},g\in[G] $; 

 \tit{Step 4:} \Dtm \pkt sizes $\ellg{g}, g\in [G]$ \accrdto \eqref{eq:MC,het pt} and \eqref{eq:pkt sizes,het pt};

 \tit{Step 5:} Verify the validity of the computed \pkt sizes (they must be positive integers); 

 \tit{Step 6:} If {Step 5} is satisfied, output the  caching  \schm with \sbp structure determined by $\{\alphabmg{g},\ellg{g}\}_{g\in  [G]}$. The overall \sbp level is equal to
\be
\label{eq:sbp,het pt}
\Fpt= \alphaglobal\FmT=\sum\nolimits_{g=1}^G \alphabmg{g}\FmT.
\ee

\section{Proposed Construction}
\label{sec:proposed construction}

This \sect presents the  proposed \PTd that \achvs the \sbp in \Thm \ref{thm:thm}. For  $K=2q+1$, the proposed \constrtn uses
a two-\grp \uneq user  \grpg  $\qv=(q+1, q)$ with $G=2$ \cpgrps. The \mtcst \txs are carefully selected for  each \cpgrp \soth the two intermediate \gfsvs $\alphabmg{1}$ and  $\alphabmg{2}$  combine into an overall $\alphaglobal$ that yields both \ssgain and \fssgain.

We begin with an illustrative example to highlight the core design principles.

\subsection{Motivating Example}
\label{subsec:examples, proposed construction}

\begin{example}[File Splitting, $t=2$]
\label{example:t=2} 
Consider $(K,t)=(2q+1, 2)$ with $q\ge 3$. Under the user grouping $\qv=(q+1,q)$ with specific assignment $\Qc_1=\{1, \cdots, q+1 \}  $ and $\Qc_2=\{q+2, \cdots, K \}  $, there are $V=3$ \sbfts and $S=4$ \mgrpts, 
\begin{align}
& \vv_1=(0,2), \;  \vv_2=(1,1), \; \vv_3=(2,0),  \notag\\
& \sv_1=(0,3),\;\sv_2=(1,2),\;\sv_3=(2,1),\; \sv_4=(3,0),
\end{align}
with \invod  \sbfts $\Ic_1=\{\vv_1\}  $, $\Ic_2=\{\vv_1, \vv_2\}  $, $\Ic_3=\{\vv_2,\vv_3 \}  $, and $\Ic_4=\{\vv_3\}$. 
We divide each \sbft  $\vv_i$ into two \tit{subtypes}, denoted by $\vv_i^{(1)}$ and $\vv_i^{(2)}, i\in [3]$, \resp,  which will be stored by the same set of users in the cache placement phase.
Define the \tit{coupled group}, denoted by $\Gc$, as the set of \sbtps which have the same superscript. Here, we have two \cpgrps  $\Gc_k=\{ \vv_i^{(k)} \}_{i=1}^3,k=1,2$. 
\Pkt of the \sbtps in each  \cpgrp $\Gc_k$ have the same size (in bits), denoted by $\ellg{k},k=1,2$.
Different transmitter selection strategies can be assigned not only across multicast group types, but also across the two subtypes of each subfile type involved in a given group type, resulting in two distinct global FS vectors. 
Moreover, the number
of subfiles of each type and its two subtypes are identical, \ie, $F(\vv_i)= F(\vv_i^{(1)})= F(\vv_i^{(2)}), i\in [3]$. The user  cache vector is
\begin{align}
\Fm_1 & \eqdef  \lef [F_1(\vv_1), F_1(\vv_2), F_1(\vv_3)  \rig   ]=\lef( 0, q,q  \rig), \notag\\
\Fm_2 & \eqdef \lef [F_2(\vv_1), F_2(\vv_2), F_2(\vv_3)  \rig   ]=\lef( q-1, q+1,0 \rig),
\end{align}
yielding cache \diffce vector 
\be
\label{eq:Delta,example}
\bm{\Delta}=\Fm_2-\Fm_1=(q-1,1,-q).
\ee 

For \sbtps in each \cpgrp, we employ a \diff \tx selection \af:
\begin{align}
\Gc_1:  \sv_1=(0,3^\dagger), \sv_2=(1^\dagger,2),\sv_3=(2^\dagger,1), \sv_4=(3^\dagger,0), \label{eq:tx select in G1}  \\
\Gc_2:  \sv_1=(0,3^\dagger), \sv_2=(1^\dagger,2),\sv_3=(2, 1^\dagger), \sv_4=(3^\dagger,0). \label{eq:tx select in G2}
\end{align}
The intermediate \gfsvs resulting from each \cpgrp  can be  computed as (see Table \ref{tab:FS_tables, example})
\begin{align}
\alphabm^{(1)} & \eqdef \big [\alpha(\vv_1^{(1)}), \alpha(\vv_2^{(1)}),\alpha(\vv_3^{(1)})\big]
=(0,1,2),   
\label{eq:intermediate gfsv (1), example}\\
\alphabm^{(2)} & \eqdef \big [\alpha(\vv_1^{(2)}), \alpha(\vv_2^{(2)}),\alpha(\vv_3^{(2)})\big]
=(0,1,0).\label{eq:intermediate gfsv (2), example}
\end{align}
\begin{table}[t]
\centering
\caption{\small FS Tables for $\mathcal{G}_1$ and $\mathcal{G}_2$}
\label{tab:FS_tables, example}
\setlength{\tabcolsep}{5pt}
\renewcommand{\arraystretch}{0.8}
\begin{subtable}[t]{0.48\columnwidth}
\centering
\caption{$\mathcal{G}_1$}
\begin{tabular}{|c|c|c|c|}
\hline
 & $\vv_1^{(1)}$ & $\vv_2^{(1)}$ & $\vv_3^{(1)}$ \\
\thickhline
$\sv_1$ & $2 $&  $\star$  &  $\star$ \\
\hline
$\sv_2$ & $0$ & $1 $& $\star$  \\
\hline
$\sv_3$ & $\star$  & $1$ & $2$ \\
\hline
$\sv_4$ &  $\star$ & $\star$  & $2$ \\
\thickhline
$\alphabm^{(1)}$ & $0$ & $1$ & $2$ \\
\hline
\end{tabular}
\end{subtable}
\hfill
\begin{subtable}[t]{0.48\columnwidth}
\centering
\caption{$\mathcal{G}_2$}
\begin{tabular}{|c|c|c|c|}
\hline
 & $\vv_1^{(2)}$ & $\vv_2^{(2)}$ & $\vv_3^{(2)}$ \\
\thickhline
$\sv_1$ & $2$ & $\star$  & $\star$  \\
\hline
$\sv_2$ &$ 0$ & $1$ &  $\star$ \\
\hline
$\sv_3$ & $\star$  & $4 $& $0$ \\
\hline
$\sv_4$ &  $\star$ & $\star$  & $2$ \\
\thickhline
$\alphabm^{(2)}$ & $0$ & $1$ & $0$  \\
\hline
\end{tabular}
\end{subtable}
\end{table}
Clearly, when considering only \sbtps in $\Gc_1$  or $\Gc_2$, the \memoconst cannot be satisfied (assuming \sbfs in each  $\Gc_k$ have an identical size):
\begin{align}
\label{eq:MC not satisfied for each cpg,example}
\alphabm^{(1)} \bm{\Delta}^{\rm T} &  =(0,1,2)(q-1,1,1-q)^{\rm T}=1-2q \ne 0,\notag\\
\alphabm^{(2)} \bm{\Delta}^{\rm T} & =(0,1,0)(q-1,1,1-q)^{\rm T}=1 \ne 0.
\end{align}

To satisfy the MC, we employ both \sbtps of \sbfs, and assume that \pkts from \diff \sbtps can have distinct sizes. Let the overall \gfsv be
\be
\label{eq:alpha-global, example}
\alphaglobal = \alphabm^{(1)} + \alphabm^{(2)}=(0,2,2),
\ee 
implying that type-$\vv_1$ (and its two \sbtps $\vv_1^{(1)}, \vv_1^{(2)}$) \sbfs are excluded.
Each type-$\vv_i$ subfile, for $i=2,3$, is split into two packets of types $\vv_i^{(1)}$ and $\vv_i^{(2)}$, with sizes $\ell^{(1)}$ and $\ell^{(2)}$ bits, respectively.  
Since all users in the same user \grp will store the same number of bits due to the symmetry of cache placement, we need  to ensure that users in $\Qc_1$ and $\Qc_2$  store the same number of bits; that is, 
\be
\alphabm^{(1)}\Fm_1^{\rm T}\ell^{(1)} + 
\alphabm^{(2)}\Fm_1^{\rm T}\ell^{(2)} = 
\alphabm^{(1)}\Fm_2^{\rm T}\ell^{(1)} + 
\alphabm^{(2)}\Fm_2^{\rm T}\ell^{(2)}.
\ee
which is equivalent to 
$\alphabm^{(1)} \Deltam^{\rm T} \ell^{(1)} +
\alphabm^{(2)} \Deltam^{\rm T} \ell^{(2)}=0
$.
Thus, the \pkt  size ratio is equal to 
\be
\label{eq:gamma, example}
\gamma\eqdef \frac{\ell^{(2)}}{\ell^{(1)}} =
- \frac{\alphabm^{(1)} \Deltam^{\rm T}}{\alphabm^{(2)} \Deltam^{\rm T}}=-
\frac{(0,1,2)(q-1,1,-q)^{\rm T} }{(0,1,0)(q-1,1,-q)^{\rm T}}=K-2.
\ee 
Since each file contains $L$  bits  in total, 
\be
\label{eq:L=alpha1*F*l1+alpha1*F*l1, example}
\alphabm^{(1)}\Fm^{\rm T}\ell^{(1)} + 
\alphabm^{(2)}\Fm^{\rm T}\ell^{(2)}=L, 
\ee
where $\Fm=[F(\vv_1),F(\vv_2), F(\vv_3)]=\big[ \binom{q}{2}, q(q+1),  \binom{q+1}{2} \big]$ is the \sbf number vector. 
Plugging $ \ell^{(2)}= \gamma\ell^{(1)} $  into (\ref{eq:L=alpha1*F*l1+alpha1*F*l1, example}),  the \pkt sizes can be solved as 
\begin{align}
\label{eq:pkt size,example}
\ell^{(1)} & = \frac{L}{(\alphabm^{(1)} + \gamma\alphabm^{(2)})\Fm^{\rm T}} = \frac{4L}{K(K^2-1)}, \notag \\
 \ell^{(2)} & =\gamma\ell^{(1)}= \frac{4(K-2)L}{K(K^2-1)}.
\end{align}

The overall \sbp is determined by the \gfsv (\ref{eq:alpha-global, example}),
\begin{subequations}
\label{eq:Fpt calculation,example}
\begin{align}
\label{eq:Fpt, example}
 \Fpt & = \sum\nolimits_{i=1}^2  \alphabm^{(i)}
\big [F(\vv_1^{(i)}), F(\vv_2^{(i)}),F(\vv_3^{(i)} )   \big]^{\rm T} \\ 
& = \big( \alphabm^{(1)} + \alphabm^{(2)}   \big)
\big [F(\vv_1 ), F(\vv_2 ),F(\vv_3 )   \big]^{\rm T}\label{eq:step 1,Fpt,ex}\\
&\overset{(\ref{eq:alpha-global, example})}{=}  \alphaglobal  \Fm^{ \rm T} \\
& = (0,2,2)\lef[ \binom{q}{2}, q(q+1),  \binom{q+1}{2} \rig]^{\rm T}= \frac{3(K^2-1)}{4},
\end{align}
\end{subequations}
where  (\ref{eq:step 1,Fpt,ex}) is \bcuz  $F(\vv_k)=F(\vv_k^{(i)}), \forall k\in [3], i\in[2]$, \ie, the number of \sbfs of each type is  equal to its  \sbtps. 
Hence, $ \Fpt/\Fjcm=\frac{3}{4}(1+\frac{1}{K}) \le \frac{6}{7}, \forall K \ge  7 (q\ge 3)$, and $ \lim_{K\to \infty} \Fpt/\Fjcm=3/4$, 
implying a $1/4$  \asympt \sbp  \red compared to JCM\footnote{Since $\mu= t/K$, letting $K\to \infty$ while fixing  $t=2$ implies $\mu  \to 0$.}.  
\begin{figure}[t]
\centering
\includegraphics[width=0.38\textwidth, height=3cm]{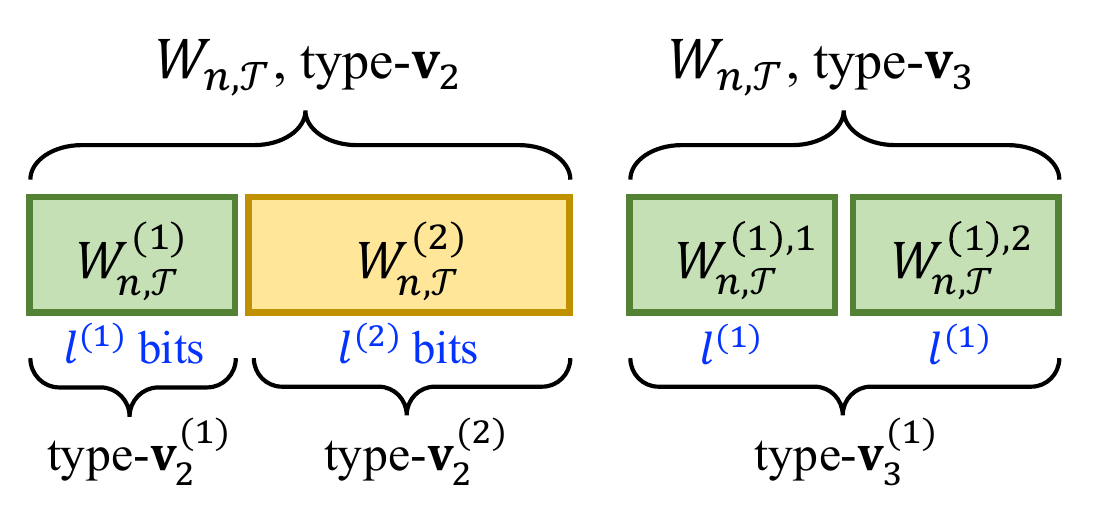}
\vspace{-.2cm}
\caption{\small Illustration of subfile and \pkt types. \Pkts  of the same color (\ie, belonging to the same \cpgrp) have identical sizes. }
\label{fig:subfile & pkt types, example}
\end{figure}

\tbf{File splitting.}  
The file splitting is determined by \tit{both} the \intmdt and \gfsvs (\ref{eq:intermediate gfsv (1), example}), (\ref{eq:intermediate gfsv (2), example}), and (\ref{eq:alpha-global, example}). \Ip, type-$\vv_1$ \sbfs and  type-$\vv_3^{(2)}$ \pkts are excluded.  For the remaining types, the FS factors are summarized in Table~\ref{tab:FS factors,example}.
\begin{table}[ht]
\caption{\small Summary of FS Factors, Example \ref{example:t=2}}
\label{tab:FS factors,example}
\vspace{-.05cm}
\centering
\setlength{\tabcolsep}{2pt}
\renewcommand{\arraystretch}{0.85}
\begin{tabular}{|c|c|c|}
\hline
\small \Sbf \tp   & \small \Pkt \tp  & \small FS factor  \\ \thickhline
$\vv_1$ & \small N/A & $0$ \small (excluded) \\ \hline
\multirow{2}{*}{ $\vv_2$ } & $\vv_2^{(1)}$ &  $1$ \\ \cline{2-3}
&  $\vv_2^{(2)}$ & $1$ \\ \hline
\multirow{2}{*}{$\vv_3$} &  $\vv_3^{(1)}$ & $2$ \\ \cline{2-3}
& \small N/A & $0$ \small (excluded)  \\ 
\hline
\end{tabular}
\end{table}
The file splitting is \af: 

i) \tit{File to \sbf:} Each file is  partitioned into $F(\vv_2) $ type-$\vv_2$ \sbfs and $F(\vv_3)$ type-$\vv_3$ \sbfs.
A \tp-$\vv_2 $ \sbf  has size $\ell^{(1)} +\ell^{(2)} $ bits, while a \tp-$\vv_3$ \sbf has $2\ell^{(1)}$ bits. Each \sbf \tp $\vv_2$ has two  \sbtps $\vv_2^{(1)}$ and $\vv_2^{(2)}$,  and each  \sbf \tp  $\vv_3$ has only one \sbtp $ \vv_3^{(1)}$.

ii) \tit{\Sbf to \pkts:} 
Each type-$\vv_2$ \sbf  is split into one  \tp-$\vv_2^{(1)}$ \pkt and one \tp-$\vv_2^{(2)}$ \pkt, \ie, 
$W_{n,\Tc}=\big(W_{n,\Tc}^{(1)}, W_{n,\Tc}^{(2)}     \big)$  if $\type(\Tc)=\vv_2$.    
Each type-$\vv_3$ \sbf  is split into two \tp-$\vv_3^{(1)}$ \pkts, \ie, 
$W_{n,\Tc}=\big(W_{n,\Tc}^{(1),1}, W_{n,\Tc}^{(1),2}     \big)$  if $\type(\Tc)=\vv_3$. 
For a \tp-$\vv_k$ subset $\Tc$, $W_{n,\Tc}^{(i),j}$ ($i \in [2],j\in [\alpha(\vv_k^{(i)})]$) denotes the $\jth$  \pkt of \tp-$\vv_k^{(i)}$. 
See Fig. \ref{fig:subfile & pkt types, example}  for an \illutn  of   \sbf composition.

\Aar, each file can be expressed as
\begin{align}
W_n = \Pc_2^{(1)} \cup  \Pc_2^{(2)}   \cup  \Pc_3^{(1)},
\end{align}
where $\Pc_k^{(i)}$ denote  the set  of all  \tp-$\vv_k^{(i)}$ \pkts. \Specly,
\begin{align}
\label{eq:P_k^(i), Wn FS,example}
\Pc_2^{(1)} & = \lef\{  W_{n, \Tc}^{(1)}: \type(\Tc)= \vv_2^{(1)}, n  \in [N]  \rig\}, \notag \\
\Pc_2^{(2)} & = \lef\{  W_{n, \Tc}^{(2)}: \type(\Tc)= \vv_2^{(2)}, n  \in [N]  \rig\}, \notag \\
\Pc_3^{(1)} & = \lef\{  W_{n, \Tc}^{(1),j}: j\in [2], \type(\Tc)= \vv_3^{(1)}, n  \in [N]  \rig \}.
\end{align}
\Aar, each file is split into 
$\Fpt=|\Pc_2^{(1)} | + |\Pc_2^{(2)} | + |\Pc_3^{(1)}|=2\binom{q+1}{1}\binom{q}{1}  +2\binom{q+1}{2}= \frac{3(K^2-1)}{4}  $ \pkts. 

With the above file  splitting, the cache placement and  \deli schemes are described  \af.

\tbf{Cache placement.} Cache placement is performed at the \sbf level. \Specly,
\user{k} stores  all \pkts associated with the \sbfs  $W_{n,\Tc}$ where $k\in \Tc$. Define $\Pc_{i, k}^{(j)} \eqdef \{w_\Tc \in  \Pc_{i}^{(j)}: k\in \Tc \}  $  as the subset of \tp-$\vv_i^{(j)}$ \pkts whose {support set} $\Tc$ covers  \user{k}. 
Then,
\be 
\label{eq:Zk,example}
Z_k =\Pc_{2, k}^{(1)}  \cup \Pc_{2, k}^{(2)} \cup \Pc_{3, k}^{(1)}, \; k\in [K] 
\ee
The \memoconst $|Z_k|=ML$ is satisfied and proved \af.  First, for any \user{k\in  \Qc_1}, we  have  $|\Pc_{2, k}^{(1)} | =|\Pc_{2, k}^{(2)}|=q$, and $|\Pc_{3, k}^{(1)}|=2q$. Thus, 
\begin{align}
|Z_k| & = \lef( |\Pc_{2, k}^{(1)}| + |\Pc_{2, k}^{(2)}|\rig)\ell^{(1)} + |\Pc_{3, k}^{(1)}|\ell^{(2)}\notag\\
 & \overset{(\ref{eq:pkt size,example})}{=}
  \lef( |\Pc_{2, k}^{(1)}| + \gamma|\Pc_{2, k}^{(2)}|+  |\Pc_{3, k}^{(1)}|   \rig)\ell^{(1)}\notag\\
& = \frac{(K^2-1)N\ell^{(1)} }{2} \overset{(\ref{eq:pkt size,example})}{=} \frac{2NL}{K}=ML,
\end{align}
where the last step is \bcuz $t=KM/N=2$. Second, for $ k\in  \Qc_2$,  $|\Pc_{2, k}^{(1)} | =|\Pc_{2, k}^{(2)}|=q+1$, and $|\Pc_{3, k}^{(1)}|=0$. Thus,
\begin{align}
|Z_k| & = |\Pc_{2, k}^{(1)}|\ell^{(1)}
+ |\Pc_{2, k}^{(2)}| \ell^{(2)} = \lef(|\Pc_{2, k}^{(1)}| + \gamma |\Pc_{2, k}^{(2)}|  \rig) \ell^{(1)} \notag\\
& =  \frac{(K^2-1)N\ell^{(1)} }{2} =ML.
\end{align}
\Aar, the \memoconst is satisfied.

\tbf{\Deli scheme.}
Let the user demands be $\dv=(d_1,\cdots,d_K)$. The \mtcst \deli phase consists of two {rounds}, each corresponding to the transmission of packet types within a distinct coupled group.
Note that the \deli within type-$\sv_1=(0,3)$ \mgrps is eliminated since \tp-$\vv_1$ \sbfs are excluded.

1) \tit{\Deli for \pkt \tps in $\Gc_1$:}
The \tx selection is given in (\ref{eq:tx select in G1}), with \invod \sbf \tps $\Ic_1=\{\vv_1^{(1)}\}  $, $\Ic_2=\{\vv_1^{(1)}, \vv_2^{(1)}\}  $, $\Ic_3=\{\vv_2^{(1)},\vv_3^{(1)} \}  $, and $\Ic_4=\{\vv_3^{(1)}\}$.
For a \tp-$\sv_2=(1^\dagger, 2)$ \mgrp
$\Sc= \{1, q+2, q+3\}$, \user{1} is the only \tx. It sends
$W_{d_{q+2}, \Sc \bksl \{q+2\} }^{(1)} \oplus W_{d_{q+3}, \Sc \bksl \{q+3\} }^{(1)} $, from which users  $q+2$ and  $q+3$ each  decodes a \tp-$\vv_2^{(1)}$ \pkt. 
For a \tp-$\sv_3=(2^\dagger,1)$ \mgrp $\Sc=\{1,2, q+2\}$, users $1$ and $2$ are the \txs. Let $(\pi_1, \pi_2 )$ be  a random permutation of $(1,2)$. Then \user{1} sends $ W_{d_2, \Sc \bksl  \{2\}   }^{(1)} \oplus  W_{d_{q+2}, \Sc \bksl  \{q+2\}   }^{(1), \pi_1}     $, and \user{2} sends $ W_{d_1, \Sc \bksl  \{1\}   }^{(1)} \oplus  W_{d_{q+2}, \Sc \bksl  \{q+2\}   }^{(1), \pi_2}     $. From this \deli,  users $1$ and $2$ each decodes a desired \tp-$\vv_2^{(1)}$ \pkt, while \user{q+2} decodes two \tp-$\vv_3^{(1)}$ \pkts.
Moreover, for a  \tp-$\sv_4=(3^\dagger,0)$ \mgrp $\Sc=\{1,2,3\}$, all users transmit, and the \cmsgs are given in Table~\ref{tab:deli for G1,example}, 
\begin{table}[ht]
\caption{\small \Deli within $\Sc=\{1,2,3\}$ under $\Gc_1$ }
\label{tab:deli for G1,example}
\vspace{-.05cm}
\centering
\setlength{\tabcolsep}{4pt}
\renewcommand{\arraystretch}{0.95}
\begin{tabular}{|c|c|c|}
\hline
\small Tx & \small\Cmsg & \small  Rx  \\
\thickhline 
$1$ &  $W_{d_2,\Sc \bksl \{2\}}^{(1),\pi_1^\prime} \oplus  W_{d_3,\Sc \bksl \{3\}}^{(1),\pi_1^{\prime\prime}}$        &  $\{2,3\}$ \\
\hline 
$2$ & $W_{d_1,\Sc \bksl \{1\}}^{(1),\pi_1} \oplus  W_{d_3,\Sc \bksl \{3\}}^{(1),\pi_2^{\prime\prime}}$     & $ \{1,3\}$ \\
\hline 
$3$ &  $W_{d_1,\Sc \bksl \{1\}}^{(1),\pi_2} \oplus  W_{d_2,\Sc \bksl \{2\}}^{(1),\pi_2^{\prime}}$     &  $\{1,2\}$ \\
\hline 
\end{tabular}
\end{table}
where $(\pi_1, \pi_2)$, $(\pi_1^\prime, \pi_2^\prime)$, and $(\pi_1^{\prime \prime}, \pi_2^{\prime \prime})$  are three random \perms of $(1,2)$.

2) \tit{\Deli for \pkt \tps in $\Gc_2$:}
The \tx selection is given in (\ref{eq:tx select in G2}), with \invod \sbf \tps $\Ic_1=\{\vv_1^{(2)}\}  $, $\Ic_2=\{\vv_1^{(2)}, \vv_2^{(2)}\}  $, $\Ic_3=\{\vv_2^{(2)},\vv_3^{(2)} \}  $, and $\Ic_4=\{\vv_3^{(2)}\}$. 
In $\Gc_2$, besides \tp $\vv_1$, the  \pkt type $\vv_3^{(2)}$ is also excluded since $\alphabm^{(2)}=(0,1,0)$. Hence, the \deli within \tp-$\sv_4$ \mgrps is eliminated.
The \deli within \tp-$\sv_2$ \mgrps is similar---as the \tx selection is the same---to that of $\Gc_1$ except that the \invod \pkts need to switch to types in $\Gc_2$.
For the \tp-$\sv_3$ \mgrp $\Sc=\{1,2, q+2\}$, \user{q+2} is the only \tx and sends $
W_{d_1, \Sc \bksl \{1\}}^{(2)} \oplus W_{d_2, \Sc \bksl \{2\}}^{(2)}  
$, from which each of users $1$  and $2$ decodes  a desired \tp-$\vv_2^{(2)}$ \pkt.

\tbf{Correctness.}
We show that each user can correctly recover the desired file. \Wlog, consider \user{1}. 
By (\ref{eq:Zk,example}), \user{1} has  stored $ | \Pc_{2,1}^{(1)}|/N=q$ \tp-$\vv_2^{(1)}$ \pkts,  $| \Pc_{2,1}^{(2)}|/N=q$ \tp-$\vv_2^{(2)}$ \pkts, and $ | \Pc_{3,1}^{(1)}|/N=2q$ \tp-$\vv_3^{(1)}$ \pkts of  the desired file $W_{d_1}$. 
We need to show that \user{1} can obtain the remaining \pkts of each \tp, \ie, $|\Pc_{2}^{(1)}\backslash \Pc_{2,1}^{(1)}|/N= q^2  $ \tp-$\vv_2^{(1)}$ \pkts, 
$| \Pc_{2}^{(2)}\bksl  \Pc_{2,1}^{(2)}|/N=q^2  $
\tp-$\vv_2^{(2)}$ \pkts, and $ | \Pc_{3}^{(1)}\bksl \Pc_{3,1}^{(1)} |/N=q(q-1)   $ \tp-$\vv_3^{(1)}$ \pkts.

We first show that \user{1} can decode all desired \pkts  of  \tps in $\Gc_1$.

\textbullet\;From each \tp-$\sv_3$ \mgrp  $\Sc= \{1,k,k^\prime \}  $ with $k \in \Qc_1\bksl\{1\}, k^\prime \in \Qc_2  $,  \user{1} decodes a  \tp-$\vv_2^{(1)}$ \pkt $W_{d_1, \Sc \bksl \{1\}}$. The number of such \mgrps is equal to $ \binom{|\Qc_1|-1}{1}\binom{|\Qc_2|}{1} =q^2$ so that \user{1} decodes a total of $q^2$ \tp-$\vv_2^{(1)}$ \pkts.

\textbullet\;From each \tp-$\sv_4$ \mgrp $\Sc=\{1,k,k^\prime \}  $ with $k,k^\prime \in \Qc_1 \bksl \{1\}$, \user{1} decode two \tp-$\vv_3^{(1)}$  \pkts  $W_{d_1, \Sc\bksl\{1\}}^{(1),1}$  and $W_{d_1, \Sc\bksl\{1\}}^{(1),2}$. There are $\binom{|\Qc_1|-1}{2}=\frac{q(q-1)}{2}$ such \mgrps. Hence, \user{1} decodes a total of $q(q-1)$ \tp-$\vv_3^{(1)}$  \pkts.

Second, consider \deli of \pkt \tps in $\Gc_2$. It can be easily seen that \user{1} decodes a total of $q^2$ \tp-$\vv_2^{(2)}$ \pkts. 
\Aar, \user{1} decodes all the desired \pkts, proving the correctness of the above \deli scheme. For users in $\Qc_2$, correctness can be proved similarly. \hfill $\lozenge$
\end{example}

\begin{remark}[Infeasibility of \Homo \Sbp]
\label{rmk: infeasibility of homo sbp}
\tit{For the user grouping $\qv=(q+1,q)$ in Example~\ref{example:t=2}, \homo \sbp  cannot satisfy the memory constraint (MC) while achieving any strict packet reduction over JCM.
For instance, neither \tx \selec in (\ref{eq:tx select in G1}) nor (\ref{eq:tx select in G2}) satisfies the MC as shown in (\ref{eq:MC not satisfied for each cpg,example}). 
In fact, no non-JCM FS vector \satisfs the MC.
\Specly, suppose a \gfsv $\alphabm=(\alpha_1,\alpha_2,\alpha_3)$ is obtained from  the $9$ \possb \tx selection \strtgs\footnote{There are a total of $9$ \possb \tx selection \strtgs since $\sv_2$ and  $\sv_3$ each admit $3$ \tx choices, while the selections for $\sv_1$ and $\sv_4$ are fixed.}. 
Since the MC requires $\alphabm \bm{\Delta}^{\rm T} =(\alpha_1,\alpha_2,\alpha_3)(q-1,1,-q)^{\rm T}
=(\alpha_1 - \alpha_3)q  + (\alpha_2 - \alpha_1)=0$ for any $q$, it must hold that $\alpha_1=\alpha_2=\alpha_3$. 
The only FS vector satisfying this  is $\alphabm_{\rm JCM
}=(2,2,2)$. \Aar, we cannot \achv \pkt \red with \homo \sbp while ensuring the MC.}
\end{remark}

\subsection{General \PTD }
\label{subsec: general pt design}

This section presents the general \PTD for \Thm \ref{thm:thm}, including file splitting,  cache placement, and \mtcst \deli phases.
We also analyze the resulting subpacketization relative to the JCM baseline and validate the packet-size ratio.

Let $(K,t)=(2q+1,2r)$ where $q \ge t+1$\footnote{The purpose of the assumption $q \ge t+1$ is to ensure the presence of all relevant \mgrp \tps. When $q<t+1$, the same \PTd works similarly.}.
Consider the \uneq \grpg $\qv=(q+1,q)$ with user assignment $\Qc_1=\{1, \cdots, q+1\}$ and $\Qc_2=\{q+2, \cdots, K\}$.
There are $V=t+1$ \sbf \tps and $S=t+2$ \mgrp \tps, 
\begin{subequations}
\label{eq:vk,sk def,gen pt}
\begin{align}
\vv_k&=(k-1,t-k+1),\;  k\in[t+1]\\
\sv_k&=(k-1,t-k+2),\;  k\in[t+2]
\end{align}
\end{subequations}
with \invod \sbf type sets $\Ic_1=\{\vv_1\}$, $\Ic_k=\{\vv_{k-1}, \vv_k\}, k \in [2:t+1]$, and $\Ic_{t+2}=\{\vv_{t+1}\}$. 
Each \sbf type is divided into two \tit{\sbtps} $\vv_k=(\vv_k^{(1)}, \vv_k^{(2)})$ \soth the \cpgrps are 
\be
\label{eq:Gi def, gen PT design}
\Gc_i= \big\{ \vv_k^{(i)}\big\}_{k=1}^{t+1}, \; i=1,2
\ee 
All \pkts of \sbtps in $\Gc_i$ have the same size of $\ell^{(i)}$ bits.

\tit{Determination of Global FS Vectors.}
Each \cpgrp corresponds to a  specific \tx \selec \af.

For $\Gc_1$, the \tx \selec is 
\begin{align}
\label{eq:tx select G1, gen PT}
\sv_k
=\left\{\begin{array}{ll}
\left(0,(t+1)^{\dagger}\right), & \text{if}\; k=1\\
\left((k-1)^{\dagger},t-k+2\right), & \text{if}\;k\in[2:t+1]\\
\left((t+1)^{\dagger},0\right), &\text{if}\; k=t+2
\end{array}\right.
\end{align} 
The \lfsvs are\footnote{For ease of notation,  the $\star$ symbols  are omitted in each $\alphabm_k$.} $ \alphabm_1=\big(\alpha(\vv_1^{(1)}\big)=(t)$,
$\alphabm_k=\big(\alpha(\vv_{k-1}^{(1)}), \alpha(\vv_{k}^{(1)}) \big)=(k-2, k-1), k\in [2:t+1]$, and $\alphabm_{t+2}=\big(\alpha(\vv_{t+1}^{(1)})\big)=(t)  $, yielding a \gfsv for $\Gc^{(1)}$ as
\be 
\label{eq:alpha^(1), gen PT}
\bm{\alpha}^{(1)}=(0,1,2,\cdots,t-1,t),
\ee 
implying that subtype $\vv_1^{(1)}$ is excluded. 
For $\Gc_2$, the \tx \selec is 
\begin{align}
\label{eq:tx select G2, gen PT}
\sv_k
=\left\{\begin{array}{ll}
\left(0,(t+1)^{\dagger}\right), & \text{if}\; k=1\\
\left((k-1)^{\dagger},t-k+2\right), & \text{if}\;k\in[2:r+1]\\
\left(k-1,(t-k+2)^{\dagger}\right), &\text{if} \; k\in[r+2:t+1]\\
\left((t+1)^{\dagger},0\right), &\text{if} \; k=t+2
\end{array}\right.
\end{align}
The \corrspdg \lfsvs are $\bm{\alpha}_1=(t)$, 
$\bm{\alpha}_k=(k-2,k-1), k\in[2:r+1]$,    
$\bm{\alpha}_k=\big(t-k+2,t-k+1\big), k\in[r+2:t+1]$,  and $\bm{\alpha}_{t+2}=(t)$, yielding \gfsv for $\Gc^{(2)}$ as
\be 
\label{eq:alpha^(2), gen PT}
\bm{\alpha}^{(2)} = (0,1,2,\cdots,r-1,r,r-1,\cdots,2,1,0),
\ee
implying the exclusion of subtypes $\vv_1^{(2)}$ and $\vv_{t+1}^{(2)}$. 

Adding (\ref{eq:alpha^(1), gen PT}) and (\ref{eq:alpha^(2), gen PT}), the \gfsv becomes 
\be 
\label{eq:alpha^global, gen PT}
\alphaglobal = \bm{\alpha}^{(1)} + \bm{\alpha}^{(2)}
= \lef(0,2,4,\cdots,t-2, \underline{t}_{r+1} \rig),
\ee
where \tp-$\vv_1$ \sbfs/\pkts are excluded.

Fig.~\ref{fig:alphas comparison,gen pt} shows  the global FS vectors for $t=8$. 
The black and blue curves are two intermediate FS vectors, whose element-wise sum yields the red curve, i.e., the final global FS vector that determines the \sbp level. The red curve has a distinctive shape: in the first half, it is strictly below $t$ (the JCM splitting factor), so the \sbp reduction comes from assigning fewer than $ t$ packets to these subfile types; in the second half, it is capped at $t$, due to the opposite monotonic trends of the two intermediate vectors, whose sum saturates at $t$.
\begin{figure}
    \centering
    \includegraphics[width=0.34\textwidth, height =0.27\textwidth]{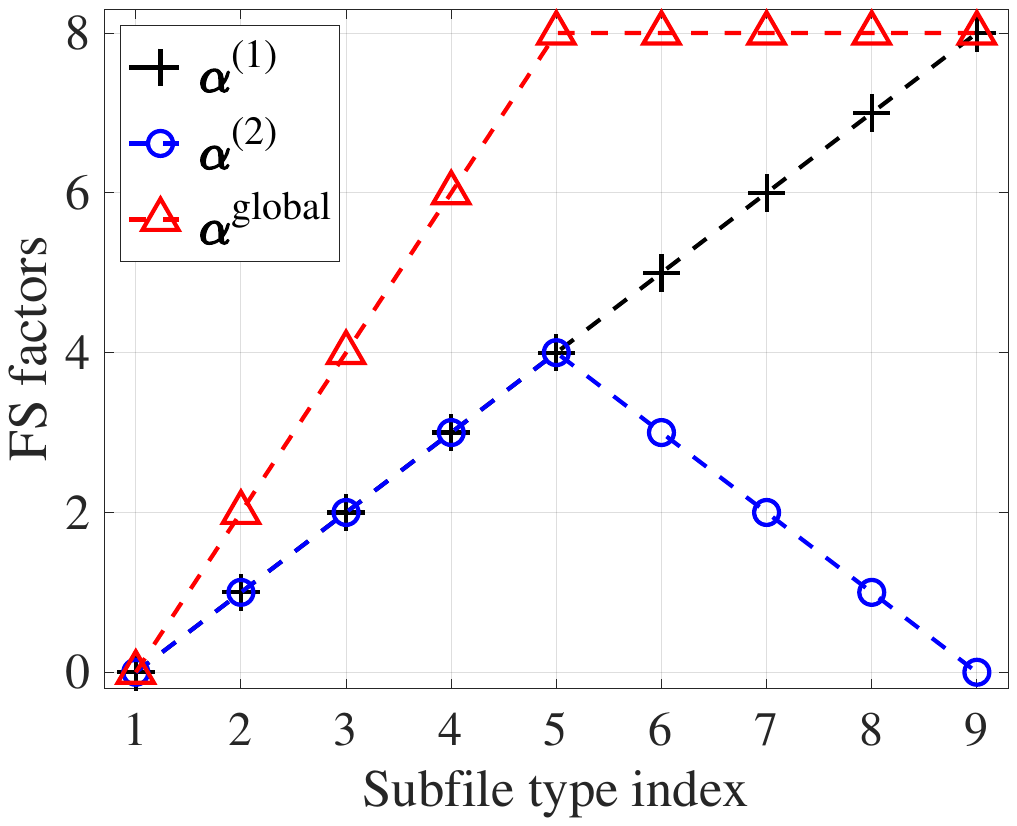}
    \vspace{-.2cm}
    \caption{\small Comparison of global FS vectors  for $t=8$.}
    \label{fig:alphas comparison,gen pt}
\end{figure}
The resulting  \sbp is equal to 
\be
\label{eq:Fpt,gen pt}
\Fpt= \sum\nolimits_{k=1}^{t+1}\alphaglobal(k)F(\vv_k),
\ee
as shown in \Thm \ref{thm:thm},
where $F(\vv_k) = \binom{q+1}{k-1}\binom{q}{t-k+1}$  is the number of \tp-$\vv_k$ \sbfs, $k\in [t+1]$.

\tit{Calculation of \Pkt Sizes.}
The \pkt sizes $\ell^{(1)}$ and $\ell^{(2)}$ are determined by the \memoconst  of  the users, \ie, each user in $\Qc_1$ and $\Qc_2$ should store the same number of $ML$ bits. \Specly,
\begin{subequations}
\label{eq:MC,gen pt}
\begin{align}
\lef( \bm{\alpha}^{(1)}\mathbf{F}_1^{\rm T}\ell^{(1)}+
\bm{\alpha}^{(2)}\mathbf{F}_1^{\rm T}\ell^{(2)}
\rig)N & = ML,
\label{eq:MC1,gen pt}\\
\lef( \bm{\alpha}^{(1)}\mathbf{F}_2^{\rm T}\ell^{(1)}+
\bm{\alpha}^{(2)}\mathbf{F}_2^{\rm T}\ell^{(2)}
\rig)N & = ML,
\label{eq:MC2,gen pt}
\end{align}
\end{subequations}
where 
\be
\label{eq:L,gen pt}
L= \alphabm^{(1)}\Fm^{\rm T}\ell^{(1)} +
\alphabm^{(2)}\Fm^{\rm T}\ell^{(2)}.
\ee 
The  \sbf number and user cache vectors are 
\begin{align}
\Fm & =[F(\vv_k)]_{k=1}^{t+1},\;  F(\vv_k) = \binom{q+1}{k-1}\binom{q}{t-k+1},\label{eq:F,gen pt} \\
\Fm_1 & =[F_1(\vv_k)]_{k=1}^{t+1},\;  F_1(\vv_k) = \binom{q}{k-2}\binom{q}{t-k+1},\label{eq:F1,gen pt} \\
\Fm_2 & =[F_2(\vv_k)]_{k=1}^{t+1},\;  F_2(\vv_k) = \binom{q+1}{k-1}\binom{q-1}{t-k}.\label{eq:F2,gen pt}
\end{align}
Subtracting (\ref{eq:MC1,gen pt})  from (\ref{eq:MC2,gen pt}),  we get  
\be 
\label{eq:differential MC, gen pt}
\big(\bm{\alpha}^{(1)}\ell^{(1)}
+ \bm{\alpha}^{(2)}\ell^{(2)}\big)\Deltam^{\rm T}=0,
\ee
yielding  the packet  size ratio 
\be 
\label{eq:gamma, gen pt}
\gamma \eqdef  \frac{\ell^{(2)}}{\ell^{(1)}} 
= -\frac{\bm{\alpha}^{(1)}\Deltam^{\rm T}     }
{\bm{\alpha}^{(2)}\Deltam^{\rm T}  }, 
\ee
where  $ \Deltam=\Fm_2- \Fm_1$ is the  user cache \diffce vector. 
Substituting (\ref{eq:gamma, gen pt}) in (\ref{eq:L,gen pt}), the packet sizes are obtained as
\be 
\label{eq:pkt size, gen pt}
\ell^{(1)} = \frac{L}{(\bm{\alpha}^{(1)} + \gamma \bm{\alpha}^{(2)} )\mathbf{F}^{\rm T}},
\;\; 
\ell^{(2)}= \gamma  \ell^{(1)}.
\ee

\tbf{File splitting.}
Under the above \PTd, each file is split into $\Fpt$ (see (\ref{eq:Fpt,gen pt})) \pkts, including $\alphabm^{(1)}\Fm^{\rm T}$ \pkts of size  $\ell^{(1)}$ and $\alphabm^{(2)}\Fm^{\rm T}$ \pkts of size  $\ell^{(2)}$ bits. \Ip, 
\begin{align}
\label{eq:Wn splitting,gen pt}
W_n= \lef( \bigcup\nolimits_{k\in [2:t+1]} \Pc_k^{(1)}    \rig) \cup \lef( \bigcup\nolimits_{k \in [2:t]  }\Pc_k^{(2)}    \rig),
\end{align}
where
\be 
\Pc_k^{(i)} \eqdef  \lef\{ W_{n, 
 \Tc}^{(i),j}:j \in \big[\alphabm^{(i)}(k)\big], \type(\Tc)= \vv_k^{(i)}, n\in  [N]      \rig\}
\ee 
denotes the set of \tp-$\vv_k^{(i)}$ \pkts and $|\Pc_k^{(i)}|=N \alphabm^{(i)}(k)F(\vv_k),k\in [t+1], i \in[2]$.

The cache  placement and \mtcst \deli phases are described  \af.

\tbf{Cache placement.}
\User{u\in [K]} stores all \pkts $W_{n, \Tc}^{(i),j}$
where $ u \in \Tc$. 
Let $\Pc_{k, u}^{(i)} \eqdef  \big\{w_\Tc \in  \Pc_{k}^{(i)}: u \in \Tc \big\}$ denote  the subset of \tp-$\vv_k^{(i)}$ \pkts whose support set $\Tc$ includes $k$, where $|\Pc_{k, u}^{(i)}| = \alphabm^{(i)}(k)F_j(\vv_k)  $ if $u \in \Qc_j, j=1,2$. Then, 
\be 
\label{eq:Zk,gen pt}
Z_u= 
\lef( \bigcup\nolimits_{k\in [2:t+1]} \Pc_{k,u}^{(1)}    \rig) \cup \lef( \bigcup\nolimits_{k \in [2:t]  }\Pc_{k,u}^{(2)}    \rig), u\in[K]
\ee
It can be verified that $|Z_u|=ML,\forall u\in[K]$. The proof is given in Appendix~\ref{sec:proof of Zu MC, appendix}.

\tbf{\Deli \schm.}
Let the users' demands be $\dv=(d_1,\cdots, d_K)$. 
The \mtcst \deli phase consists of two rounds, each corresponding to one coupled group. The first round follows the \tx \selec in (\ref{eq:tx select G1, gen PT}) and \invos only \pkt \tps in $\Gc_1$.  
The second round follows (\ref{eq:tx select G2, gen PT}) and \invos \pkt \tps from $\Gc_2$.

To simplify exposition, note  that 
the transmitter selections in (\ref{eq:tx select G1, gen PT}) and (\ref{eq:tx select G2, gen PT}) coincide for  
types $\sv_2, \cdots, \sv_{r+1}$ and $\sv_{t+2}$ in both $\Gc_1$ and $\Gc_2$, and differ only for types $\sv_{r+2}, \cdots, \sv_{t+1}$. 
Since $\sv_1$ \invos only excluded \tp-$\vv_1$  \sbfs, \tp-$\sv_1$ \mgrps are omitted from \deli.

1) \tit{\Deli in $\sv_2, \cdots, \sv_{r+1}$ and $\sv_{t+2}$:}  
\Accrdto (\ref{eq:alpha^(2), gen PT}), \tp-$\vv_{t+1}^{(2)}$ \pkts are excluded. Thus,  the \deli for \tp-$\sv_{t+2}$ (note that $ \Ic_{t+2}=\{\vv_{t+1}\}$) under $\Gc_2$ is omitted.

\textbullet\tit{ \Deli in $\sv_2, \cdots, \sv_{r+1}$.}
    Consider a \tp-$\sv_k=( (k-1)^\dagger, t-k+2  )$ \mgrp $\Sc$ where $|\Sc\cap \Qc_1|=k-1$,
    $|\Sc\cap \Qc_2|=t-k+2$. The users in $\Sc \cap \Qc_1$ are selected as \txs. Under $\Gc_i, i=1,2$, each  \user{u\in \Sc \cap \Qc_1} sends
    \be
    \label{eq:cmsg, sv_2,...,sv_r+1,gen pt}
    \Bigg( \bigoplus_{u^\prime  \in \Sc \cap \Qc_1 \bksl \{u \}   }    W_{d_{u^\prime }, \Sc  \bksl \{u^\prime \}}^{(i), \bm{\pi}_{u^\prime}(u^\prime )} \Bigg) 
    \oplus 
    \Bigg( \bigoplus_{v\in \Sc \cap \Qc_2}    W_{d_{v}, \Sc  \bksl \{v\}}^{(i), \bm{\pi}(v)} \Bigg)  ,
    \ee 
    where, for each $u \in \Sc \cap \Qc_1$,     $\bm{\pi}_u$ is a random bijection from $\Sc \cap \Qc_1 \bksl \{u\}$ to $(1, \cdots,k-2)$, and  $\bm{\pi} $ is a random bijection from $\Sc \cap \Qc_1$ to $(1, \cdots,k-1)$.
    The random bijections coordinate packet assignments across multicast messages, ensuring each packet is delivered exactly once without redundancy.
    From (\ref{eq:cmsg, sv_2,...,sv_r+1,gen pt}), each user in $\Sc\cap \Qc_1$ decodes $k-2$ \tp-$\vv_{k-1}^{(i)}$ \pkts, whereas each user in $\Sc\cap \Qc_2$ decodes $k-1$ \tp-$\vv_{k}^{(i)}$ \pkts.

\textbullet\tit{ \Deli in $\sv_{t+2}$.}  Under $\Gc_1$, all users in each \tp-$\sv_{t+2}=((t+1)^\dagger,0)$ \mgrp $\Sc$ are chosen as \txs. Each \user{u\in \Sc} sends
\be 
\label{eq:cmsg for s_t+2,gen pt}
\bigoplus\nolimits_{u^\prime \in \Sc \bksl  \{u\} } W_{d_{u^\prime}, \Sc \bksl \{u^\prime\}  }^{(1), \bm{\pi}_{u^\prime}(u^\prime)   },
\ee 
where, for each $u\in \Sc$, $\bm{\pi}_{u}$ is a random bijection from $\Sc$ to $(1,\cdots,t)$. 
From (\ref{eq:cmsg for s_t+2,gen pt}), each user decodes $t$ \tp-$\vv_{t+1}^{(1)}$ \pkts.

2) \tit{\Deli in $\sv_{r+2}, \cdots, \sv_{t+1}$:}
Consider a \tp-$\sv_k=(k-1, t-k+2)$ \mgrp 
$\Sc$ where  $|\Sc \cap \Qc_1|=k-1$, $|\Sc \cap \Qc_2|=t-k+2$ and \invod \sbf types $\Ic_k=\{\vv_{k-1},  \vv_k\}$ for some $k\in [r+2:t+1]$.

\textbullet\tit{ \Deli for $\Gc_1$.} The \tx \selec is $\sv_k=((k-1)^\dagger, t-k+2)$, implying that the users in $\Sc\cap \Qc_1$ are selected as \txs. 
\Ip, each \user{u \in \Sc\cap \Qc_1} sends
\be
\label{eq:cmsg sk G1,deli schm,gen pt}
\Bigg( \bigoplus_{u^\prime  \in \Sc \cap \Qc_1 \bksl \{u \}   }    W_{d_{u^\prime }, \Sc  \bksl \{u^\prime \}}^{(1), \bm{\pi}_{u^\prime}(u^\prime )} \Bigg) 
\oplus 
\Bigg( \bigoplus_{v\in \Sc \cap \Qc_2}    W_{d_{v}, \Sc  \bksl \{v\}}^{(1), \bm{\pi}(v)} \Bigg)  ,
\ee 
where the random bijections $\{\bm{\pi}_u\}_{u \in\Sc \cap \Qc_1}$ and $\bm{\pi}$ are  defined similarly to that in
(\ref{eq:cmsg, sv_2,...,sv_r+1,gen pt}).
From (\ref{eq:cmsg sk G1,deli schm,gen pt}), each user in $\Sc \cap  \Qc_1$ decodes $k-2$ \tp-$\vv_{k-1}^{(1)}$ \pkts, whereas each user in $\Sc \cap  \Qc_2$ decodes $k-1$ \tp-$\vv_{k}^{(1)}$ \pkts.

\textbullet\tit{ \Deli for $\Gc_2$.} The \tx \selec is $\sv_k=(k-1, (t-k+2)^\dagger)$ \soth the \txs are users in $\Sc\cap \Qc_2$. Each \user{v\in \Sc \cap \Qc_2} sends
\be
\label{eq:cmsg sk G2,deli schm,gen pt}
\Bigg( \bigoplus_{u \in \Sc \cap \Qc_1  }    W_{d_u, \Sc  \bksl \{u\}}^{(2), \bm{\pi}(u)}  \Bigg) 
\oplus
\Bigg( \bigoplus_{v^\prime \in \Sc \cap \Qc_2 \bksl \{v\}   }    W_{d_{v^\prime}, \Sc  \bksl \{v^\prime\}}^{(2), \bm{\pi}_{v^\prime}(v^\prime)}  \Bigg), 
\ee
where, for each $v \in \Sc\cap \Qc_2$, $\bm{\pi}_v$ is a random bijection from $\Sc\cap \Qc_2 \bksl\{ v\}$ to $(1, \cdots, t-k+1)$, and  $\bm{\pi} $ is a random bijection from $\Sc \cap \Qc_2$ to $(1, \cdots,t-k+2)$. 
From (\ref{eq:cmsg sk G2,deli schm,gen pt}), each user in $\Sc\cap  \Qc_1 $ decodes $t-k+2$ \tp-$\vv_{k-1}^{(2)}$ \pkts, whereas each user in n $\Sc\cap  \Qc_2 $ decodes $t-k+1$ \tp-$\vv_k^{(2)}$ \pkts.

\tbf{Proof of correctness.}
\Accrdto (\ref{eq:Wn splitting,gen pt}) and (\ref{eq:Zk,gen pt}), each \user{u \in \Qc_j,j=1,2} needs to recover the following \pkts from the desired  file $W_{d_u}$:
$| \Pc_{k}^{(1)} \bksl  \Pc_{k,u}^{(1)}|/N=  \alphabm^{(1)}(k)(F(\vv_k) - F_j(\vv_k)) $ \tp-$\vv_k^{(1)}$ \pkts  for all  $k \in [2:t+1]$, and  
$| \Pc_{k}^{(2)} \bksl  \Pc_{k,u}^{(2)}|/N=  \alphabm^{(2)}(k)(F(\vv_k) - F_j(\vv_k)) $ \tp-$\vv_k^{(2)}$ \pkts  for all  $k \in [2:t]$. 

Next we prove that \user{u} can indeed recover all the desired \pkts. \Wlog, suppose $u\in \Qc_1$.

\textbullet\tbf{\;Recovery from  $\sv_2, \cdots, \sv_{r+1}$.}
From each \tp-$\sv_k,k\in [2:r+1]  $ \mgrp $\Sc \ni u$, \user{u} decodes $k-2$ \tp-$\vv_{k-1}^{(1)}$
and \tp-$\vv_{k-1}^{(2)}$
\pkts, \resp, 
\accrdto (\ref{eq:cmsg, sv_2,...,sv_r+1,gen pt}). There are $\binom{|\Qc_1|-1}{k-2}\binom{|\Qc_2|}{t-k+2}=\binom{q}{k-2}\binom{q}{t-k+2}  $ such \grps, so \user{u} decodes in total   $(k-2)\binom{q}{k-2}\binom{q}{t-k+2} $ \tp-$\vv_{k-1}^{(i)}$ \pkts for each $i=1,2$.
On the  other hand, 
${| \Pc_{k-1}^{(i)} \bksl  \Pc_{k-1,u}^{(i)}|}/{N} = (k-2)\big( \binom{q+1}{k-2}\binom{q}{t-k+2} -\binom{q}{k-3}\binom{q}{t-k+2}    \big)= (k-2)\binom{q}{k-2}\binom{q}{t-k+2},i=1,2 $. Hence,
the number of recovered \pkts matches the number of desired \tp-$\vv_{k-1}^{(i)}$ \pkts.

\textbullet\tbf{\;Recovery from  $ \sv_{t+2}$.} 
From each \tp-$\sv_{t+2}$ \mgrp $\Sc \ni u$, \user{u} decodes $t$ \tp-$\vv_{t+1}^{(1)}$ \pkts \accrdto (\ref{eq:cmsg for s_t+2,gen pt}). Since there are $ \binom{|\Qc_1|-1}{|\Sc|-1}=\binom{q}{t}$ such \grps, \user{u} decodes in total   $t \binom{q}{t}  $ \tp-$\vv_{t+1}^{(1)}$ \pkts.
On the  other hand, 
 ${| \Pc_{t+1}^{(1)} \bksl  \Pc_{t+1,u}^{(1)}|}/{N} =t \big( \binom{q+1}{t}\binom{q}{0}  - \binom{q}{t-1}\binom{q}{0} \big)=t\binom{q}{t}$. Hence,
 the number of recovered \pkts matches the number of desired \tp-$\vv_{t+1}^{(1)}$ \pkts. All such \pkts are thus recovered.

\textbullet\tbf{\;Recovery from $ \sv_{r+2}, \cdots, \sv_{t+1}$.}
For each $k \in [r+2:t+1]$, there are $\binom{q}{k-2}\binom{q}{t-k+2}  $ \tp-$\sv_k $ \mgrps containing  \user{u}, from each of which \user{u} decodes  $k-2$ 
\tp-$\vv_{k-1}^{(1)}$ \pkts (see (\ref{eq:cmsg sk G1,deli schm,gen pt})) and  $t-k+2$ \tp-$\vv_{k-1}^{(2)}$ \pkts (see (\ref{eq:cmsg sk G2,deli schm,gen pt})).
Thus, in total $(k-2)\binom{q}{k-2}\binom{q}{t-k+2}  $ \tp-$\vv_{k-1}^{(1)}$ and $(t-k+2)\binom{q}{k-2}\binom{q}{t-k+2}  $ \tp-$\vv_{k-1}^{(2)}$ \pkts are decoded. 
\Movr, 
$|\Pc_{k-1}^{(1)} \bksl \Pc_{k-1,u}^{(1)}|/N=
(k-2)\binom{q}{k-2}\binom{q}{t-k+2}$ and 
$|\Pc_{k-1}^{(2)} \bksl \Pc_{k-1,u}^{(2)}|/N=
(t-k+2)\binom{q}{k-2}\binom{q}{t-k+2}$.
Hence, the number of recovered \pkts matches the desired \pkts. 

\Aar, we have proved that each  user in $\Qc_1$ can recover all the missing \pkts of  the desired file. The proof  for $\Qc_2$ is similar. Since each \cmsg is  useful to $t$ other users, the \comm rate $\Rjcm=N/M-1$ is \achvd.

\subsection{\Sbp Analysis}
\label{subsec:subpkt analysis, gen pt}

The proposed \PTd \achvs \sbp  
$\Fpt= \sum_{k=1}^{t+1} \alpha_k f_k$, where
$\alpha_k \eqdef \alphaglobal(k)$ (see (\ref{eq:alpha^global, gen PT})) and $f_k \eqdef F(\vv_k)$ (see (\ref{eq:F,gen pt}))
denote  the \gfsf  and number of \tp-$\vv_k$ \sbfs, \resp, for $k\in [t+1]$.
The JCM \sbp can be written as (see \Rmk \ref{rmk:JCM sbp identity})
\be
\label{eq:JCM sbp, sbp analysis,gen pt}
\Fjcm= \sum\nolimits_{k=1}^{t+1} \alphabm_{\rm JCM}(k)F(\vv_k),
\ee 
where  $\alphabm_{\rm JCM}  \eqdef (t, \cdots,t)$ is the JCM FS vector. 
Comparing (\ref{eq:Fpt,gen pt}) and (\ref{eq:JCM sbp, sbp analysis,gen pt})  suggests  $\Fpt \le \Fjcm$ since $ \alphaglobal \preceq \alphabm_{\rm JCM}      $.  \Thf, both the \ssgain and \fssgain are attainable: the former arises from excluding \tp-$\vv_1$ \sbfs, while the latter results from smaller FS factors than the \jsch.
As shown in Fig.~\ref{fig:alphas comparison,gen pt}, the  \sbp ratio $ \Fpt/\Fjcm$ \decrss in $q$, which is formally proved in the following lemma.

\begin{lemma}
\label{lemma:monotonicity of Fpt/Fjcm}
\tit{Define $\rho_t(q)\eqdef  \frac{F_{\rm PT}(q,t)}{F_{\rm JCM}(q,t)}   $ as the \sbp ratio of PT (cf.  (\ref{eq:Fpt, thm})) over  JCM (cf. (\ref{eq:Fjcm,thm implication})) for given $t$ and $q$. Then, for any fixed $t$, $\rho_t(q)$ is strictly decreasing in $q$.
}
\end{lemma}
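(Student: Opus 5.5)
The plan is to recast $1-\rho_t(q)$ as an expectation under a hypergeometric law and then show that the relevant lower-tail probabilities increase strictly with $q$.

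\emph{Step 1 (probabilistic form).} By Vandermonde, $\sum_{k=1}^{t+1} f_k=\binom{2q+1}{t}$, so $\rho_t(q)=\sum_k \frac{\alpha_k}{t}\, p_q(k-1)$. Here $p_q(x)=\binom{q+1}{x}\binom{q}{t-x}/\binom{2q+1}{t}$ is the pmf of $X_q\sim$ Hypergeometric (population $2q+1$, $q+1$ ``successes'' $=\Qc_1$, $t$ draws); that is, $X_q=|\Tc\cap\Qc_1|$ for a uniformly random $\Tc\in\binom{[K]}{t}$. Since $\alpha_k=t$ for $k\ge r+1$ and $\alpha_k=2(k-1)$ for $k\le r$, we get
\[
1-\rho_t(q)=\mathbb{E}\big[g(X_q)\big],\qquad g(x)=\tfrac{t-2x}{t}\mathbbm{1}\{x<r\}.
\]
The key observation is that $g(x)=\frac{2}{t}\,\#\{j\in[r]: x<j\}$. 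Hence
\[
1-\rho_t(q)=\frac{2}{t}\sum\nolimits_{j=0}^{r-1}\Pr(X_q\le j),
\]
and it suffices to prove that $\Pr(X_q\le j)$ is strictly increasing in $q$ for every $j\le r-1<t/2$.

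\emph{Step 2 (reduce to symmetric hypergeometrics).} A direct likelihood-ratio comparison of $X_{q+1}$ and $X_q$ fails: the ratio $p_{q+1}(x)/p_q(x)$ is not monotone in $x$, since the denominator $(q+2-x)(q+1-t+x)$ peaks near $x=(t+1)/2$. So instead I would single out one fixed user $u_0\in\Qc_1$ and condition on whether $u_0\in\Tc$. This gives the mixture
\[
\Pr(X_q\le j)=\Big(1-\tfrac{t}{2q+1}\Big)\Pr(H^{(t)}_q\le j)+\tfrac{t}{2q+1}\Pr(H^{(t-1)}_q\le j-1),
\]
where $H^{(s)}_q\sim$ Hypergeometric($2q$, $q$, $s$) is symmetric about $s/2$.

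\emph{Step 3 (monotonicity of the pieces).} Three facts would then be combined, with $j<t/2$ throughout.
\begin{itemize}
\item[(a)] For fixed $s$ and $j<s/2$, $\Pr(H^{(s)}_q\le j)$ is nondecreasing in $q$. The symmetric hypergeometric spreads out toward a binomial as $q$ grows (variance $\frac{s}{4}\cdot\frac{2q-s}{2q-1}$ increases), and I would prove the tail statement by checking that $\Pr(H_{q+1}=x)/\Pr(H_q=x)$ is symmetric about $s/2$ and decreasing in $|x-s/2|$'s complement; this yields a unimodal-symmetric ``more dispersed'' order that gives the tail inequality.
\item[(b)] $\Pr(H^{(t)}_q\le j)\ge \Pr(H^{(t-1)}_q\le j-1)$, since adding one draw can increase the count by at most one. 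This makes the mixture increase when the weight $t/(2q+1)$ on the second term decreases in $q$.
\item[(c)] Strictness comes from the $j=0$ term, or from (b) being strict whenever $\Pr(X=j)>0$.
\end{itemize}

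\emph{Main obstacle.} I expect fact (a) to be the hard step: a clean proof that symmetric hypergeometric lower tails below the center increase with population size. If the ratio argument is messy, I would fall back on a coupling. Realize $H_{q+1}$ by adding one success and one failure to the population, and show that the added pair can only push the count away from the center in distribution, i.e.\ a mean-preserving spread of $H_q$.
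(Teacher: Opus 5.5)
Your proposal is correct, and it takes a genuinely different route from the paper's — one that is in fact sounder. Both arguments start from $\rho_t(q)=\mathbb{E}[h(J_q)]$ with $J_q\sim\mathrm{Hypergeo}(2q+1,q+1,t)$. The paper then finishes in one line: it asserts the full stochastic dominance $J_{q+1}\prec_{\rm st}J_q$ (citing the decreasing ``population odds'' $1+1/q$) and applies the nondecreasing $h$. That dominance does not hold. The U-shape of $p_{q+1}(x)/p_q(x)$ around $x=(t+1)/2$ that you identified also fattens the upper tail. For example, with $t=4$ one has $\Pr(J_5=4)=15/330<35/715=\Pr(J_6=4)$, so $\Pr(J_6\ge 4)>\Pr(J_5\ge 4)$ already at $q=t+1$. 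The lemma survives only because $h$ is constant on $[r:t]$. Your identity $1-\rho_t(q)=\frac{2}{t}\sum_{j=0}^{r-1}\Pr(X_q\le j)$ isolates exactly the lower tails ($j<t/2$), which do move monotonically. So the paper's route is shorter but rests on a false shortcut. Yours costs a mixture decomposition and a dispersion lemma, but it actually proves the statement.

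Your step (a) is easier than you fear. Write $c=q+1-s/2$ and $y=x-s/2$. Then $\Pr(H^{(s)}_{q+1}=x)/\Pr(H^{(s)}_q=x)\propto 1/\bigl((q+1-x)(q+1-s+x)\bigr)=1/(c^2-y^2)$, which is symmetric and increasing in $|y|$; your phrase ``decreasing in the complement'' should say this. Hence $|H_{q+1}-s/2|$ dominates $|H_q-s/2|$ in the likelihood-ratio order. By symmetry, $\Pr(H\le j)=\frac{1}{2}\Pr(|H-s/2|\ge s/2-j)$ for $j<s/2$, and (a) follows with no coupling needed.

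To combine the pieces, set $w_q=t/(2q+1)$, $a_q=\Pr(H^{(t)}_q\le j)$ and $b_q=\Pr(H^{(t-1)}_q\le j-1)$. Then $\Pr(X_{q+1}\le j)\ge(1-w_{q+1})a_q+w_{q+1}b_q\ge(1-w_q)a_q+w_qb_q=\Pr(X_q\le j)$, where the second step uses (b) and $w_{q+1}<w_q$. For $j=0$ you have $b_q=0<a_q$ whenever $q\ge t$, so that step is strict. This gives strict decrease of $\rho_t$ throughout the paper's regime $q\ge t+1$.
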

\begin{IEEEproof}
See \App \ref{sec:proof of ratio monocity,app}.
\end{IEEEproof}

\begin{remark}[Subfile Count Identity]
\label{rmk:JCM sbp identity} 
\tit{The identity $\sum_{k=1}^{t+1}F(\vv_k) = \binom{K}{t}$ holds for any $K$ and $t$. To prove this, define $\Vc_k \eqdef  \{\Tc\subseteq [K]  : \type(\Tc)=\vv_k  \}$ as the set of  \tp-$\vv_k$ $t$-\sbsts of $[K]$. By definition, $|\Vc_k|=F(\vv_k),\forall k \in[t+1]$.
First, every $t$-\sbst  $\Tc$ has a unique  \tp $\vv_k$ for some  $k \in [t+1]$. Hence, $\binom{K}{t}\le \sum_{k=1}^{t+1}F(\vv_k)$.
Conversely,
\sbsts of \diff \tps are disjoint,  \ie, $  \Vc_i \cap \Vc_j=\emptyset, \forall i\ne  j$. \Thf, $ \sum_{k=1}^{t+1}F(\vv_k) \le \binom{K}{t}$. 
Combining the two inequalities yields $\sum_{k=1}^{t+1}F(\vv_k)= \binom{K}{t}$.
\Aar, $\Fjcm$ can be written as $\Fjcm=t \sum_{k=1}^{t+1}F(\vv_k)$.}
\end{remark}

Next, we  derive the \asympt bound (\ref{eq:asymp bound,thm}) in \Thm \ref{thm:thm}.
Note  that $\Fpt$ (see (\ref{eq:Fpt,gen pt}))  can be written as 
$ \Fpt = \sum_{k=1}^{r}2(k-1)F(\vv_k) + t\big(\sum_{k=r+1}^{t+1}F(\vv_k)\big)$. 
Thus,
\begin{align}
\label{eq:Fjcm-Fpt,sbp analysis,gen pt}
\Fjcm -\Fpt & \overset{(\ref{eq:JCM sbp, sbp analysis,gen pt})}{=} t \lef( \sum\nolimits_{k=1}^{t+1}F(\vv_k)\rig) -  \Fpt \notag\\
& =\sum\nolimits_{k=1}^{t+1}\lef(t-2(k-1)\rig)  F(\vv_k),
\end{align}
where $F(\vv_k) \overset{(\ref{eq:F,gen pt})}{=}\binom{q+1}{k-1}\binom{q}{t-k+1}=\frac{\prod_{i=0}^{k-2}(q+1-i)\prod_{i=0}^{t-k}(q-i) } {(k-1)!(t-k+1)!}$,  $k \in [t+1]$.
Fix $t$.
For sufficiently large $q$,
$F(\vv_k) \approx \frac{q^t}{(k-1)!(t-k+1)!}$.  
\Thf, 
\begin{subequations}
\label{eq:lim (1-Fpt/Fjcm)}
\begin{align}
& \lim_{q\to\infty}\left(1-   \frac{F_{\rm PT}}{F_{\rm JCM}}\right)
\overset{(\ref{eq:Fjcm-Fpt,sbp analysis,gen pt})}{=}
\lim\limits_{q\to\infty}\frac{1}{t}\frac{\sum_{k=1}^{r} \frac{\left(t-2(k-1)\right)q^t}{(k-1)!\left(t-(k-1)\right)!}     }{\sum_{k=1}^{t+1}\frac{q^t}{(k-1)!\left(t-(k-1) \right)!}} \notag \\
 & = 
\frac{1}{t}\frac{\sum_{k=1}^{r} \frac{t-2(k-1)}{(k-1)!\left(t-(k-1)\right)!}  }{\sum_{k=1}^{t+1}\frac{1}{(k-1)!\left(t-(k-1) \right)!}}=
\frac{1}{t}\frac{\sum_{k=0}^{r-1} \frac{t-2k}{k!\left(t-k\right)!}     }{\sum_{k=0}^{t}\frac{1}{k!\left(t-k\right)!}} \\
 &=
\frac{1}{t}\frac{\sum_{k=0}^{r-1} \frac{(t-2k)t!}{k!\left(t-k\right)!}     }{\sum_{k=0}^{t}\frac{t!}{k!\left(t-k\right)!}}=
\frac{1}{t}\frac{\sum_{k=0}^{r-1}(t-2k)\binom{t}{k}       }{\sum_{k=0}^{t}\binom{t}{k}} \overset{\trm{(a)}}{=} \frac{\binom{t-1}{r-1}}{2^t}\\
& \overset{\trm{(b)}}{=}  \frac{\frac{r}{t}\binom{t}{r}}{2^t}
=\frac{1}{2}\frac{\binom{t}{r}}{2^t} 
\Rightarrow 
\lim_{q\to\infty}\frac{F_{\rm PT}}{F_{\rm JCM}}= 1 - \frac{1}{2}\cdot \frac{\binom{t}{r}}{2^t}, 
\end{align}
\end{subequations}
where in step (b) we used the identity 
$ \binom{t-1}{r-1}=\frac{r}{t}\binom{t}{r}   $; (a) is \bcuz 
$ \sum_{k=0}^{r-1}(t-2k)\binom{t}{k} =t\binom{t-1}{r-1}$, which is proved \af.

\tit{Proof of step (a):}
Let $S\eqdef  \sum_{k=0}^{r-1}(t-2k)\binom{t}{k}= t\sum_{k=0}^{r-1}\binom{t}{k}-2\sum_{k=0}^{r-1}k\binom{t}{k} $. Using  the identity $k\binom{t}{k}=t\binom{t-1}{k-1}$, we have 
$ \sum_{k=0}^{r-1}k\binom{t}{k}=t\sum_{k=1}^{r-1}\binom{t-1}{k-1}= t\sum_{j=0}^{r-2}\binom{t-1}{j}$. Hence,
\be
\label{eq:S}
S = t\lef( \sum\nolimits_{k=0}^{r-1}\binom{t}{k} - 2 \sum\nolimits_{j=0}^{r-2}\binom{t-1}{j}  \rig).
\ee 
Applying the identity $\binom{t}{k}= \binom{t-1}{k} + \binom{t-1}{k-1}$,  
$\sum_{k=0}^{r-1}\binom{t}{k}
=
\sum_{k=0}^{r-1}\binom{t-1}{k} + \sum_{k=0}^{r-1}\binom{t-1}{k-1} 
=
\sum_{k=0}^{r-1}\binom{t-1}{k} + \sum_{j=0}^{r-2}\binom{t-1}{j}
$. 
Plugging this into (\ref{eq:S}), we obtain
\be
 S = t\lef( \sum_{k=0}^{r-1}\binom{t-1}{k} -  \sum_{j=0}^{r-2}\binom{t-1}{j}  \rig)=t\binom{t-1}{k-1}.
\ee

\tit{\Asympt scaling:}
Using Stirling's formula $n! =\sqrt{2\pi n}(n/e)^n(1+o(1))$~\cite{robbins1955remark}, we can easily obtain $ \binom{t}{r}/2^t \approx \sqrt{ {2}/{(\pi t)}}$:
\begin{align}
\frac{\binom{t}{r}}{2^t} & = \frac{(2r)!}{4^r(r!)^2 } 
= 
\frac{\sqrt{4\pi r}(2r/e)^{2r}(1+o(1)) } {4^r\cdot2\pi r(r/e)^{2r}(1+o(1))}  \approx \sqrt{\frac{2}{\pi t}}.
\end{align}
Hence, $ \lim_{q\to\infty} \Fpt/\Fjcm \approx  1- \sqrt{1/(2\pi t)}= 1-\Theta(1/\sqrt{t})$, proving the \asympt scaling in \Thm \ref{thm:thm}.

\subsection{Validation of Packet Size Ratio}
\label{subsec:pkt size ratio validation,gen pt}

We prove that the \pkt size ratio $ \gamma=\ell^{(2)}/\ell^{(1)}$ obtained in (\ref{eq:gamma, gen pt}) is valid, \ie, $\gamma$ is a positive  rational number. 
Since both the numerator and denominator are integers,  $\gamma$ is rational. We only need to prove $\gamma>0$.

\begin{lemma}
\label{lemma:valid gamma,gen pt}
\textit{For the general PT design 
in Section~\ref{subsec: general pt design}, 
the packet size ratio (\ref{eq:gamma, gen pt}) is valid, i.e., 
$\gamma=-\frac{\bm{\alpha}^{(1)}\bm{\Delta}^{\rm T}      }{\bm{\alpha}^{(2)}\bm{\Delta}^{\rm T}} >0$.
}\end{lemma}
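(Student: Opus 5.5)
The plan is to show that the two inner products have opposite signs, namely
\begin{align}
\bm{\alpha}^{(1)}\bm{\Delta}^{\rm T}<0, \qquad \bm{\alpha}^{(2)}\bm{\Delta}^{\rm T}>0,
\end{align}
which gives $\gamma>0$. This matches the pattern of Example~\ref{example:t=2}, where the two products were $1-2q$ and $1$. Rather than manipulating the binomial expressions (\ref{eq:F1,gen pt})--(\ref{eq:F2,gen pt}) directly, I would read $\bm{\alpha}\Fm_j^{\rm T}$ combinatorially. Fix a user $u\in\Qc_1$ and a user $v\in\Qc_2$. Then $\bm{\alpha}\Fm_1^{\rm T}=\sum_{\Tc\ni u} g(\Tc)$ and $\bm{\alpha}\Fm_2^{\rm T}=\sum_{\Tc\ni v} g(\Tc)$, where $\Tc$ ranges over $t$-subsets of $[K]$. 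For a type-$\vv_k$ set $\Tc$, the weight is $g(\Tc)=\bm{\alpha}(k)$ with $k-1=|\Tc\cap\Qc_1|$. By (\ref{eq:alpha^(1), gen PT}) and (\ref{eq:alpha^(2), gen PT}), $\bm{\alpha}^{(1)}$ corresponds to $g_1(\Tc)=|\Tc\cap\Qc_1|$, and $\bm{\alpha}^{(2)}$ corresponds to $g_2(\Tc)=\min(|\Tc\cap\Qc_1|,|\Tc\cap\Qc_2|)$.

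\emph{First product.} By double counting, $\sum_{\Tc\ni u}|\Tc\cap\Qc_1| = \binom{K-1}{t-1}+q\binom{K-2}{t-2}$ and $\sum_{\Tc\ni v}|\Tc\cap\Qc_1|=(q+1)\binom{K-2}{t-2}$. Subtracting and using Pascal's rule gives
\begin{align}
\bm{\alpha}^{(1)}\bm{\Delta}^{\rm T}=\binom{K-2}{t-2}-\binom{K-1}{t-1}=-\binom{K-2}{t-1}<0.
\end{align}
As a sanity check, for $t=2$ this equals $-(K-2)=1-2q$, consistent with the example.

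\emph{Second product.} I would use a swap coupling between the $t$-sets containing $v$ and those containing $u$. Sets containing both $u$ and $v$ appear in both sums and cancel. A set $\Tc\ni u$ with $v\notin\Tc$ is mapped to $\Tc'=(\Tc\setminus\{u\})\cup\{v\}$, which is a bijection onto the sets containing $v$ but not $u$. Let $a=|\Tc\cap\Qc_1|\ge1$ and $b=t-a$. Then $g_2(\Tc')-g_2(\Tc)=\min(a-1,b+1)-\min(a,b)$. Since $t=2r$ is even, the case $a=b+1$ cannot occur, so this difference is $+1$ when $a\ge r+1$ and $-1$ when $a\le r$. The number of such $\Tc$ with a given $a$ is $\binom{q}{a-1}\binom{q-1}{t-a}$, because the remaining members avoid both $u$ and $v$. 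Writing $j=a-1$, we get
\begin{align}
\bm{\alpha}^{(2)}\bm{\Delta}^{\rm T}=\sum_{j=r}^{t-1}\binom{q}{j}\binom{q-1}{t-1-j}-\sum_{j=0}^{r-1}\binom{q}{j}\binom{q-1}{t-1-j}.
\end{align}

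The main step is showing this difference is strictly positive. I would pair the negative term indexed by $j\le r-1$ with the positive term indexed by $t-1-j\ge r$; this pairing is a bijection between the two index ranges. Using $\binom{q}{m}=\frac{q}{q-m}\binom{q-1}{m}$, the positive term equals $\frac{q}{q-(t-1-j)}\binom{q-1}{j}\binom{q-1}{t-1-j}$, while the negative term equals $\frac{q}{q-j}\binom{q-1}{j}\binom{q-1}{t-1-j}$. Since $t-1-j>j$, each positive term strictly dominates its partner whenever the common factor is nonzero, and that factor is nonzero under $q\ge t+1$. Therefore $\bm{\alpha}^{(2)}\bm{\Delta}^{\rm T}>0$; for $t=2$ it equals $1$, as in the example. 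Combining the two signs gives $\gamma>0$, and $\gamma$ is rational as already noted.
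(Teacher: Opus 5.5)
Your proof is correct, and it takes a genuinely different route from the paper.

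\textbf{How the paper argues.} The paper never evaluates the two inner products. It argues from the shape of $\bm{\Delta}$ instead.
For $\bm{\alpha}^{(1)}$: Claim~1 (the first $r+1$ entries of $\bm{\Delta}$ are positive, the last $r$ negative) and Claim~2 (zero sum) give $\bm{\alpha}^{(1)}\bm{\Delta}^{\rm T}<0$ by a one-sign-change bound against the increasing weights.
For $\bm{\alpha}^{(2)}$: it folds $\bm{\Delta}$ into $\delta_{t,q}(k)=\bm{\Delta}(k)+\bm{\Delta}(t+2-k)$. Through Claims~3--4 (the ratio $V(k)$ and the threshold $\varphi_t(k)$) it shows that this folded sequence changes sign exactly once, at some $k_{\max}$. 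It then shifts $\bm{\alpha}^{(2)}$ by $k_{\max}$ and uses symmetry so that every term is nonnegative.

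\textbf{What your coupling does instead.} Your swap coupling $\Tc\mapsto(\Tc\setminus\{u\})\cup\{v\}$ is in effect a summation-by-parts identity. For any weight $h$ depending only on $a=|\Tc\cap\Qc_1|$, the inner product equals $\sum_{j=0}^{t-1}w_j\bigl(h(j)-h(j+1)\bigr)$ with $w_j=\binom{q}{j}\binom{q-1}{t-1-j}$. Equivalently, $\bm{\Delta}(k)=w_{k-1}-w_{k-2}$, with $w_{-1}=w_t=0$.
This makes Claim~2 a telescoping sum. Claim~1 becomes the statement that $w_j$ rises up to $j=r$ and falls afterwards.

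\textbf{What your route buys.} It gives exact values: $\bm{\alpha}^{(1)}\bm{\Delta}^{\rm T}=-\binom{K-2}{t-1}$, and $\bm{\alpha}^{(2)}\bm{\Delta}^{\rm T}=\sum_{j=0}^{r-1}\binom{q-1}{j}\binom{q-1}{t-1-j}\bigl(\frac{q}{q-t+1+j}-\frac{q}{q-j}\bigr)$. Hence you get a closed form for $\gamma$.
Your pairing $j\leftrightarrow t-1-j$ produces differences that are all strictly positive, so no sign-change index $k_{\max}$ is needed.
The argument also covers $r=1$ uniformly. The paper's Claim~3 assumes $r\ge 2$ and leaves $t=2$ to Example~\ref{example:t=2}.

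\textbf{Generality.} The same pairing works for any weight that is symmetric under $a\leftrightarrow t-a$ and nondecreasing (and nonconstant) up to $a=r$. So you lose none of the generality of the paper's sign-pattern argument while avoiding its case analysis.
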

\begin{IEEEproof} 
We prove $\gamma >0$ by showing 
$\bm{\alpha}^{(1)}\bm{\Delta}^{\rm T}<0$ and
$\bm{\alpha}^{(2)}\bm{\Delta}^{\rm T}>0$, \resp. 
Claims~\ref{claim:1} and \ref{claim:2} (see below) are used  to prove $ \bm{\alpha}^{(1)}\bm{\Delta}_1^{\rm T}<0$, while Claim~\ref{claim:3} is needed to prove $\bm{\alpha}^{(2)}\bm{\Delta}_1^{\rm T}>0$.
The proofs of the claims are provided in Appendix~\ref{appendix: proof of calims}.
\end{IEEEproof}

\begin{claim}
\label{claim:1}
\textit{The user cache difference vector $\bm{\Delta} \eqdef \Fm_2 -\Fm_1$ (see (\ref{eq:F1,gen pt}), (\ref{eq:F2,gen pt})) satisfies $\Deltam(k)>0, \forall k \in[r+1]$ and
$\Deltam(k)<0, \forall k\in [r+2:t+1]$.
}\end{claim}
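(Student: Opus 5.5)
The plan is to write both entries of $\Deltam$ in closed form and compare them through their ratio, which turns out to be a simple rational function of $k$ and $q$. By (\ref{eq:F1,gen pt}) and (\ref{eq:F2,gen pt}), and writing $j\eqdef k-1\in[0:t]$ for the number of $\Qc_1$-users in a type-$\vv_k$ subset, we have
\be
\Deltam(k)=\binom{q+1}{j}\binom{q-1}{t-1-j}-\binom{q}{j-1}\binom{q}{t-j}.
\ee
Throughout we use the standing assumption $q\ge t+1$. Under it, every binomial coefficient above with a nonnegative lower index is strictly positive.

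First we dispose of the two boundary cases. For $j=0$ (i.e., $k=1$), the second term vanishes since $\binom{q}{-1}=0$. Hence $\Deltam(1)=\binom{q-1}{t-1}>0$, which is consistent with $1\le r+1$. For $j=t$ (i.e., $k=t+1\ge r+2$), the first term vanishes since $\binom{q-1}{-1}=0$. Hence $\Deltam(t+1)=-\binom{q}{t-1}<0$.

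For $j\in[1:t-1]$ both terms are positive, so the sign of $\Deltam(k)$ is decided by whether their ratio exceeds $1$. Using $\binom{q+1}{j}/\binom{q}{j-1}=\frac{q+1}{j}$ and $\binom{q-1}{t-1-j}/\binom{q}{t-j}=\frac{t-j}{q}$, the ratio of the first term to the second is $\frac{(q+1)(t-j)}{qj}$. Therefore
\be
\Deltam(k)>0 \iff (q+1)(t-j)>qj \iff q(t-2j)+(t-j)>0 .
\ee

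The last step is to split at $j=r$, recalling $t=2r$.

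If $j\le r$ (i.e., $k\le r+1$), then $t-2j\ge 0$ and $t-j>0$. So the expression $q(t-2j)+(t-j)$ is positive and $\Deltam(k)>0$.

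If $j\ge r+1$ (i.e., $k\ge r+2$), then $t-2j\le -2$, so $q(t-2j)+(t-j)\le -2q+t-j<0$. The final inequality uses $q\ge t+1>t/2$, and this is the only place where the size assumption on $q$ matters beyond the positivity of the binomials. Hence $\Deltam(k)<0$.

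Combined with the two boundary cases, this gives the claimed sign pattern. The only step needing care is the sign bookkeeping at the threshold $j=r+1$, together with checking that no binomial coefficient degenerates for $j\in[1:t-1]$. Both are handled by the evenness of $t$ and by $q\ge t+1$.
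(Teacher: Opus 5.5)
Your proof is correct and follows essentially the same route as the paper. The paper also treats $k=1$ and $k=t+1$ separately, and for the interior indices it factors $\Deltam(k)=\binom{q}{k-2}\binom{q}{t-k+1}\,\varepsilon_{t,q}(k)$ with $\varepsilon_{t,q}(k)=\frac{q+1}{q}\frac{t-k+1}{k-1}-1$, which is exactly your ratio minus one. The only difference is how the sign of $\varepsilon_{t,q}$ is found: the paper uses its monotonicity in $k$ together with its values at $k=r+1$ and $k=r+2$, whereas you read the sign directly off $q(t-2j)+(t-j)$, which is equivalent.
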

\begin{IEEEproof}
See Appendix~\ref{appendix: proof of claim 1}. By Claim~\ref{claim:1}, the first
$r+1$ entries of the  \seq  $(\Deltam(k))_{k=1}^{t+1}$ are positive, whereas the last $r$ entries are negative.
\end{IEEEproof}

\begin{claim}
\label{claim:2}
\textit{The sum of the entries of $\bm{\Delta}$ is equal to zero, \ie,
$\sum_{i=1}^{t+1}\Deltam(i)=0$. 
}\end{claim}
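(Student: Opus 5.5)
Claim~\ref{claim:2} states that $\sum_{k=1}^{t+1}\Deltam(k)=\sum_{k}F_2(\vv_k)-\sum_k F_1(\vv_k)=0$. The plan is to show that each of the two sums equals $\binom{K-1}{t-1}$. The cleanest route is a counting argument analogous to Remark~\ref{rmk:JCM sbp identity}.

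By definition, $F_j(\vv_k)$ is the number of type-$\vv_k$ $t$-subsets $\Tc$ that contain a fixed user $u\in\Qc_j$. Every $t$-subset has exactly one type among $\vv_1,\dots,\vv_{t+1}$, and the sets of subsets of distinct types are disjoint. Hence $\sum_{k=1}^{t+1}F_j(\vv_k)$ counts all $t$-subsets of $[K]$ containing $u$. This number is $\binom{K-1}{t-1}$, independent of $j$. Subtracting the two sums gives $\sum_k\Deltam(k)=0$.

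As a consistency check, I will also verify this algebraically from (\ref{eq:F1,gen pt}) and (\ref{eq:F2,gen pt}) using Vandermonde's identity. For $j=1$, reindex with $i=k-2$:
\begin{align*}
\sum_{k}\binom{q}{k-2}\binom{q}{t-k+1}=\sum_i\binom{q}{i}\binom{q}{t-1-i}=\binom{2q}{t-1}.
\end{align*}
For $j=2$, reindex with $i=k-1$:
\begin{align*}
\sum_{k}\binom{q+1}{k-1}\binom{q-1}{t-k}=\sum_i\binom{q+1}{i}\binom{q-1}{t-1-i}=\binom{2q}{t-1}.
\end{align*}
Both equal $\binom{K-1}{t-1}$ because $K-1=2q$.

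The only point needing care is the boundary terms. At $k=1$, $F_1(\vv_1)$ involves $\binom{q}{-1}=0$. At $k=t+1$, $F_2(\vv_{t+1})$ involves $\binom{q-1}{-1}=0$. With the standard convention that binomial coefficients with negative lower index vanish, the index ranges match the full Vandermonde sums. There is no real obstacle beyond this.
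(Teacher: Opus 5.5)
Your proposal is correct and is essentially the paper's proof: both show $\sum_k F_1(\vv_k)=\sum_k F_2(\vv_k)=\binom{K-1}{t-1}$, and your Vandermonde check, including the reindexing and the vanishing boundary terms, is exactly the paper's computation. Your primary counting argument (every $t$-subset containing $u$ has exactly one type) is just the combinatorial content of that Vandermonde step, in the spirit of Remark~\ref{rmk:JCM sbp identity}.
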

\begin{IEEEproof}
See Appendix~\ref{appendix: proof of claim 2}.
\end{IEEEproof}

With the above claims, we prove
$\bm{\alpha}^{(1)}\bm{\Delta}_1^{\rm T} <0 $ \af:
\begingroup
\setlength{\jot}{2pt}   
\begin{subequations}
\begin{align}
& \bm{\alpha}^{(1)}\bm{\Delta}^{\rm T}
 \overset{(\ref{eq:alpha^(1), gen PT})}{=}   \sum\nolimits_{k=1}^{t+1}(k-1)\Deltam(k)\\
&\quad  =\sum\nolimits_{k=1}^{r+1}(k-1)\Deltam(k) +\sum\nolimits_{k=r+2}^{t+1}(k-1)\Deltam(k)\\
&\quad  \le \sum\nolimits_{k=1}^{r+1}r\Deltam(k) +\sum\nolimits_{k=r+2}^{t+1}(r+1)\Deltam(k) 
\label{step 1: proof of pos. numerator, thm 3}\\
&\quad < \sum\nolimits_{k=1}^{r+1}(r+1)\Deltam(k)+\sum\nolimits_{k=r+2}^{t+1}(r+1)
\Deltam(k) 
\label{step 2: proof of pos. numerator, thm 3}\\
&\quad =(r+1)\sum\nolimits_{k=1}^{t+1}\Deltam(k)=0,
\label{step 3: proof of pos. numerator, thm 3}
\end{align}
\end{subequations}
\endgroup
where (\ref{step 1: proof of pos. numerator, thm 3}) and (\ref{step 2: proof of pos. numerator, thm 3}) follow from
Claim~\ref{claim:1}, and
(\ref{step 3: proof of pos. numerator, thm 3}) follows from Claim~\ref{claim:2}.
Intuitively, \bcuz $ (\Deltam(k))_{k=1}^{t+1}$ has zero sum and  opposite signs in its first and second halves, weighting $\Deltam$ by a nonnegative, increasing sequence $ \alphabm^{(1)}$ assigns larger weights to its negative entries, resulting in
$\alphabm^{(1)}\Deltam^{\rm T
}<0$.

Next, we prove $\bm{\alpha}^{(2)}\bm{\Delta}^{\rm T}>0$. 
Define 
\be
\label{eq:delta_tq}
\delta_{t,q}(k)\eqdef
\begin{cases}
\Deltam(k) +
\Deltam(t+2-k), &  \mrm{if}\; k\in [r]
\vspace{-.1cm}
\\
\Deltam(r+1),  & \mrm{if} \; k=r+1
\end{cases} 
\ee
\ie,  a pairwise aggregation of the entries of $\Deltam$.
Utilizing the symmetry of  $\bm{\alpha}^{(2)}$ (see (\ref{eq:alpha^(2), gen PT})), \ie,  $\alphabm^{(2)}(k)= \alphabm^{(2)}(t+2-k),\forall k\in[r+1]$, 
we will also regroup the terms in the summation
$\bm{\alpha}^{(2)}\bm{\Delta}^{\rm T}=
\sum_{k=1}^{t+1}\alphabm^{(2)}(k) \Deltam(k)
$ (see (\ref{step 11: alpha(2)*delta>0})). 
Claim~\ref{claim:3} identifies key properties of the sequence $\left\{ \delta_{t,q}(k)\right\}_{k=1}^{r+1} $, which are essential to proving Lemma~\ref{lemma:valid gamma,gen pt}.

\begin{claim}
\label{claim:3}
\tit{The sequence $\{\delta_{t,q}(k)\}_{k=1}^{r+1}$ has the following properties:
\begin{itemize}
    \item[\it 1)] $\delta_{t,q}(1)<0,\delta_{t,q}(r+1)>0$ for any $q,t$ satisfying $q\ge t$;
    \item[\it 2)]  $\exists k'\in[2:r]$ such that  $ \delta_{t,q}(k)<0,\forall k\le k'$;  
    \item[\it 3)] $\forall k\in[2:r-1]$, if $\delta_{t,q}(k+1)<0$, then $\delta_{t,q}(k)<0$;
    \item[\it 4)] Consequently, there exists $k''\in[2:r]$ such that $ \delta_{t,q}(k)<0, \forall k\in [1:k'']$ and $ \delta_{t,q}(k)\ge  0, \forall k\in[k''+1:r+1]$.
\end{itemize}
}\end{claim}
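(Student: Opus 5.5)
The plan is to obtain a closed form for the sign of $\delta_{t,q}(k)$. The form I expect is a quadratic expression that is monotone in $k$. Once that is in hand, properties 1)--4) follow together instead of being checked one at a time.

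\textbf{Closed form for $\Deltam(k)$.} Parametrize type $\vv_k$ by $a\eqdef k-1$ (users taken from $\Qc_1$) and $b\eqdef t-a$ (users taken from $\Qc_2$). By (\ref{eq:F1,gen pt}) and (\ref{eq:F2,gen pt}),
\[
\Deltam(k)=\binom{q+1}{a}\binom{q-1}{b-1}-\binom{q}{a-1}\binom{q}{b}.
\]
Set $P(a,b)\eqdef\binom{q}{a}\binom{q}{b}$, which is symmetric in $(a,b)$. Using $\binom{q+1}{a}=\frac{q+1}{q+1-a}\binom{q}{a}$, $\binom{q-1}{b-1}=\frac{b}{q}\binom{q}{b}$ and $\binom{q}{a-1}=\frac{a}{q+1-a}\binom{q}{a}$, I expect
\[
\Deltam(k)=\frac{P(a,b)\,\big[(q+1)b-qa\big]}{q\,(q+1-a)}.
\]
I will check separately that this formula remains valid at the boundary $a=0$, where the binomial conventions are delicate. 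The hypothesis $q\ge t$ guarantees that all factors $q+1-a$ are positive.

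\textbf{Pairing and the sign function.} For $k\in[r]$, the partner index $t+2-k$ corresponds to the swapped pair $(a',b')=(b,a)$. Since $P$ is symmetric, $\delta_{t,q}(k)$ equals $P(a,b)/q$ times
\[
\frac{(q+1)b-qa}{q+1-a}+\frac{(q+1)a-qb}{q+1-b}.
\]
I will multiply through by the positive quantity $(q+1-a)(q+1-b)$ and expand, using $a+b=t$ and $a^2+b^2=t^2-2ab$. This reduces the sign of $\delta_{t,q}(k)$ to the sign of
\[
\Phi(a)\eqdef 2(2q+1)\,a(t-a)-(q+1)\,t(t-1).
\]
This expansion is the step I expect to be the main obstacle. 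It is routine, but it must be done carefully: any error in a cross term destroys the clean form. I would verify it numerically against Example~\ref{example:t=2} and small cases.

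\textbf{Deriving the four properties.} Since $a(t-a)$ is strictly increasing for $a\in[0,r-1]$, $\Phi$ is strictly increasing in $k\in[r]$. This monotonicity gives 3) immediately.
\begin{itemize}
\item[1)] At $k=1$ we have $a=0$, so $\Phi<0$. For $k=r+1$ we have $a=b=r$, and the formula for $\Deltam$ gives $\Deltam(r+1)=P\,r/\big(q(q+1-r)\big)>0$.
\item[2)] Take $k'=2$. Then $\Phi(1)=(t-1)\big[2(2q+1)-(q+1)t\big]<0$ whenever $t\ge4$. When $t=2$ the range $[2:r]$ is empty, so 2) holds trivially.
\item[4)] Let $k''$ be the last index in $[r]$ with $\Phi<0$. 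By 2) this index is at least $2$. Monotonicity of $\Phi$ on $[r]$, together with $\delta_{t,q}(r+1)>0$, then yields the claimed sign pattern.
\end{itemize}
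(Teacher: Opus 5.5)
Your proposal is correct, and it takes a cleaner route than the paper. The paper works with the ratio $V(k)=x(k)/y(k)$ of the positive and negative contributions and writes $x(k)-y(k)=q\,g(k)+A_k$ with $g(k)=(t-2(k-1))^2-t$. Property 2) then comes from the sign of $g$. Property 3), however, needs separate handling of the indices where $g(k)<0$: there the sign is controlled by a threshold $q<\varphi_t(k)=A_k/B_k$, and the auxiliary Claim~\ref{claim:4} (the computation of $\beta_k$) shows that $\varphi_t$ is decreasing. Your $\Phi$ is in fact the same polynomial, since $x(k)-y(k)=-\Phi(k-1)$. By grouping it through $a(t-a)$, you expose that it is a concave quadratic in $a$, symmetric about $t/2$. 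Its strict monotonicity on $[0,r-1]$ then gives 3) and 4) at once, with no case split and no Claim~\ref{claim:4}. It also gives the explicit criterion $\delta_{t,q}(k)<0 \iff (k-1)(t-k+1)<\frac{(q+1)t(t-1)}{2(2q+1)}$ for $k\in[r]$. Your boundary check at $a=0$ is fine: the formula reproduces $\bm{\Delta}(1)=\binom{q-1}{t-1}$ and $\bm{\Delta}(t+1)=-\binom{q}{t-1}$, hence $\delta_{t,q}(1)=-\binom{q-1}{t-2}$.

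One small correction concerns $t=2$. The existential statements in 2) and 4) range over the empty set $[2:1]$, so they are false, not trivially true. The claim should be read under $r\ge 2$, which is exactly what the paper assumes at the start of its proof. For $r=1$ the sign pattern is simply $\delta_{t,q}(1)<0<\delta_{t,q}(2)$, and that is all Lemma~\ref{lemma:valid gamma,gen pt} needs.
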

\begin{IEEEproof}
See Appendix~\ref{appendix: proof of claim 3}.
\end{IEEEproof}

Equipped with Claim \ref{claim:3}, we are ready  to  prove $\bm{\alpha}^{(2)}\bm{\Delta}^{\rm T}>0$.
Denote
$  
k_{\rm max} \eqdef \mrm{argmax}_{k\in [2:r]}\{k: \delta_{t,q}(k)<0\} 
$
as the largest index  for which $\delta_{t,q}(k)<0$.
By Claim \ref{claim:3},  $\delta_{t,q}(k)<0,\forall k\in[k_{\rm max}]$ and $\delta_{t,q}(k)\ge 0,\forall k\in[k_{\rm max}+1:r+1]$. 
Define
\be  
\widehat{\alpha}(k)\eqdef \alphabm^{(2)}(k)- k_{\rm max} ,\; k\in[t+1]
\ee 
which is a shifted version of $\alphabm^{(2)}$ (see (\ref{eq:alpha^(2), gen PT})). By construction, 
$ \alphahat(\kmax+1)=0$, 
$\alphahat(k)<0,\forall k\in [\kmax]$, and
$\alphahat(k)>0,\forall k\in [\kmax+2:r+1]$.
Moreover, since $\alphabm^{(2)}$ is \sym,  the shifted version  $\alphahat$ preserves  the same  symmetry, i.e.,
\be
\label{eq:alphahat symmetry}
\alphahat(k)   = \alphahat(t+2-k),\forall k\in[r]
\ee 
We have 
\begin{subequations}
\begin{align}
 \alphabm^{(2)}\Deltam^{\rm T} & = 
\sum_{k=1}^{t+1}\alphabm^{(2)}(k)\Deltam(k) - \kmax \left(\sum_{k=1}^{t+1}\Deltam(k) \right)
\label{step 0: alpha(2)*delta>0, thm 3}\\
&  = 
\sum\nolimits_{k=1}^{t+1} \big(\alphabm^{(2)}(k)-\kmax\big)\Deltam(k)\\
&  = 
\sum\nolimits_{k=1}^{t+1} \alphahat(k)\Deltam(k)\\
& =
\sum\nolimits_{k=1}^r \widehat{\alpha}(k)\Deltam(k) + 
\widehat{\alpha}(r+1)\Deltam(r+1) \notag\\
& 
\quad  +  \sum\nolimits_{k=r+2}^{t+1} \widehat{\alpha}(k)\Deltam(k)\\
& 
= \sum\nolimits_{k=1}^r \widehat{\alpha}(k)\Deltam(k) + 
\widehat{\alpha}(r+1)\Deltam(r+1) \notag \\
& \quad + \sum\nolimits_{i=1}^{r} \widehat{\alpha}(t+2-i)\Deltam(t+2-i)
\label{step 1: alpha(2)*delta>0}
\\
& \overset{(\ref{eq:alphahat symmetry})}{=} 
\sum\nolimits_{k=1}^r \widehat{\alpha}(k)\underbrace{\left(\Deltam(k) + \Deltam(t+2-k)\right)}_{ \overset{(\ref{eq:delta_tq})}{=}  \delta_{t,q}(k)} \notag \\
& \quad\quad+
\widehat{\alpha}(r+1)\Deltam(r+1)
\label{step 11: alpha(2)*delta>0}
\\
& =
\sum\nolimits_{k=1}^{k_{\rm max}} \underbrace{\widehat{\alpha}(k)\delta_{t,q}(k)}_{>0}
+ 
\sum_{k=k_{\rm max}+1}^{r} \underbrace{\widehat{\alpha}(i)\delta_{t,q}(i)}_{\ge 0} \notag\\
& \quad + 
 \underbrace{\widehat{\alpha}(r+1)\delta_{t,q}(r+1)}_{>0}>0,
 \label{step 2: alpha(2)*delta>0}
\end{align}
\end{subequations}
where (\ref{step 0: alpha(2)*delta>0, thm 3}) is \bcuz $\sum_{k=1}^{t+1}\Deltam(k)=0$ \accrdto  Claim \ref{claim:2}.
In (\ref{step 1: alpha(2)*delta>0}), we changed the iteration variables $i=t+2-k$.
(\ref{step 2: alpha(2)*delta>0}) is \bcuz  $\alphahat(k)<0, \delta_{t,q}(k)<0,\forall  k
\in [\kmax]$ \soth $\sum\nolimits_{k=1}^{k_{\rm max}} \widehat{\alpha}(k)\delta_{t,q}(k)>0$;
$\alphahat(k)\ge 0, \delta_{t,q}(k)\ge 0,\forall  k
\in [\kmax+1:r]$ \soth $\sum\nolimits_{k=k_{\rm max}+1}^{r} \widehat{\alpha}(k)\delta_{t,q}(k)\ge 0$, and $\widehat{\alpha}(r+1)\delta_{t,q}(r+1)=\frac{(r-k_{\rm max})}{q}\binom{q}{r-1}\binom{q}{r}>0$.
Hence, $  \alphabm^{(2)} \Deltam^{\rm T}>0$.

 As a result, we have proved $\bm{\alpha}^{(2)}\Deltam ^{\rm T}>0$ and $\bm{\alpha}^{(1)}\Deltam^{\rm T}<0$, yielding $>0$, 
 which completes the proof of Lemma~\ref{lemma:valid gamma,gen pt}.

\section{Discussion}
\label{sec: discussion}

In this section, we discuss further implications of the main theorem, including a \sbp-minimization property of the proposed design and challenges for odd $t$.


\subsection{Minimum \Sbp}
\label{subsec:min sbp over a class,discussion}

One interesting observation is that the proposed \PTd  actually \achvs the minimum \sbp among a set of two-\grp user \grps, as shown  in Lemma \ref{lemma:min sbp over a class,discussion}.
\begin{lemma}
\label{lemma:min sbp over a class,discussion}
\tit{For $(K,t)=(2q+1,2r)$ with the PT design of Theorem~\ref{thm:thm}, the user grouping $\qv=(q+1,q)$ achieves the minimum \sbp  among the set of unequal groupings 
$\left\{ (q_1,K-q_1): q_1\in [q+1:K-t-1]   \right\}$}.
\end{lemma}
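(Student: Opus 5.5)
The plan is to read off, for a general two-group grouping $\qv=(q_1,K-q_1)$, the subpacketization that the same PT design produces, and then compare these values across $q_1$ with a stochastic-dominance argument on a hypergeometric random variable.

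\emph{Step 1: write the subpacketization for general $q_1$.} Keep the transmitter selections (\ref{eq:tx select G1, gen PT}) and (\ref{eq:tx select G2, gen PT}), with $\Qc_1$ of size $q_1$. The subfile types are still $\vv_k=(k-1,t-k+1)$. The local FS computation never uses the group sizes, so the aggregate global FS vector is the same $\alphaglobal=(0,2,\cdots,t-2,\underline{t}_{r+1})$ as in (\ref{eq:alpha^global, gen PT}). Only the subfile counts change, to $F_{q_1}(\vv_k)=\binom{q_1}{k-1}\binom{K-q_1}{t-k+1}$, which gives
\be
F_{\rm PT}(q_1)=\sum\nolimits_{k=1}^{t+1}\alpha_k F_{q_1}(\vv_k).
\ee
By Remark~\ref{rmk:JCM sbp identity}, $\sum_k F_{q_1}(\vv_k)=\binom{K}{t}$ for every $q_1$. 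Hence
\be
F_{\rm PT}(q_1)=\Fjcm-\binom{K}{t}\,\mathbb{E}\big[w(X_{q_1})\big],
\ee
where $w(j)\eqdef t-\alpha_{j+1}=\max\{t-2j,0\}$ and $X_{q_1}\eqdef|\Tc\cap\Qc_1|$ for $\Tc$ drawn uniformly from $\binom{[K]}{t}$. Thus $X_{q_1}$ is hypergeometric with population $K$, $q_1$ successes and $t$ draws, and minimizing $F_{\rm PT}$ is the same as maximizing $\mathbb{E}[w(X_{q_1})]$.

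\emph{Step 2: monotonicity via stochastic dominance.} I will show $X_{q_1}\prec_{\rm st}X_{q_1+1}$ by an explicit coupling. Take the grouping with $q_1+1$ and move one designated user $u^\ast$ out of $\Qc_1$. Then $X_{q_1+1}=X_{q_1}+\mathbbm{1}\{u^\ast\in\Tc\}\ge X_{q_1}$ pointwise, with strict inequality on an event of positive probability. The function $w$ is nonincreasing, so $\mathbb{E}[w(X_{q_1+1})]\le\mathbb{E}[w(X_{q_1})]$, which means $F_{\rm PT}(q_1)$ is nondecreasing in $q_1$. Over the admissible range $[q+1:K-t-1]$, the minimum is therefore attained at $q_1=q+1$, i.e., at $\qv=(q+1,q)$.

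\emph{Step 3: strictness.} To make the minimum unique, I need $\mathbb{E}[w(X_{q_1})]-\mathbb{E}[w(X_{q_1+1})]>0$. Under the coupling this difference equals $\mathbb{E}\big[(w(X)-w(X+1))\mathbbm{1}\{u^\ast\in\Tc\}\big]$. On $[0:r]$ we have $w(j)-w(j+1)=2>0$, so it suffices that $X\le r-1$ occurs together with $u^\ast\in\Tc$ with positive probability. For instance, take $\Tc=\{u^\ast\}$ together with $t-1$ users from $\Qc_2$; this is possible because $K-q_1-1\ge t$ in the range considered. This event has positive probability, so the inequality is strict.

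\emph{Main obstacle.} The delicate part is not the stochastic-dominance inequality, which is routine. It is justifying that comparing $F_{\rm PT}(q_1)$ across $q_1$ is legitimate. First, I must check that the transmitter selections give the same $\alphaglobal$ for every $q_1$ in the range. The upper limit $q_1\le K-t-1$ keeps $|\Qc_2|\ge t+1$, so all multicast group types exist, including $\sv_1$. Second, I must decide how to treat positivity of $\gamma$ for $q_1\neq q+1$. My first choice is to state the lemma as comparing the PT subpacketization formula under this fixed design, noting that any valid packet-size choice does not change $F_{\rm PT}$. This makes feasibility irrelevant to the minimization. Only if that reading is not acceptable would I try to extend Claims~\ref{claim:1}--\ref{claim:3} to general $q_1$.
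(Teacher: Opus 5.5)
Your proof is correct, but it gets the key monotonicity by a different route from the paper's.

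\textbf{Common reduction.} Both arguments start the same way. The aggregate vector $\alphaglobal$ does not depend on $q_1$. By the subfile-count identity,
$F_{\rm PT}(q_1)=t\binom{K}{t}-\sum_{k=1}^{r}(t-\alpha_k)F_{q_1}(\mathbf{v}_k)$.

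\textbf{Paper's route.} The paper then proves \emph{termwise} monotonicity. For each $k\in[r]$ it shows that the ratio
$F_{q_1}(\mathbf{v}_k)/F_{q_1+1}(\mathbf{v}_k)=\frac{q_1-k+2}{q_1+1}\cdot\frac{K-q_1}{K-q_1-t+k-1}$
exceeds $1$. That bound uses $q_1\ge K-q_1$, i.e., $q_1\ge q+1$.

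\textbf{Your route.} You write the weighted sum as $\binom{K}{t}\mathbb{E}[w(X_{q_1})]$ with $w$ nonincreasing, and compare the hypergeometrics through the one-user coupling. This is the stochastic-dominance idea of Appendix~A for Lemma~\ref{lemma:monotonicity of Fpt/Fjcm}, applied here in the number-of-successes parameter with the population fixed.

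\textbf{What each buys.} Your argument never uses $q_1\ge q+1$. So it shows $F_{\rm PT}(q_1)$ is strictly increasing in $q_1$ over the whole range where the design is defined. The termwise route cannot give this: for $k=r\ge 2$, termwise decrease fails as soon as $(r-1)(K-q_1)\ge(r+1)(q_1+1)$. This makes clear that the lower limit $q_1\ge q+1$ in the lemma is a feasibility condition, not a subpacketization one. For $t=2$, for example, one computes $\gamma=(K-2)/(2q_1-K)$, which is negative for every $q_1\le q$. The paper's route, in exchange, gives a stronger per-type statement for $k\in[r]$ in the stated range, using only elementary ratio bounds.

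\textbf{Feasibility.} Your handling of $\gamma$ matches the paper's own reading. The paper also only compares the subpacketization formula and never checks $\gamma>0$ for $q_1\neq q+1$. Since $q_1=q+1$ is feasible by Lemma~\ref{lemma:valid gamma,gen pt} and has the smallest formula value, feasibility of the other groupings does not matter.

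\textbf{Small slip.} The identity $w(j)-w(j+1)=2$ holds on $[0:r-1]$, not on $[0:r]$, because $w(r)=w(r+1)=0$. This does no harm, since your strictness event only uses $X\le r-1$.
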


\begin{IEEEproof}
Consider any \grpg $\qv=(q_1, K-q_1)$.
We assume $K-q_1\ge t+1$,  \ie, $q_1 \le K-t-1$ \soth  the \sbf and \mgrp types are the  same as (\ref{eq:vk,sk def,gen pt}). Using the \tx \selecs as (\ref{eq:tx select G1, gen PT}) and \ref{eq:tx select G2, gen PT}), the same \gfsv 
$\alphaglobal 
= \lef(0,2,\cdots,t-2, \underline{t}_{r+1} \rig)$ is obtained.
The number of  \tp-$\vv_k$ \sbfs  is equal to $F_{q_1}(\vv_k) = \binom{q_1}{k-1}\binom{K-q_1}{t-k+1},k\in[t+1]$.
We next show $F_{q_1}(\vv_k) $  is strictly \decrsg in $q_1$ for any $k  \in [r]$. \Specly,
\be 
\frac{F_{q_1}(\vv_k)}{F_{q_1+1}(\vv_k)}= 
\frac{q_1 -k+2}{q_1+1}\frac{K-q_1}{K-q_1-t+k-1},k \in[r]
\ee 
Since $t=2r$ and $k\le r$, it holds that
$t-k+1 \ge r+1$, and thus 
$K-q_1-t+k-1 \le K-q_1-r-1$. \Movr, $q_1-k+2\ge  q_1-r+2 $.
\Thf,
\be 
 \label{eq:Fq1(vk)/Fq1+1(vk) bd,disc}
\frac{F_{q_1}(\vv_k)}{F_{q_1+1}(\vv_k)}\ge 
\frac{q_1-r+2}{q_1+1}
\frac{K-q_1}{K-q_1-r-1}.
\ee 
Let $a \eqdef q_1, b\eqdef K-q_1$ \soth  $a+b=K$. The RHS of (\ref{eq:Fq1(vk)/Fq1+1(vk) bd,disc}) exceeds  $1$ if  $b(a-r+2)>(a+1)(b-r-1)$.
Expanding and simplifying yields
$(a + 1)(r + 1) > (r- 1)b$. For $q_1 \in[q+1:K-t-1]$,  we have $a\ge b\ge t-1$. Hence, 
$ (a + 1)(r + 1)-(r- 1)b \ge (b + 1)(r + 1)-(r- 1)b=2b+ r+1\ge 5r -1>0$, which implies  $ {F_{q_1}(\vv_k)}/{F_{q_1+1}(\vv_k)} >1$. \Thf, $ F_{q_1}(\vv_k)> F_{q_1+1}(\vv_k) , \forall k\in[r] $.

Let $ \Fm_{q_1} \eqdef [F_{q_1}(\vv_k)]_{k=1}^{t+1}  $ with $F_{q_1}$, and let $F_{q_1}  \eqdef  \alphaglobal  \Fm_{q_1}^{\rm T}   $ denote the \sbp level under \grpg $(q_1, K-q_1)$.
Define $g(x)\eqdef t\binom{K}{t}-F_x,x \in \{q_1,q_1+1\}$.  
We prove $F_{q_1} \le F_{q_1+1} $ by showing $g(q_1) \ge g(q_1+1)$.
\Ip, by (\ref{eq:Fjcm,thm implication}) and (\ref{eq:alpha^global, gen PT}), we have
$g(q_1) = \sum_{k=1}^{r} (t-\alphaglobal(k)  )F_{q_1}(\vv_k)$, $
g(q_1+1)  = \sum_{k=1}^{r} (t-\alphaglobal(k))F_{q_1+1}(\vv_k)$.
Since $t- \alphaglobal(k)> 0,\forall k\in [r] $, $F_{q_1}(\vv_k)> F_{q_1+1}(\vv_k)    $ directly implies $g(q_1)>g(q_1+1)$.
Hence, we have proved $F_{q_1} < F_{q_1+1}, \forall q_1$. \Aar, $q_1= q+1 $ minimizes the \sbp with the \grpg $(q+1,q)$, completing the proof of Lemma  \ref{lemma:min sbp over a class,discussion}.
\end{IEEEproof}

\subsection{The Case of Odd $t$}
\label{subsec:odd t,disc}

The PT design in \Thm \ref{thm:thm} applies only to even $t$. In this section, we examine the case of odd $t$, review existing constructions, and identify the structural obstacles that arise under \het  \sbp.

\tit{Existing PT Designs:}
In one prior work~\cite{zhang2026taming}, we proposed \homo-\sbp  \PTd for $(K,t)=(2q+1,2q-1)$  that \achvs an order-wise \red compared with JCM. However, this construction  applies only to the specific case $t=K-2$, which lacks generality.

\tit{Challenge of Odd  $t$ under \Het Design:}
We briefly explain why the even-$t$ \PTd in \Sec \ref{subsec: general pt design} does not extend to odd  $t$. 
The core difficulty lies in the \tit{misalignment of local FS factors} across \mgrp \tps near the pivot.
\Specly, (\ref{eq:tx select G2, gen PT}) induces two pivot types $\sv_{r+1}=( r^\dagger,r+1 )  $ and $\sv_{r+2}=(r+1, r^\dagger)$,
where the transmitters switch between components as the local FS contributions reverse.  
For the involved subfile types $ \Ic_{r+1}=\{\vv_{r},\vv_{r+1} \}$ and $ \Ic_{r+2}=\{\vv_{r+1},\vv_{r+2} \}$, 
the local FS vectors coincide at type $\vv_{r+1}$ (see Table \ref{tab:fsfs at pivot types,disc}(a)), \ie, 
$ \alphabm_{r+1}=( \underline{\star}_{r-1}, r-1,r, \underline{\star}_r) $, 
$ \alphabm_{r+2}=(\underline{\star}_{r} ,r,r-1,\underline{\star}_{r-1}) $, 
thus yielding the ``hill-shaped" \gfsv  $\alphabm^{(2)}$ in (\ref{eq:alpha^(2), gen PT}).
Combined with $ \alphabm^{(1)}$ in (\ref{eq:alpha^(1), gen PT}), the   \agg  \gfsv  $\alphaglobal$ in (\ref{eq:alpha^global, gen PT}) \incrss  linearly   with unit slope  from zero and is capped at $t$ (see Fig. \ref{fig:alphas comparison,gen pt}).
\begin{table}[t]
\centering
\caption{\small FS Factors at Pivot Types}
\label{tab:fsfs at pivot types,disc}
\vspace{-.1cm}
\setlength{\tabcolsep}{1pt}
\renewcommand{\arraystretch}{0.8}
\begin{subtable}[t]{0.48\columnwidth}
\centering
\caption{Even $t=2r$}
\vspace{-.1cm}
\begin{tabular}{|c|c|c|c|}
\hline
 & $\vv_{r}$ & $\vv_{r+1}$ & $\vv_{r+2}$ \\
\thickhline
$\sv_{r+1}$ & $r-1 $&  $r$  &  $\star$ \\
\hline
$\sv_{r+2}$ & $\star$ & $r $& $r-1$  \\
\thickhline
$\alphabm^{(2)}$ & $r-1$ & $r$ & $r-1$ \\
\hline
\end{tabular}
\end{subtable}
\hfill
\begin{subtable}[t]{0.48\columnwidth}
\centering
\caption{Odd $t=2r+1$}
\vspace{-.1cm}
\begin{tabular}{|c|c|c|c|}
\hline
 & $\vv_{r}$ & $\vv_{r+1}$ & $\vv_{r+2}$\\
\thickhline
$\sv_{r+1}$ & $r-1 $&  $r$  &  $\star$   \\
\hline
$\sv_{r+2}$ & $\star$ & $r+1 $& $r$    \\
\thickhline
$\alphabm^{(2)}$ & $ r^2-1 $ & $r(r+1)$ & $r^2$  \\
\hline
\end{tabular}
\end{subtable}
\end{table}

When  $ t=2r+1$, the two pivot \tps $\sv_{r+1}$ and $\sv_{r+2}$ cannot generate perfectly aligned FS factors for subfile type $\vv_r$.
To obtain a hill-shaped
$\alphabm^{(2)}$ similar to the even-$t$ case, the second component  of $\sv_{r+2}$
must be selected as \txs, 
\ie, $\sv_{r+2}=(r+1, (r+1)^\dagger)$.
Under this choice, type $\vv_{r+1}$ acquires FS factors $r$ and $r+1$ from $\sv_{r+1}$ and $\sv_{r+2}$ (see Table \ref{tab:fsfs at pivot types,disc}(b)), \resp.
Since $\gcd(r,r+1)=1$,
the vector LCM operation yields a global FS factor at least $r(r+1)$, which exceeds $ t=2r+1$ for all $r\ge 2$.
Consequently, the aggregate FS vector cannot be upper bounded by $t$, and the structural alignment observed in the even-$t$ case becomes unattainable.
This obstruction is \tit{intrinsic to the parity of $t$}: when $t$ is odd, the two central FS contributions differ by one and are coprime, forcing a multiplicative growth of their LCM.

Nevertheless, \sbp reduction remains possible for small values of $t$, as shown in the \folwg \ex.

\begin{example}[Odd $t=3$]
\label{example:het pt for t=3,disc}
Let $(K,t)=(2q+1,3)$ with $q\ge 4$.  Under  user grouping $\qv=(q+1,q)$, the subfile and \mgrp types are 
$ \vv_1=(0,3), \vv_2=(1,2),\vv_3=(2,1),\vv_4=(3,0)$, and $\sv_1=(0,4), \sv_2=(1,3),\sv_3=(2,2), \sv_4=(3,1),\sv_5=(4,0)$ with \invod \sbf types
$\Ic_1 =\{\vv_1\}, \Ic_2 =\{\vv_1,\vv_2\}, 
\Ic_3 =\{\vv_2,\vv_3\}, \Ic_4 =\{\vv_3,\vv_4\}$, and $\Ic_5 =\{\vv_4\}$.
The \tx \selecs are
\begin{subequations}
\begin{align}
 \Gc_1:\;\sv_1 & =\left(0,4^{\dagger}\right),
\sv_2=\left(1^{\dagger},3\right),
\sv_3=\left(2^{\dagger},2\right),\notag \\
 \sv_4 & =\left(3^{\dagger},1\right),
\sv_5=\left(4^{\dagger},0\right),\\
\Gc_2:\; \sv_1 & =\left(0,4^{\dagger}\right),
\sv_2=\left(1^{\dagger},3\right),
\sv_3=\left(2,2^{\dagger}\right),\notag \\
\sv_4 & =\left(3,1^{\dagger}\right),
\sv_5=\left(4^{\dagger},0\right),
\end{align}
\end{subequations}
yielding 
$\bm{\alpha}^{(1)}=(0,1,2,3)$, 
$\bm{\alpha}^{(2)}=(0,2,1,0)$, and $\alphaglobal=(0,3,3,0)$.
According to (\ref{eq:gamma, gen pt}), the packet  size ratio is equal to
$\gamma = 
-\frac{ \bm{\alpha}^{(1)}\bm{\Delta}^{\rm T} }{\bm{\alpha}^{(2)}\bm{\Delta}^{\rm T}}
= \frac{2(2q-1)}{q+4}>0$, 
where $\bm{\Delta} =(q-1)\big[ \frac{ (q-2)}{2},\frac{ (q+2)}{2},-\frac{q }{2},-\frac{q }{2}\big]$. 
Thus, 
$\Fpt=\alphaglobal\Fm^{\rm T}=\frac{3q(2q-1)(q+1)}{2}$, and
$
\frac{F_{\rm PT}}{F_{\rm JCM} } = 
\frac{3}{4}\left(1+\frac{1}{K} \right)\le 5/6,\forall K\ge 9
$ and $\lim_{K\to\infty} \frac{F_{\rm PT}}{F_{\rm JCM} }=3/4$, implying a one-fourth packet reduction asymptotically. 
\hfill $\lozenge$
\end{example}

\section{Conclusion}
\label{sec:conclusion}
In this paper, we introduced a heterogeneous \sbp extension of the packet type (PT)-based framework for \dd coded caching. By allowing packet sizes to vary across refined packet types, the proposed design overcomes an inherent limitation of existing homogeneous PT constructions. Through  the joint optimization of  user grouping, multicast transmitter selection,
and packet sizing,
we established a new class of rate-optimal schemes for $(K,t)=(2q+1,2r)$ that achieves a provable constant-factor reduction in \sbp over  \sota \schms.
The proposed heterogeneous \ptf reveals that unequal user grouping, combined with coordinated transmitter selection across \cpgrps,  provides
additional flexibility in satisfying memory constraints that cannot be attained under homogeneous splitting. 

Several directions deserve further investigation, including:
i) exploring more general user \grpgs, more diverse \tx \selec strategies, and \pkt sizing
for broader  $(K,t)$ regimes; and ii) relaxing the \ptf to allow suboptimal \comm rates
in exchange for further \sbp reduction. Ultimately, our goal is to characterize the fundamental trade-off between communication rate and \sbp.

\appendices

\section{Proof of Lemma \ref{lemma:monotonicity of Fpt/Fjcm}}
\label{sec:proof of ratio monocity,app} 
Define pmf 
$ 
p_k(q) \eqdef {f_k}/{\sum_{k=1}^{t+1}f_k}, k \in [t+1]$.
Let $J_q \sim \mrm{Hypergeo}(2q+1,q+1,t)$ be a \hygeo \randvar with pmf
\be
\label{eq:hypergeo pmf,app}
\mrm{Pr}(J_q=j)= \frac{\binom{q+1}{j}\binom{q}{t-j} }{\binom{2q+1}{t}},\;j \in  [0:t]
\ee
Since $p_k(q)= \mrm{Pr}(J_q=k-1) $, we can write 
\be
\label{eq:rho,first appear,app}
\rho_t(q)=\sum_{k=1}^{t+1}\frac{\alphaglobal(k)}{t}p_k(q)= \mbb{E}\lef[h(J_q)  \rig],
\ee
where
\be 
h(j)=
\begin{cases}
2j/t, & \mrm{if}\;j \in [0:r-1]  \vspace{-.15cm} \\
t,  & \mrm{if}\;j \in [r:t]
\end{cases}
\ee
which is nondecreasing and strictly increasing on $[0:r-1]$.
For fixed $t$, the \hygeo family is \stochaly  decreasing in $q$ \bcuz the population odds $ \omega \eqdef 1+1/q$ strictly \decrs with  $q$~\cite{shaked1994stochastic}.
Hence,  $J_{q+1}$ is dominated by  $J_q$, 
\be
J_{q+1}  \prec_{\rm st} J_q
\ee 
and the dominance is strict since $\mbb{E}[J_q]=\frac{t(q+1)}{2q+1}$ is strictly decreasing in $ q$. 
Because $h(\cdot)$ is nondecreasing and not constant, strict stochastic dominance yields
$\mbb{E}\lef[h(J_{q+1 })\rig]< \mbb{E}\lef[h(J_q)  \rig]$, \ie, 
$\rho_t(q+1 )< \rho_t(q)$.
This completes the proof of Lemma \ref{lemma:monotonicity of Fpt/Fjcm}.

\section{Proof of Cache \MemoConst}
\label{sec:proof of Zu MC, appendix}

Denote $ \alphabm_\gamma \eqdef \alphabm^{(1)} +  \gamma \alphabm^{(2)} $.
For any \user{u\in \Qc_i,i=1,2}, 
\begin{subequations}
\begin{align}
& |Z_u|/N \notag\\
& =\ell^{(1)} \sum_{k=1}^{t+1}\alphabm^{(1)}(k)F_i(\vv_k)
+ \ell^{(2)} \sum_{k=1}^{t+1}\alphabm^{(2)}(k)F_i(\vv_k)\\
& = \ell^{(1)} \sum_{k=1}^{t+1}\lef(\alphabm^{(1)}(k)+\gamma\alphabm^{(2)}(k)\rig)F_i(\vv_k)\\
& = \ell^{(1)}  \alphabm_\gamma  \Fm_i^{\rm T} \\
&  \overset{(\ref{eq:pkt size, gen pt})}{=}
 L(\alphabm_\gamma  \Fm_i^{\rm T})/(\alphabm_\gamma  \Fm^{\rm T}),
\end{align}
\end{subequations}
where $\Fm, \Fm_1$ and $\Fm_2$ are given in (\ref{eq:F,gen pt}), (\ref{eq:F1,gen pt}) and (\ref{eq:F2,gen pt}), \resp. 

Now we show $(\alphabm_\gamma  \Fm_i^{\rm T})/(\alphabm_\gamma  \Fm^{\rm T})=t/K,\forall i\in[2]$. By the binomial identity $\binom{m+1}{n+1}=\frac{m+1}{n+1}\binom{m}{n}$,  we have
\begin{align}
\label{eq:Fi relation w/ F, app}
F_1(\vv_k) = \frac{k-1}{q+1}F(\vv_k), \;
F_2(\vv_k) = \frac{t-k+1}{q}F(\vv_k).
\end{align}
Hence, 
\be 
\label{eq:Deltam(k), app}
\Deltam(k) =F_2(\vv_k)- F_1(\vv_k)=F(\vv_k)\lef(  \frac{t-k+1}{q}-  \frac{k-1}{q+1} \rig).
\ee 
The \memoconst (\ref{eq:differential MC, gen pt}) requires $ \alphabm_\gamma \Deltam^{\rm T} =0$, \ie, 
\be 
\sum_{k=1}^{t+1} \alphabm_\gamma(k)\Deltam(k)=0.
\ee 
Substitute (\ref{eq:Deltam(k), app}) and multiply by  $q(q+1)$, we get
\begin{align}
& \sum_{k=1}^{t+1} \alphabm_\gamma(k)F(\vv_k)\lef(  (q+1)(t-k+1)-q(k-1) \rig) \notag\\
&=\sum_{k=1}^{t+1} \alphabm_\gamma(k)F(\vv_k)\lef(  (q+1)t-(2q+1)(k-1) \rig)=0,
\end{align}
which implies
\be
\label{eq: crucial id, app}
(q+1)t\sum_{k=1}^{t+1} \alphabm_\gamma(k)F(\vv_k)
=
(2q+1)\sum_{k=1}^{t+1} \alphabm_\gamma(k)(k-1)F(\vv_k).
\ee 
Next compute the desired ratio using (\ref{eq:Fi relation w/ F, app}):
\begin{subequations}
\begin{align}
\frac{\alphabm_\gamma \Fm_1^{\rm T} }{\alphabm_\gamma \Fm^{\rm T}} 
& =
\frac{\sum_{k=1}^{t+1}\alphabm_\gamma(k)F_1(\vv_k)}{\sum_{k=1}^{t+1}\alphabm_\gamma(k)F(\vv_k)}\\
& \overset{(\ref{eq:Fi relation w/ F, app})}{=}
\frac{\sum_{k=1}^{t+1}\alphabm_\gamma(k)\frac{k-1}{q+1}F(\vv_k)}{\sum_{k=1}^{t+1}\alphabm_\gamma(k)F(\vv_k)}\\
&= \frac{1}{q+1}\cdot \frac{\sum_{k=1}^{t+1}\alphabm_\gamma(k)(k-1)F(\vv_k)}{\sum_{k=1}^{t+1}\alphabm_\gamma(k)F(\vv_k)}\\
& \overset{(\ref{eq: crucial id, app})}{=}
\frac{1}{q+1}\cdot \frac{\frac{(q+1)t}{2q+1} \sum_{k=1}^{t+1} \alphabm_\gamma(k)F(\vv_k)}{\sum_{k=1}^{t+1}\alphabm_\gamma(k)F(\vv_k)}\\
& =\frac{t}{2q+1}=t/K.
\end{align}
\end{subequations}
Since $\alphabm_\gamma\Deltam^{\rm T}=\alphabm_\gamma(\Fm_2^{\rm T}-\Fm_1^{\rm T}   )=0$, $\alphabm_\gamma\Fm_1^{\rm T}=\alphabm_\gamma\Fm_2^{\rm T}$. 
\Aar, $|Z_u|=NL(t/K)=ML, \forall u \in [K]$, proving the satisfaction of the \memoconst.

\section{Proof of Claims}
\label{appendix: proof of calims}

\subsection{Proof of Claim~\ref{claim:1}}
\label{appendix: proof of claim 1}
We have 
$\Deltam(1) = \binom{q-1}{t-1}$, $\Deltam(t+1)=-\binom{q}{t-1} $, and for any  $k\in [2:t+1]$,
\begin{subequations}
\begin{align}
& \Deltam(k)   =F_2(\vv_k)-F_1(\vv_k)\\
&\quad 
=\binom{q+1}{k-1}\binom{q-1}{t-k}- \binom{q}{k-2}\binom{q}{t-k+1}\\
& \quad  
=  \underbrace{ \binom{q}{k-2}\binom{q}{t-k+1}}_{\eqdef f_{t,q}(k)}
\bigg(\underbrace{\frac{q+1}{q}\frac{t-k+1}{k-1} -1 }
_{\eqdef \varepsilon_{t,q}(k)}\bigg),
\label{eq:step0,proof of claim 1}
\end{align}
\end{subequations}
where in (\ref{eq:step0,proof of claim 1})  we used the combinatorial identity $\binom{n+1}{m+1}=\frac{n+1}{m+1}\binom{n}{m}$.
It is easy to see  that $\varepsilon_{t,q}(k)$ is strictly
decreasing with $i$ for fixed $t$ and $q\ge t+1$.
Since $\varepsilon_{t,q}(r+1)=1/q>0$ and
$ 
\varepsilon_{t,q}(r+2) = \frac{ q+1}{q}\frac{r-1}{r+1}- 1
\le \frac{ t+1}{t}\frac{r-1}{r+1}- 1\le \frac{ t+2}{t}\frac{r-1}{r+1}- 1= -1/r
$, we have $\varepsilon_{t,q}(k)>0,\forall k\in [2: r+1]$ 
and $\varepsilon_{t,q}(k)<0,\forall k\in [r+2:t+1]$. Because $f_{t,q}(k)>0,\forall k$, we conclude that
$\Deltam(k)>0$, $\forall k\in [r+1]$ and
$\Deltam(k)<0$,$\forall k\in[r+2:t+1]$.

\subsection{Proof of Claim~\ref{claim:2}}
\label{appendix: proof of claim 2}
Let $|\Qc_1|=q+1$ and $|\Qc_2|=q$. 
The number of \tp-$\vv_k=(k-1,t-k+1),k \in[t+1]$ \sbfs stored by each user in $\Qc_1$ and $\Qc_2$ is equal to 
$F_1(\vv_k)  =\binom{|\mc{Q}_1|-1}{k-2}\binom{|\mc{Q}_2|}{t-k+1}$ and $F_2(\vv_k) =\binom{|\mc{Q}_1|}{k-1}\binom{|\mc{Q}_2|}{t-k}$, \resp.
Hence, the total number of subfiles stored by each user in $\Qc_1$ equals 
\begin{subequations}
\label{eq:Q1,proof of claim 2}
\begin{align}
\sum_{k=1}^{t+1}F_1(\vv_k)&=\sum_{k=2}^{t+1}\binom{|\mc{Q}_1|-1}{k-2}\binom{|\mc{Q}_2|}{t-k+1}\\
&=\sum_{k=0}^{t-1}\binom{|\mc{Q}_1|-1}{k}\binom{|\mc{Q}_2|}{t-k-1}       \\
&=\binom{|\mc{Q}_1|+|\mc{Q}_2|-1}{t-1}
= 
\binom{K-1}{t-1},
\label{step: total no. stored subfiles Q1, thm 3}
\end{align}
\end{subequations}
where Vandermonde's identity is applied in (\ref{step: total no. stored subfiles Q1, thm 3}). Similarly, the total number of subfiles stored by each user in $\Qc_2$ equals
\begin{subequations}
\label{eq:Q2,proof of claim 2}
\begin{align}
\sum_{k=1}^{t+1}F_2(\vv_k) & =\sum_{k=1}^{t}\binom{|\mc{Q}_1|}{k-1}\binom{|\mc{Q}_2|-1}{t-k}\\
& =\sum_{k=0}^{t-1}\binom{|\mc{Q}_1|}{k}\binom{|\mc{Q}_2|-1}{t-k-1}\\
& = \binom{|\mc{Q}_1|+|\mc{Q}_2|-1}{t-1}
= 
\binom{K-1}{t-1}.
\end{align}
\end{subequations}

\Aar, we have 
$
\sum_{k=1}^{t+1}\Deltam(k) =
\sum_{k=1}^{t+1}(F_2(\vv_k)- F_1(\vv_k)) 
= \sum_{k=1}^{t+1}F_2(\vv_k) - \sum_{k=1}^{t+1}F_1(\vv_k) \overset{(\ref{eq:Q1,proof of claim 2}),(\ref{eq:Q2,proof of claim 2})}{=}  0  
$, which completes the proof of Claim~\ref{claim:2}.

\subsection{Proof of Claim~\ref{claim:3}}
\label{appendix: proof of claim 3}
Assume $r\ge 2$. 
Recall that $\Deltam(1) = \binom{q-1}{t-1}$, $\Deltam(t+1)=-\binom{q}{t-1} $, and 
$\Deltam(k)=f_{t,q}(k) \varepsilon_{t,q}(k)$, where
$f_{t,q}(k)\eqdef \binom{q}{k-2}\binom{q}{t-k+1}$, $\varepsilon_{t,q}(k) \eqdef \frac{q+1}{q}\frac{t-k+1}{k-1} -1$, $k\in [2:t+1]$ as defined  in (\ref{eq:step0,proof of claim 1}). We prove  the properties \resp.

\tit{Property 1):} It is  easy to see that $\delta_{t,q}(1)= \Deltam(1) +\Deltam(t+1)=\binom{q-1}{t-1}-  \binom{q}{t-1} = -\binom{q-1}{t-2}<0$, and 
$\delta_{t,q}(r+1) = \Deltam(r+1)= \varepsilon_{t,q}(r+1) f_{t,q}(r+1) =  f_{t,q}(r+1)/q>0$.

\tit{Property 2):}
For any  $k\in [2:r]$, 
\begin{subequations}
\begin{align}
& \delta_{t,q}(k)  =\Deltam(k) + \Deltam(t+2-k) \\ 
& =\underbrace{\varepsilon_{t,q}(k)}_{>0}f_{t,q}(k) +\underbrace{ \varepsilon_{t,q}(t+2-k)}_{<0}f_{t,q}(t+2-k)
\label{step0,property 2} 
\\
&=\varepsilon_{t,q}(k)f_{t,q}(k) + \varepsilon_{t,q}(t+2-k)f_{t,q}(k+1), \label{step1,property 2}  
\end{align} 
\end{subequations}
where (\ref{step0,property 2}) is due to Claim \ref{claim:1}, and (\ref{step1,property 2}) is \bcuz 
$f_{t,q}(k+1)   = f_{t,q}(t+2-k),\forall k\in  [2:t+1]$ (see (\ref{eq:step0,proof of claim 1})).
Define 
\be 
\label{eq:def V(k),proof claim 3}
V(k) \eqdef  \frac{\varepsilon_{t,q}(t+2-k)}{-\varepsilon_{t,q}(k)}\cdot  \frac{f_{t,q}(k+1)}{f_{t,q}(k)},  \;  k\in [2:r] 
\ee 
Then $V(k)>1 \Leftrightarrow  \delta_{t,q}(k)<0$. Moreover,
\be 
\label{eq:full V(k),proof claim 3}
V(k) = \frac{\lef( q\left(t-2(k-1)  \right)-(k-1)  \rig)(q-k+2)}{\lef( q\left(t-2(k-1)  \right)+t-k+1  \rig)   (q-t+k)} 
\eqdef \frac{x(k) }{y(k)}
\ee 
We prove that there exists an index $k'\in[2:r]$ \suth $V(k)>1, \forall k \in [2:k']$.
Since $q \ge t$,  both the numerator and denominator  of $V(k)$ are positive, \ie, $x(k),y(k) >0, \forall k \in [2:r]$. Thus,  
$ x(k)-y(k) >0 \Leftrightarrow V(k)>1$.
A direct calculation gives
\begin{align}
\label{eq:x(k)-y(k),proof claim 3}
x(k)-y(k) & =q\lef((t-2(k-1) )^2 -t\rig)  +A,
\end{align}
where $A \eqdef(k-1)(k-2)+(t-k)(t-k+1)  >0  $ since for any $k \in [2:r]$, $(k-1)(k-2)\ge 0, (t-k)(t-k+1)\ge r(r+1)>0$.
Thus, if $g(k) \eqdef (t-2(k-1) )^2 -t \ge 0  $, then (\ref{eq:x(k)-y(k),proof claim 3}) yields $x(k)-y(k)>0$, and consequently  $V(k) >1$. 

Over $k  \in [2:r]$, the term $t-2(k-1)$ is positive and strictly
decreases with $k$, hence $g(k)$ is strictly decreasing. \Inadd, assuming $t\ge 4 (r\ge 2)$, then 
$g(2)=(t-2)^2-t\ge 0$, $g(r)=(t-2(r-1))^2-t=4-t \le 0$. 
\Thf, there exists $k'\in [2:r]$ \suth $g(k)\ge 0,\forall k\le k'$, which implies $V(k)>1,\forall k\le k'$. Using the established equivalence $V(k)>1 \Leftrightarrow \delta_{t,q}(k)<0$, Property 2) follows.

\tit{Property 3):}
Consider the condition that ensures $x(k)-y(k)> 0$ (thus $ \delta_{t,q}(k)<0$) when $k\in [k'+1:r]$ such that $(t-2(k-1))^2<t$. 
This requires
\be
\label{eq:q upper bound}
q < \frac{A}{t-(t-2(k-1))^2    }
\eqdef \varphi_t(k).
\ee 
Claim \ref{claim:4} (see below) shows that
$\varphi_t(k)$ is strictly decreasing in $k$. 
Thus, for any $k \in [2:r-1]$, $q < \varphi_t(k+1)$ ($\Rightarrow \delta_{t,q}(k+1) <0$) implies  $q < \varphi_t(k)$---yielding $ \delta_{t,q}(k) <0$---\bcuz $\varphi_t(k)>\varphi_t(k+1)$. \Thf, Property 3) is proved.

\tit{Property 4):} Properties 1), 2), and 3) together imply Property 4).
Let $ \varphi_t(k) \eqdef A_k/B_k$, where $A_k\eqdef(k-1)(k-2)+(t-k)(t-k+1)$,
$B_k= t-(t-2(k-1))^2,k \in [2:r]$. 
Define the positive-denominator index set as
$\Bc \eqdef \big[ \lfloor \frac{t-\sqrt{t}}{2}  \rfloor+2: \lceil \frac{t+\sqrt{t}}{2} \rceil       \big]$ on which $B_k>0, \forall k\in \Bc$.

\begin{claim}[Monotonicity of $\varphi_t$ on $\Bc$]
\label{claim:4}
\tit{For any $k$ \suth $k,k+1 \in \Bc$ and $k\le r-1$, we have $\varphi_t(k+1)<\varphi_t(k)$.}
\end{claim}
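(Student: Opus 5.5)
The plan is to rewrite $A_k$ and $B_k$ in terms of the single quantity $s_k \eqdef t-2(k-1)$. Once this is done, $\varphi_t(k)$ becomes an explicitly monotone function of $s_k^2$, and the monotonicity in $k$ follows at once.

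\emph{Step 1 (reparametrization).} Put $j=k-1$, so that $s_k=t-2j$. Expanding $A_k=(k-1)(k-2)+(t-k)(t-k+1)$ gives
\begin{align*}
A_k = j(j-1)+(t-j-1)(t-j) = j^2+(t-j)^2-t .
\end{align*}
Using $j^2+(t-j)^2=\tfrac{1}{2}\big(t^2+(t-2j)^2\big)$, this becomes
\begin{align*}
A_k=\tfrac{1}{2}\big(t^2-2t+s_k^2\big).
\end{align*}
By definition, $B_k=t-s_k^2$. Hence
\begin{align*}
\varphi_t(k)=\frac{t^2-2t+s_k^2}{2\,(t-s_k^2)} \eqdef \psi(s_k^2),
\qquad \psi(x)\eqdef\frac{t^2-2t+x}{2(t-x)} .
\end{align*}

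\emph{Step 2 (monotonicity of $\psi$).} On $0\le x<t$, the numerator $t^2-2t+x$ is positive, because $t\ge 4$ in the regime of Claim~\ref{claim:3}. It is also strictly increasing in $x$. The denominator $2(t-x)$ is positive and strictly decreasing in $x$. Therefore $\psi$ is strictly increasing on $[0,t)$. Equivalently, one can compute
\begin{align*}
\psi'(x)=\frac{t^2-t}{2(t-x)^2}>0 .
\end{align*}

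\emph{Step 3 (monotonicity in $k$).} For $k\le r-1$ we have $s_k\ge 4$ and $s_{k+1}=s_k-2\ge 2>0$. So $s_{k+1}^2<s_k^2$. The hypothesis $k,k+1\in\Bc$ means $B_k,B_{k+1}>0$, i.e., both $s_k^2$ and $s_{k+1}^2$ lie in $[0,t)$, where $\psi$ is strictly increasing. Hence $\varphi_t(k+1)=\psi(s_{k+1}^2)<\psi(s_k^2)=\varphi_t(k)$.

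The only step that needs care is the algebraic identity for $A_k$ in Step 1. One must also note that the hypotheses are used exactly twice: positivity of $B_k$ and $B_{k+1}$ (from $\Bc$) places both points in the domain where $\psi$ is increasing, and $k\le r-1$ guarantees $s_{k+1}>0$, so that $s^2$ decreases as $k$ increases. Everything else is routine.
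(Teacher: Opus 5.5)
Your proof is correct, but it is organized differently from the paper's.

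\textbf{The paper's route.} The paper argues by direct cross-multiplication. Since $k,k+1\in\mathcal{B}$ gives $B_kB_{k+1}>0$, it suffices to find the sign of $\beta_k=A_{k+1}B_k-A_kB_{k+1}$. A direct expansion shows this equals $2t(t-1)\bigl(2k-(t+1)\bigr)$, which is negative for $k\le r-1$.

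\textbf{Your route.} You first show that $\varphi_t(k)$ depends on $k$ only through $s_k^2=(t-2(k-1))^2$. Namely, $\varphi_t(k)=\psi(s_k^2)$ with the linear-fractional map $\psi(x)=\frac{t^2-2t+x}{2(t-x)}$. You then combine the monotonicity of $\psi$ on $x<t$ with $s_{k+1}^2<s_k^2$.

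\textbf{How they relate.} The two arguments rest on the same identity. In your notation, $\beta_k=\tfrac12(t^2-t)\,(s_{k+1}^2-s_k^2)=-2t(t-1)(s_k-1)$. So the paper's factor $t(t-1)$ is exactly the positive determinant of your M\"obius map, and the paper's sign condition $2k<t+1$ is your condition $s_k>1$. The hypothesis $k,k+1\in\mathcal{B}$ plays the same role in both. For the paper it makes $B_kB_{k+1}>0$; for you it keeps both points on the same branch of $\psi$, to the left of the pole $x=t$.

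\textbf{What each buys.} The paper's route is a single polynomial check with no reparametrization. Yours explains where the factor $t(t-1)(2k-(t+1))$ comes from. It also shows that $\varphi_t$ is symmetric about $k=r+1$, and that the restriction $k\le r-1$ could be relaxed to $k\le r$.
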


\begin{IEEEproof} 
Consider any $k,k+1\in\Bc$ \soth $B_k,B_{k+1}>0$.
\be
\label{eq:phi(k+1)-phi(k),proof claim 4}
\varphi_t(k+1)-\varphi_t(k)= \frac{A_{k+1}B_k-A_kB_{k+1}}{B_kB_{k+1}}.
\ee 
Since $B_kB_{k+1}>0$, it suffices to determine the sign of  $\beta_k \eqdef A_{k+1}B_k-A_kB_{k+1}$.  A direct expansion gives
\be 
\label{eq:beta,proof claim 4}
\beta_k =  2t(t-1)(2k-(t+1)),
\ee 
which is negative if $k\le r-1< \frac{t+1}{2}=r+1/2$. \Aar, $\varphi_t(k+1)-\varphi_t(k)<0$, completing the proof of Claim \ref{claim:4}. 
\end{IEEEproof}


\bibliographystyle{IEEEtran}
\bibliography{references_newest}
\end{document}